%
\documentclass[11pt,reqno]{amsart} 

\usepackage{times}

\usepackage{amssymb}
\usepackage{amsxtra}
\usepackage{amsmath}

\usepackage{verbatim}
\usepackage{epsfig}
\usepackage{setspace}

\topmargin=0in
\addtolength{\textheight}{1in}
\addtolength{\textwidth}{1in}
\addtolength{\oddsidemargin}{-0.5in}
\addtolength{\evensidemargin}{-0.5in}

\newtheorem{theorem}{Theorem}[section]

\newtheorem{lemma}[theorem]{Lemma}
\newtheorem{proposition}[theorem]{Proposition}
\newtheorem{corollary}[theorem]{Corollary}
\newtheorem{example}{Example}[section]
\newtheorem{conjecture}{Conjecture}[section]


\def\k{\kappa}

\def\sg{\sigma}
\def\D{\Delta}
\def\G{\Gamma}

\def\C{{\mathbb C}}
\def\F{{\mathbb F}}

\def\a{{\mathbf a}}
\def\b{{\mathbf b}}
\def\c{{\mathbf c}}
\def\d{{\mathbf d}}
\def\g{{\mathbf g}}

\def\s{{\mathbf s}}

\def\w{{\mathbf w}}
\def\x{{\mathbf x}}
\def\y{{\mathbf y}}
\def\z{{\mathbf z}}
\def\0{{\mathbf 0}}
\def\1{{\mathbf 1}}
\def\O{{\mathbf O}}
\def\cC{{\mathcal C}}

\def\cG{{\mathcal G}}

\def\cM{{\mathcal M}}
\def\cP{{\mathcal P}}
\def\cQ{{\mathcal Q}}
\def\cS{{\mathcal S}}

\def\ob{\overline{\b}}
\def\tb{\widetilde{\b}}
\def\oF{{\overline{F}}}
\def\oJ{{\overline{J}}}
\def\oT{\overline{T}}

\def\he{{\hat{e}}}

\def\mfB{{\mathfrak B}}
\def\mfC{{\mathfrak C}}

\def\sfS{\mbox{\textsf{S}}}
\def\BW{\mbox{\textsf{BW}}}
\def\TW{\mbox{\textsf{TW}}}

\def\rank{\mbox{\textsf{rank}}}
\def\rref{\mbox{\textsf{rref}}}
\def\ess{\mbox{\textsf{ess}}}
\def\merge{\mbox{\textsf{merge}}}
\def\tr{{\text{trellis}}}
\def\path{{\text{path}}}

\def\bar{\overline}

\def\define{\stackrel{\mbox{\footnotesize def}}{=}}
\def\beq{\begin{equation}}
\def\eeq{\end{equation}}

\def\disj{\stackrel{\cdot}{\cup}}

\title{On Minimal Tree Realizations of Linear Codes}
\thanks{This work was supported by a Discovery Grant from the 
Natural Sciences and Engineering Research Council (NSERC), Canada.}
\author{Navin Kashyap} 

\thanks{The author is with the Department of Mathematics and Statistics,
Queen's University, Kingston, ON K7L 3N6, Canada.
Email: \texttt{nkashyap@mast.queensu.ca}}

\renewcommand{\markboth}[2]
{\renewcommand{\leftmark}{\scshape{#1}}\renewcommand{\rightmark}{\scshape{#2}}}

\markboth%
{N.\ KASHYAP}{ON MINIMAL TREE REALIZATIONS OF LINEAR CODES}

\begin{document}

\renewcommand{\thefootnote}{\arabic{footnote}}
\setcounter{footnote}{0}
\begin{abstract}
A tree decomposition of the coordinates of a code is a mapping 
from the coordinate set to the set of vertices of a tree. A tree 
decomposition can be extended to a tree realization, \emph{i.e.}, 
a cycle-free realization of the code on the underlying tree,
by specifying a state space at each edge of the tree, 
and a local constraint code at each vertex of the tree. 
The constraint complexity of a tree realization is 
the maximum dimension of any of its local constraint codes.
A measure of the complexity of maximum-likelihood decoding for a
code is its treewidth, which is the least constraint complexity 
of any of its tree realizations.

It is known that among all tree realizations of a code 
that extends a given tree decomposition, there exists a unique 
minimal realization that minimizes the state space dimension at 
each vertex of the underlying tree. In this paper, we give two 
new constructions of these minimal realizations. As a by-product
of the first construction, a generalization of the state-merging 
procedure for trellis realizations, we obtain the fact that the 
minimal tree realization also minimizes the local constraint code 
dimension at each vertex of the underlying tree. The second construction 
relies on certain code decomposition techniques that we develop.
We further observe that the treewidth of a code is related to a measure 
of graph complexity, also called treewidth. We exploit this connection
to resolve a conjecture of Forney's regarding the gap between the
minimum trellis constraint complexity and the treewidth of a code. 
We present a family of codes for which this gap can be arbitrarily large.
\end{abstract}

\date{\today}
\maketitle


\section{Introduction\label{intro}}

Graphical models of codes and the decoding algorithms associated
with them are now a major focus area of research in coding theory.
Turbo codes, low-density parity-check (LDPC) codes, and expander codes
are all examples of codes defined, in one way or another, on underlying 
graphs. A unified treatment of graphical models and the associated
decoding algorithms began with the work of Wiberg, Loeliger and Koetter 
\cite{wiberg},\cite{WLK95}, and has since been abstracted and refined 
under the framework of the generalized distributive law \cite{AM00},
factor graphs \cite{KFL01}, and normal realizations 
\cite{For01},\cite{For03}. The particular case of graphical models in which 
the underlying graphs are cycle-free has a long and rich history of its own, 
starting with the study of trellis representations of codes; 
see \emph{e.g.}, \cite{vardy} and the references therein. 

Briefly, a graphical model consists of a graph, an assignment of 
symbol variables to the vertices of the graph, an assignment of 
state variables to the edges of the graph, and a specification 
of local constraint codes at each vertex of the graph.
The full behavior of the model is the set of all configurations
of symbol and state variables that satisfy all the local constraints.
Such a model is called a realization of a code $\cC$ if the
restriction of the full behavior to the set of symbol variables 
is precisely $\cC$. The realization is said to be cycle-free if
the underlying graph in the model has no cycles. A trellis 
representation of a code can be viewed as a cycle-free realization 
in which the underlying graph is a simple path.

A linear code $\cC$ has a realization on a graph $\cG$ 
that is not connected if and only if $\cC$ can be expressed as 
the direct sum of the codes that are individually realized on the connected 
components of $\cG$ \cite{For01}. Thus, there is no loss of generality 
in just focusing, as we do, on the case of realizations on connected graphs.
In this paper, we will be concerned with tree realizations ---
cycle-free realizations in which the underlying cycle-free graph 
is connected, \emph{i.e.}, is a tree. 

It is by now well known that the sum-product algorithm on any 
tree realization provides an exact implementation of 
maximum-likelihood (ML) decoding 
\cite{AM00},\cite{For01},\cite{KFL01},\cite{wiberg}. 
A good initial estimate of the computational complexity of such
an implementation is given by the constraint complexity of the 
realization, which is the maximum dimension of any of the local 
constraint codes in the realization. Now, distinct tree realizations
of the same code have, in general, distinct constraint complexities.
The treewidth of a code is defined to be the least constraint 
complexity of any of its tree realizations. Thus, treewidth 
may be taken to be a measure of the ML decoding complexity of a code.

Since trellis realizations are instances of tree realizations,
the treewidth of a code can be no larger than the minimum 
constraint complexity\footnote{In the context of trellis realizations, 
constraint complexity is usually referred to as ``branch complexity''
or ``edge complexity''. We make it a point to avoid this usage, so 
as not to cause confusion when we define the ``branchwidth'' of a 
code later in our paper.} of any of its trellis realizations. 
In the abstract of his paper \cite{For03}, Forney claimed that 
``the constraint complexity of a general cycle-free graph 
realization can be [strictly] less than that of any
conventional trellis realization, but not by very much.'' While 
he substantiated the first part of his claim by means of an example,
he left the ``not by very much'' part as a conjecture 
\cite[p.\ 1606, Conjecture~2]{For03}. But he also admitted that none
of the arguments he gave in support of his conjecture 
``is very persuasive,'' and that it is equally plausible 
that \cite[Conjecture~3]{For03} there exists no upper bound on the 
gap between the treewidth of a code and the minimum constraint complexity
of any of its trellis realizations.

One of the main contributions of this paper is an example that
affirms the validity of Forney's Conjecture~3.
We present, in Section~\ref{complexity_section}, a family of codes 
for which the difference between the minimum trellis constraint 
complexity and the treewidth grows logarithmically with codelength.
We conjecture that this is in fact the maximal rate of growth of this
difference. Our construction of this example is based upon results 
from the graph theory and matroid theory literatures that connect the
notions of treewidth and trellis complexity of a code to certain
complexity measures defined for graphs.

This paper makes two other contributions, both relating to minimal tree
realizations. A mapping of the set of coordinates of a code $\cC$
to the vertices of a tree is called a tree decomposition. A tree
decomposition may be viewed as an assignment of symbol variables to the
vertices of the tree. It is known that given a code $\cC$, 
among all tree realizations of $\cC$ that extend a given 
tree decomposition, there is one that minimizes the state space 
dimension at each vertex of the underlying tree \cite{For01}. 
This minimal tree realization, an explicit construction of 
which was also given in \cite{For01}, is unique up to isomorphism. 

We give two new constructions of minimal tree realizations. 
The first construction involves a generalization of the idea 
of state merging that can be used to construct minimal trellis 
realizations \cite[Section~4]{vardy}. We show that any tree 
realization of a code can be converted to a minimal realization
by a sequence of state merging transformations. The state space 
and constraint code dimensions do not increase at any step of 
this process. From this, we obtain the fact that a minimal 
realization also minimizes the constraint code dimension at 
each vertex of the underlying tree.

Our second construction of minimal tree realizations uses extensions
of the code decomposition techniques that were presented in 
\cite{kashyap}. The main advantage of this construction is 
its recursive nature, which makes it suitable for mechanical
implementation. Also, it is relatively straightforward to 
estimate the computational complexity of this construction. 
We show that the complexity is polynomial in the length and 
dimension of the code, as well as in the size of the underlying tree, 
but is exponential in the state-complexity of the minimal realization, 
which is the maximum dimension of any state space in the realization.

The paper is organized as follows. In Section~\ref{background_section}, 
we provide the necessary background on tree realizations of linear codes.
The construction of minimal realizations by means of state merging 
is presented in Section~\ref{state_merging_section}. Code decomposition
techniques are developed in Section~\ref{decomp_section}, and used in 
Section~\ref{new_construct_section} to derive a recursive construction
of minimal tree realizations. Proofs of some of the results from 
Sections~\ref{background_section}--\ref{new_construct_section} are 
deferred to appendices to preserve the flow of the exposition.
Treewidth and related complexity measures are defined in 
Section~\ref{complexity_section}, which also establishes 
connections between these code complexity measures and 
certain complexity measures defined for graphs. These connections
are used to derive the example of a code family for which 
the gap between minimum trellis constraint complexity and treewidth is
arbitrarily large. We also touch upon the subject of codes of
bounded complexity, observing that many hard coding-theoretic problems 
become polynomial-time solvable when restricted to code families
whose treewidth is bounded. Section~\ref{conclusion} contains a few
concluding remarks.

\section{Background on Tree Realizations\label{background_section}}

Our treatment of the topic of tree realizations in this section
is based on the exposition of Forney \cite{For01},\cite{For03};
see also \cite{halford}.

We start by establishing some basic notation. We take $\F$ to be 
an arbitrary finite field. Given a finite index set $I$, we 
have the vector space $\F^I = \{\x = (x_i \in \F,\ i \in I)\}$.
For $\x \in \F^I$ and $J \subseteq I$, the notation ${\x|}_J$ will
denote the \emph{projection} $(x_i,\ i \in J)$. Also, for $J \subseteq I$, 
we will find it convenient to reserve the use of $\bar{J}$ 
to denote the set difference $I - J = \{i \in I:\ i \notin J\}$.

\subsection{Codes\label{code_section}}
A \emph{linear code} over $\F$, defined on the index set $I$, is a 
subspace $\cC \subseteq \F^I$. We will only consider linear codes
in this paper, so the terms ``code'' and ``linear code'' will be used
interchangeably. The dimension, over $\F$, of $\cC$ will be denoted 
by $\dim(\cC)$. An $[n,k]$ code is a code of length $n$ and dimension $k$. 
If, additionally, the code has minimum distance $d$,
then the code is an $[n,k,d]$ code. The dual code of $\cC$ is denoted by
$\cC^\perp$, and is defined on the same index set as $\cC$.

Let $J$ be a subset of the index set $I$. The \emph{projection} of $\cC$ 
onto $J$ is the code $\cC|_J = \{{\c|}_J:\ \c \in \cC\}$, which is a 
subspace of $\F^J$. We will use $\cC_J$ to denote the \emph{cross-section}
of $\cC$ consisting of all projections $\c|_J$ of codewords $\c \in \cC$ 
that satisfy ${\c|}_{\bar{J}} = \0$. To be precise, 
$\cC_J = \{\c|_J:\ \c \in \cC, {\c|}_{\bar{J}} = \0\}$. Note that
$\cC_J \subseteq \cC|_J$. Also, since $\cC_J$ is isomorphic to
the kernel of the projection map $\pi: \cC \rightarrow {\cC|}_\oJ$
defined by $\pi(\c) = {\c|}_\oJ$, we have that 
$\dim(\cC_J) = \dim(\cC) - \dim({\cC|}_\oJ)$.
Furthermore, projections and cross-sections
are dual notions, in the sense that $({\cC|}_J)^\perp = (\cC^\perp)_J$,
and similarly, $(\cC_J)^\perp = {(\cC^\perp)|}_J$. 

If $\cC_1$ and $\cC_2$ are codes over $\F$ defined on mutually disjoint 
index sets $I_1$ and $I_2$, respectively, then their \emph{direct sum} is the
code $\cC = \cC_1 \oplus \cC_2$ defined on the index set $I_1 \cup I_2$,
such that $\cC_{I_1} = {\cC|}_{I_1} = \cC_1$ and 
$\cC_{I_2} = {\cC|}_{I_2} = \cC_2$. This definition naturally extends
to multiple codes (or subspaces) $\cC_\alpha$, where $\alpha$ is a
code identifier that takes values in some set $A$. Again, it must be assumed
that the codes $\cC_\alpha$ are defined on mutually disjoint index sets
$I_\alpha,\ \alpha \in A$. The direct sum in this situation
is denoted by $\bigoplus_{\alpha \in A} \cC_\alpha$. 

\subsection{Trees\label{tree_section}}

A tree is a connected graph without cycles. Given a tree $T$, 
we will denote its vertex and edge sets by $V(T)$ and $E(T)$, 
respectively, or simply by $V$ and $E$ if there is no ambiguity. 
Vertices of degree one are called \emph{leaves}, 
and all other vertices are called \emph{internal nodes}.
Given a $v \in V$, the set of edges incident with $v$ will be denoted
by $E(v)$.

Removal of an arbitrary edge $e$ from $T$ produces a disconnected
graph $T-e$, which is the disjoint union of two subtrees,
which we will denote by $T_e$ and $\bar{T}_e$, of $T$. Note that
$V(T_e)$ and $V(\bar{T}_e)$ form a partition of $V(T)$.

\subsection{Tree Realizations\label{realize_section}}

Let $\cC$ be a code over $\F$, defined on the index set $I$.
To each $i \in I$, we associate a \emph{symbol variable} $X_i$, 
which is allowed to take values in $\F$. 

\begin{figure}
\epsfig{file=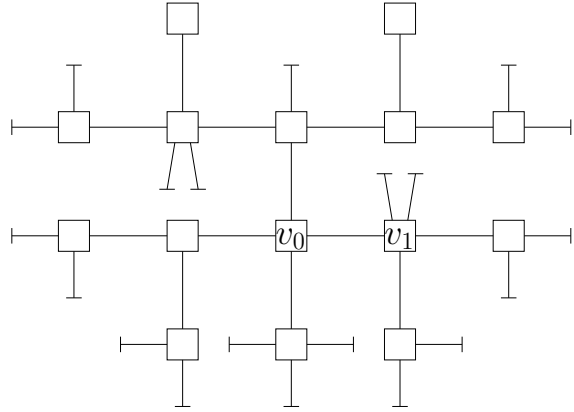, width=7.5cm}
\caption{A depiction of a tree decomposition $(T,\omega)$. Vertices
of the tree $T$ are represented by squares. Edges are incident 
with two vertices, while half-edges are incident with only one vertex. 
The vertex $v_0$ has no half-edges incident with it, indicating that 
$\omega^{-1}(v_0) = \emptyset$, while the vertex $v_1$ has two 
half-edges incident with it, which means that $|\omega^{-1}(v_1)| = 2$.}
\label{tree_decomp}
\end{figure}

A \emph{tree decomposition} of $I$ is a pair $(T,\omega)$, where 
$T$ is a tree (\emph{i.e.}, a connected, cycle-free graph) and 
$\omega: I \rightarrow V$ is a mapping from $I$ to the vertex 
set of $T$. Pictorially, a tree decomposition $(T,\omega)$ is depicted as a 
tree with an additional feature: at each vertex $v$ such that 
$\omega^{-1}(v)$ is non-empty, we attach special ``half-edges'', one for 
each index in $\omega^{-1}(v)$; see Figure~\ref{tree_decomp}.

At this point, we introduce some notation that we will consistently
use in the rest of the paper. Given a tree decomposition $(T,\omega)$
of an index set $I$, and an edge $e \in E$, we define
$J(e) = \omega^{-1}(V(T_e))$ and $\bar{J}(e) = \omega^{-1}(V(\bar{T}_e))$.
Thus, $J(e)$ and $\bar{J}(e)$ are the subsets of $I$
that get mapped by $\omega$ to vertices in $T_e$ and $\bar{T}_e$, 
respectively. Clearly, $J(e)$ and $\bar{J}(e)$ form a partition of $I$.

Recall that $E(v)$, $v \in V$, denotes the set of edges incident with $v$
in $T$. Consider a tuple of the form 
$(T,\omega,(\cS_e,\ e \in E), (C_v,\ v \in V))$, where
\begin{itemize}
\item $(T,\omega)$ is a tree decomposition of $I$;
\item for each $e \in E$, $\cS_e$ is a vector space over $\F$ called
a \emph{state space};
\item for each $v \in V$, $C_v$ is a subspace of 
$\F^{\omega^{-1}(v)} \oplus\, \left(\bigoplus_{e \in E(v)} \cS_e\right)$,
called a \emph{local constraint code}, or simply, a \emph{local constraint}.
\end{itemize}
Such a tuple will be called a \emph{tree model}. The elements of 
any state space $\cS_e$ are called \emph{states}. The index sets of the
state spaces $\cS_e$, $e \in E$, are taken to be mutually disjoint,
and are also taken to be disjoint from the index set $I$ corresponding to
the symbol variables. Finally, to each $e \in E$, we associate a 
\emph{state variable} $S_e$ that takes values in the corresponding 
state space $\cS_e$.

A \emph{global configuration} of a tree model as above is
an assignment of values to each of the symbol and state variables.
In other words, it is a vector of the form 
$((x_i \in \F,\ i \in I), (\s_e \in \cS_e,\ e \in E))$.
A global configuration is said to be \emph{valid} if it satisfies all
the local constraints. Thus, 
$((x_i \in \F,\ i \in I), (\s_e \in \cS_e,\ e \in E))$ is a valid
global configuration if for each $v \in V$, $((x_i,\ i \in \omega^{-1}(v)),
(\s_e,\ e \in E(v))) \in C_v$. The set of all valid global configurations
of a tree model is called the \emph{full behavior} of the model.

Note that the full behavior is a subspace 
$\mfB \subseteq \F^I \oplus\, \left(\bigoplus_{e \in E} \cS_e\right)$.
As usual, ${\mfB|}_I$ denotes the projection of $\mfB$ onto the
index set $I$. If ${\mfB|}_I = \cC$, then the model 
$(T,\omega,(\cS_e,\ e \in E), (C_v,\ v \in V))$ is called
a \emph{(linear) tree realization} of $\cC$. A tree realization 
$(T,\omega,(\cS_e,\ e \in E), (C_v,\ v \in V))$ of $\cC$
is said to \emph{extend} (or be an extension of) the tree decomposition 
$(T,\omega)$ of the index set of $\cC$. Any tree decomposition of the index
set of a code can always be extended to a tree realization of 
the code, as explained in the following example.

\begin{example} 
Let $\cC$ be a code defined on index set $I$, and let $(T,\omega)$ be
a tree decomposition of $I$. Pick an arbitrary $v \in V$, 
and define $C_v = \cC$. Now, consider the set, $E(v)$, of edges 
incident with $v$. Removal of any $e \in E(v)$ produces the
two subtrees $T_e$ and $\bar{T}_e$. We specify $T_e$ to be
the subtree that does \emph{not} contain the vertex $v$, and
as usual, $J(e) = \omega^{-1}(V(T_e))$. For each
$e \in E(v)$, the state space $\cS_e$ is taken to be a copy of $\F^{J(e)}$. 
The remaining state spaces and local constraints are chosen so that, for each
$e \in E(v)$, the symbol variables indexed by $J(e)$ simply get 
relayed (unchanged) to the state variable $S_e$; see 
Figure~\ref{trivial_ext}. It should be clear that the resulting 
tree model is a tree realization of the code $\cC$.
This will be called a \emph{trivial extension} of $(T,\omega)$. 
We will present constructions of non-trivial extensions of 
tree decompositions a little later.
\begin{figure}
\epsfig{file=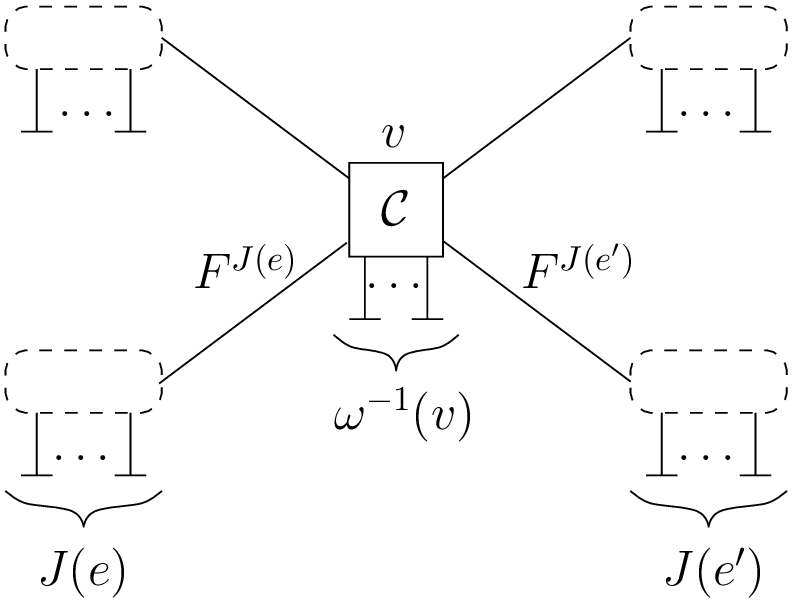, width=6cm}
\caption{A trivial extension of a tree decomposition 
$(T,\omega)$ of the index set of a code $\cC$. At the vertex $v$,
we have $C_v = \cC$. The state variables at the edges $e \in E(v)$ 
are copies of the symbol variables indexed by $J(e)$. Dashed ovals 
represent subtrees.}
\label{trivial_ext}
\end{figure}
\label{trivial_example}
\end{example}

\begin{figure}[h]
\epsfig{file=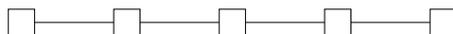, width=6cm}
\caption{A simple path on five vertices.}
\label{path_fig}
\end{figure}
\begin{example}
A \emph{simple path} is a tree with exactly two leaves (the end-points
of the path), in which all internal nodes have degree two; see 
Figure~\ref{path_fig}. 
Let $\cC$ be a code defined on index set $I$, and let $(T,\omega)$ be
a tree decomposition of $I$, in which $T$ is a simple path, and $\omega$
is a surjective map $\omega:I \rightarrow V(T)$. Any tree realization of 
$\cC$ that extends $(T,\omega)$ is called a \emph{trellis realization} of 
$\cC$. When $\omega$ is a bijection, then any trellis realization extending
$(T,\omega)$ is called a \emph{conventional} trellis realization. When
$\omega$ is not a bijection (but still a surjection), a trellis
realization that extends $(T,\omega)$ is called a \emph{sectionalized}
trellis realization. In trellis terminology, the local constraint codes
in a trellis realization are called \emph{branch spaces}. The theory
of trellis realizations is well established; we refer the reader to
\cite{vardy} for an excellent survey of this theory.
\label{trellis_example}
\end{example}

Let $\mfB$ be the full behavior of a tree model 
$(T,\omega,(\cS_e,\ e \in E), (C_v,\ v \in V))$.
We will find it useful to define certain projections of $\mfB$, 
other than $\mfB|_J$ for $J \subseteq I$.
Let $\b = ((x_i,\, i \in I),\ (\s_e,\, e \in E))$
be a global configuration in $\mfB$. At any given $v \in V$,
the \emph{local configuration} of $\b$ at $v$ is defined as
$$
{\b|}_v = ((x_i,\, i \in \omega^{-1}(v)),\ (\s_e,\, e \in E(v))).
$$
The set of all local configurations of $\mfB$ at $v$ is then defined as
${\mfB|}_v = \{{\b|}_v:\, \b \in \mfB\}$. By definition, 
${\mfB|}_v \subseteq C_v$. Similarly, for $F \subseteq E$, and $\b$
as above, we define the projections ${\b |}_F =(\s_e,\, e \in F)$ and 
${\mfB |}_F =\{{\b|}_F:\, \b \in \mfB\}$. Clearly, ${\mfB|}_F$ is a subspace
of $\bigoplus_{e \in F}\cS_e$. If $F$ consists of a single edge $e$, 
then we simply denote the corresponding projections by ${\b |}_e$
and ${\mfB |}_e$. The following elementary property of the 
projections ${\b|}_e$ will be useful later; a proof for it 
is given in Appendix~\ref{sec2_lemmas_app}.

\begin{lemma}
Let $\mfB$ be the full behavior of some tree realization of a code $\cC$,
defined on the index set $I$, that extends the tree decomposition 
$(T,\omega)$. Suppose that $\b \in \mfB$ and $e \in E$ are such that 
${\b|}_e = \0$. Then, ${\b|}_I \in \cC_{J(e)} \oplus \cC_{\bar{J}(e)}$.
\label{b|e_lemma}
\end{lemma}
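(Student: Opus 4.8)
The plan is to exploit the fact that deleting a single edge $e$ from the tree $T$ splits it into the two subtrees $T_e$ and $\bar T_e$, inducing a corresponding splitting of the symbol index set into $J(e)$ and $\bar J(e)$. The hypothesis $\b|_e = \0$ should mean that the two "halves" of the realization decouple across the edge $e$, so that the codeword $\b|_I$ is forced to be the concatenation of a codeword supported on $J(e)$ and a codeword supported on $\bar J(e)$. Concretely, I would split the vertex set as $V = V(T_e)\disj V(\bar T_e)$ and the edge set as $E = E(T_e)\disj E(\bar T_e)\disj\{e\}$, and write $\b = ((x_i,\,i\in I),(\s_f,\,f\in E))$. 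I would then form two auxiliary configurations: $\b'$ which agrees with $\b$ on all symbol variables indexed by $J(e)$ and all state variables on edges in $E(T_e)$, but is set to $\0$ on everything associated with $\bar T_e$ (including the edge $e$); and $\b''$ defined symmetrically with the roles of $T_e$ and $\bar T_e$ interchanged.

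The key step is to verify that $\b'$ and $\b''$ are themselves valid global configurations, i.e., that each satisfies every local constraint $C_v$, $v\in V$. For a vertex $v\in V(T_e)$, the local configuration $\b'|_v$ involves only symbol variables in $\omega^{-1}(v)\subseteq J(e)$ and state variables on edges in $E(v)$; every such edge lies in $E(T_e)$ except possibly the edge $e$ itself (if $v$ is the endpoint of $e$ inside $T_e$), and on that edge $\b'$ carries the value $\0$, exactly as $\b$ does by hypothesis. Hence $\b'|_v = \b|_v \in C_v$. For a vertex $v\in V(\bar T_e)$, the local configuration $\b'|_v$ is the all-zero vector, which lies in the subspace $C_v$ since every local constraint code is a subspace of its ambient space. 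Thus $\b'\in\mfB$, and symmetrically $\b''\in\mfB$. Since the full behavior is a subspace and $\b = \b' + \b''$, we also get $\b = \b' + \b'' \in \mfB$ trivially (this addition is really just the decomposition statement).

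Now project onto $I$: we have $\b'|_I\in\cC$ with $(\b'|_I)|_{\bar J(e)} = \0$, so by the definition of the cross-section, $\b'|_I\in\cC_{J(e)}$. Likewise $\b''|_I\in\cC_{\bar J(e)}$. Therefore $\b|_I = \b'|_I + \b''|_I \in \cC_{J(e)} \oplus \cC_{\bar J(e)}$ (the sum being direct because the two cross-sections are supported on disjoint coordinate sets $J(e)$ and $\bar J(e)$), which is the claim. The only point requiring any care — and the main obstacle, such as it is — is the bookkeeping in the verification that $\b'$ is a valid configuration: one must be precise about the fact that the unique edge straddling the partition $V(T_e)\disj V(\bar T_e)$ is exactly $e$, so that at every vertex other than the two endpoints of $e$ the local configuration of $\b'$ is either entirely inherited from $\b$ or entirely zero, and at the two endpoints of $e$ the "crossover" state variable $\s_e$ is zero and hence causes no conflict.
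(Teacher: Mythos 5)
Your proposal is correct and follows essentially the same argument as the paper's own proof: both split $\b$ into the two configurations $\b'$ and $\b''$ supported on $T_e$ and $\bar{T}_e$ respectively (with zeros elsewhere), verify these satisfy all local constraints because $\b|_e = \0$ and $\0 \in C_v$, and then project onto $I$ to conclude $\b|_{J(e)} \in \cC_{J(e)}$ and $\b|_{\bar{J}(e)} \in \cC_{\bar{J}(e)}$. No gaps; your careful verification at the endpoints of $e$ is exactly the point the paper's proof relies on.
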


A tree model (or realization) $(T,\omega,\,(\cS_e,\, e \in E),\, 
(C_v,\, v \in V))$, with full behavior $\mfB$, is said to be 
\emph{essential} if ${\mfB|}_e = \cS_e$ for all $e \in E$.
This definition actually implies something more. 

\begin{lemma}
If the tree model $(T,\omega,\,(\cS_e,\, e \in E),\, (C_v,\, v \in V))$, 
with full behavior $\mfB$, is essential, 
then ${\mfB|}_v = C_v\,$ for all $v \in V$. 
\label{ess_lemma}
\end{lemma}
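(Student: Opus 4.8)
The plan is to show that for each fixed vertex $v$, any element $\gamma \in C_v$ can be completed to a valid global configuration $\b \in \mfB$ whose local configuration at $v$ is exactly $\gamma$. Since $\mfB|_v \subseteq C_v$ always holds by definition, this suffices.

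**Setting up the extension.** Write $\gamma = ((x_i,\ i\in\omega^{-1}(v)),\ (\s_e^{(0)},\ e\in E(v)))$, an element of $C_v$. I must assign symbol values to the remaining indices and state values to all remaining edges so that every local constraint is satisfied. The natural approach is to work outward from $v$: root the tree $T$ at $v$, and for each edge $e\in E(v)$, the component $\s_e^{(0)}$ already prescribed must be matched by a configuration living on the subtree $T_e$ hanging off $e$. Concretely, I would prove by induction on the size of a subtree the following claim: for every edge $e$, if $\s_e\in\cS_e$ lies in the appropriate projection of $\mfB$ restricted to $T_e$ (i.e. is ``reachable''), then it extends to a valid partial configuration on $T_e$. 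The essential hypothesis $\mfB|_e = \cS_e$ is precisely what guarantees that the prescribed $\s_e^{(0)}$ is indeed reachable, so there is no obstruction at the first step.

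**The inductive core.** For the induction, consider an edge $e$ incident to $v$ and let $w$ be the endpoint of $e$ in $T_e$; let $e_1,\dots,e_k$ be the other edges incident to $w$ (pointing further away from $v$). Because the realization is essential, $\mfB|_{\{e,e_1,\dots,e_k\}}$ must, I expect, equal the full behavior of the local constraint $C_w$ projected onto its state coordinates — but more carefully, I need that any valid choice at $w$ compatible with $\s_e^{(0)}$ exists. Here is where the argument needs care: knowing $\mfB|_e = \cS_e$ edge-by-edge does not immediately give a configuration of $C_w$ with the prescribed value on the $e$-coordinate. I would get this from the fact that $\s_e^{(0)} \in \mfB|_e$ means there is some global $\b\in\mfB$ with $\b|_e = \s_e^{(0)}$; restricting $\b$ to $w$ gives an element of $C_w$ (indeed of $\mfB|_w$) with the right $e$-component and with state components on $e_1,\dots,e_k$ that are themselves reachable from their respective subtrees. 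Then I apply the induction hypothesis to each $e_j$ to realize those values on $T_{e_j}$, and splice everything together. The splicing works because the subtrees $T_{e_j}$ share no edges or vertices, so the local constraints decouple.

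**The main obstacle and how I would handle it.** The delicate point is the gluing/consistency argument: I am assembling a global configuration from pieces defined on overlapping-at-a-vertex subtrees, and I must verify that the constraint at each boundary vertex $w$ is simultaneously satisfied by all the incident pieces. The clean way to sidestep a messy direct construction is to phrase everything in terms of ``reachable'' states and prove one careful lemma: if $\s_e\in\mfB|_e$ then there exists $\b'\in\mfB$ agreeing with any prescribed reachable values on a set of edges forming an ``antichain'' separating pieces of the tree. Actually, the slickest route may be to avoid induction on subtree realizations and instead argue as follows: given $\gamma\in C_v$, for each $e\in E(v)$ pick (using $\mfB|_e=\cS_e$) some $\b^{(e)}\in\mfB$ with $\b^{(e)}|_e = \s_e^{(0)}$, and form a candidate $\b$ by taking, on each subtree $T_e$, the restriction of $\b^{(e)}$, and at $v$ itself taking $\gamma$. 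The local constraints at vertices strictly inside some $T_e$ hold because $\b^{(e)}$ is valid; the constraint at $v$ holds because $\b|_v = \gamma\in C_v$ by construction. The only subtlety is matching the state value on $e$ from the $v$-side and the $T_e$-side, which agrees by design. Hence $\b\in\mfB$ and $\b|_v=\gamma$, completing the proof. I expect the write-up to be short once the bookkeeping of ``which side of which edge'' is fixed; the only real work is being scrupulous that the edge and vertex sets of the subtrees $T_e$, $e\in E(v)$, together with $v$, partition $T$.
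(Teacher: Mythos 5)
Your final ``slickest route'' is exactly the paper's proof: for each $e\in E(v)$, use essentiality to pick $\b^{(e)}\in\mfB$ with $\b^{(e)}|_e=\s_e^{(0)}$, restrict it to the subtree $T_e$, and splice these pieces together with $\gamma$ at $v$; validity at internal vertices comes from each $\b^{(e)}$, and validity at $v$ holds since $\gamma\in C_v$. The preliminary inductive machinery you sketch is unnecessary, but the argument you settle on is correct and essentially identical to the paper's.
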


A proof of the lemma can be found in Appendix~\ref{sec2_lemmas_app}. \\[-4pt]

An arbitrary tree model can always be ``essentialized''. To see this, let 
$\G = (T,\omega,\,(\cS_e,\, e \in E),\, (C_v,\, v \in V))$ be a 
tree model with full behavior $\mfB$. Recall that ${\mfB|}_e$ is
a subspace of $\cS_e$, and ${\mfB|}_v$ is a subspace of $C_v$.
Define the \emph{essentialization} of $\G$ to be the tree model
$\ess(\G) = (T,\omega,\,({\mfB|}_e,\, e \in E),\, ({\mfB|}_v,\, v \in V))$. 
It is readily verified that $\ess(\G)$ has the same full behavior as $\G$. 
\\[-6pt]


\subsection{Minimal Tree Realizations\label{minimal_section}}

Given a code $\cC$ and a tree decomposition $(T,\omega)$ of its index set 
$I$, there exists an essential tree realization, 
$(T,\omega,(\cS_e^*,\ e \in E), (C_v^*,\ v \in V))$, of $\cC$ 
with the following property \cite{For01},\cite{For03}: 
\begin{quote}
if $(T,\omega,(\cS_e,\ e \in E), (C_v,\ v \in V))$ is
a tree realization of $\cC$ that extends $(T,\omega)$, then 
for all $e \in E$, $\dim(\cS_e^*) \leq \dim(\cS_e)$.
\end{quote}
This \emph{minimal} tree realization, which we henceforth denote by 
$\cM(\cC; T,\omega)$, is unique up to isomorphism. More precisely, if 
$(T,\omega,(\cS_e^{**},\ e \in E), (C_v^{**},\ v \in V))$ is also 
a tree realization of $\cC$ with the above property
(except that $\cS_e^*$ is replaced by $\cS_e^{**}$), then 
$\cS_e^* \cong \cS_e^{**}$ for each $e \in E$, and $C_v^* \cong C_v^{**}$
for each $v \in V$. We will not distinguish between isomorphic tree
realizations.

We outline a construction, due to Forney \cite{For03}, of $\cM(\cC; T,\omega)$.
For any edge $e \in E$, the sets $J(e)$ and $\bar{J}(e)$ form a partition
of the index set $I$. Set 
\beq
\cS_e^* = \cC / (\cC_{J(e)} \oplus \cC_{\bar{J}(e)}),
\label{Se*_def}
\eeq
and let 
\beq
\s_e^*:\ \cC \rightarrow \cC / (\cC_{J(e)} \oplus \cC_{\bar{J}(e)})
\label{se*_def}
\eeq
be the canonical projection map. In other words, for $\c \in \cC$,
$\s_e^*(\c)$ is the coset $\c + (\cC_{J(e)} \oplus \cC_{\bar{J}(e)})$.

Now, let $\mfB$ be the vector space consisting of all global configurations
$(\c,\s^*(\c))$ corresponding to codewords $\c \in \cC$, where 
$\s^*(\c) = (\s_e^*(\c),\ e \in E)$. It is worth noting that 
${\mfB|}_I = \cC$, and furthermore, $\mfB \cong \cC$, since $\c = \0$ 
implies that $\s^*(\c) = \0$. 

We can now define for each $v \in V$, the local constraint
\beq
C_v^* = {\mfB |}_v 
=  \left\{\left({\c|}_{\omega^{-1}(v)}\, ,\ (\s_e^*(\c),\, e\in E(v))\right) \ : \ \c \in \cC\right\}.
\label{Cv*_def}
\eeq
The minimal realization $\cM(\cC; T,\omega)$ is the tuple
$(T,\omega,(\cS_e^*,\ e \in E), (C_v^*,\ v \in V))$. It may be verified
that $\mfB$ is the full behavior of $\cM(\cC;T,\omega)$, so that
$\cM(\cC;T,\omega)$ is indeed an essential tree realization of $\cC$.

From the definition of $\cS_e^*$ in (\ref{Se*_def}), it is clear that 
for each $e \in E$,
\beq
\dim(\cS_e^*) = \dim(\cC) - \dim(\cC_{J(e)}) - \dim(\cC_{\bar{J}(e)}).
\label{dimSe*}
\eeq
It is useful to point out that $\dim(\cS_e^*)$ may also be expressed
as 
\beq
\dim(\cS_e^*) = \dim({\cC|}_{J(e)}) + \dim({\cC|}_{\bar{J}(e)}) - \dim(\cC),
\label{dimSe*_alt}
\eeq
a consequence of the fact that for any $J \subseteq I$, 
$\dim(\cC_J) = \dim(\cC) - \dim({\cC|}_\oJ)$. 
Thus, by the uniqueness of minimal tree realizations,
if $\G^{**} = (T,\omega,(\cS_e^{**},\ e \in E), (C_v^{**},\ v \in V))$ is 
a tree realization of $\cC$ with the property that for all $e \in E$,
$\dim(\cS_e^{**})$ equals one of the expressions in 
(\ref{dimSe*}) or (\ref{dimSe*_alt}), then $\G^{**}$ is in fact 
$\cM(\cC;T,\omega)$.

Forney \cite{For03} also derived an expression for the dimension 
of the local constraints $C_v^*$. Consider any $v \in V$.
For each $e \in E(v)$, we specify $T_e$ to be the component of $T-e$
that does \emph{not} contain $v$. As usual, $J(e) = \omega^{-1}(V(T_e))$. 
Then \cite[Theorem 1]{For03},
\beq
\dim(C_v^*) = \dim(\cC) - \sum_{e \in E(v)} \dim(\cC_{J(e)}).
\label{dimCv*}
\eeq
Forney gave the following bound for $\dim(C_v^*)$ \cite[Theorem~5]{For03}: 
for any $e \in E(v)$, 
$\dim(\cS_e^*) \leq \dim(C_v^*) \leq n(C_v^*) - \dim(\cS_e^*)$, where
$n(C_v^*)$ denotes the length of the code $C_v^*$. The upper bound can be
improved slightly.

\begin{lemma}
In the minimal tree realization $\cM(\cC;T,\omega)$, we have,
for $v \in V$ and $e \in E(v)$, 
$$
\dim(\cS_e^*) \leq \dim(C_v^*) 
\leq \dim({\cC|}_{\omega^{-1}(v)}) + 
\sum_{e' \in E(v)- \{e\}} \dim(\cS_{e'}^*).
$$
\label{dimCv*_bnd}
\end{lemma}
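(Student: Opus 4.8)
The lower bound $\dim(\cS_e^*) \leq \dim(C_v^*)$ is already established in \cite[Theorem~5]{For03}, so the plan is to focus on the upper bound. I would start from the explicit description of $C_v^*$ in (\ref{Cv*_def}) as the image of the linear map $\phi_v: \cC \to \F^{\omega^{-1}(v)} \oplus \bigl(\bigoplus_{e \in E(v)} \cS_e^*\bigr)$ given by $\phi_v(\c) = \bigl({\c|}_{\omega^{-1}(v)}, (\s_{e'}^*(\c),\, e' \in E(v))\bigr)$. Fix the distinguished edge $e \in E(v)$, and split the codomain as $\bigl(\F^{\omega^{-1}(v)} \oplus \bigoplus_{e' \in E(v) - \{e\}} \cS_{e'}^*\bigr) \oplus \cS_e^*$. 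Let $\psi: \cC \to \F^{\omega^{-1}(v)} \oplus \bigoplus_{e' \in E(v) - \{e\}} \cS_{e'}^*$ be $\phi_v$ followed by projection onto the first summand. Then by rank--nullity applied to $\psi$,
\[
\dim(C_v^*) = \dim(\mathrm{im}\,\phi_v) \leq \dim(\mathrm{im}\,\psi) + \dim(\ker\psi \cap \text{(contribution to }\cS_e^*)),
\]
and more carefully, $\dim(C_v^*) \leq \dim(\mathrm{im}\,\psi) + \dim(\phi_v(\ker\psi))$, where $\phi_v(\ker\psi) \subseteq \cS_e^*$ so $\dim(\phi_v(\ker\psi)) \leq \dim(\cS_e^*)$. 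The trivial estimate $\dim(\mathrm{im}\,\psi) \leq \dim({\cC|}_{\omega^{-1}(v)}) + \sum_{e' \in E(v) - \{e\}} \dim(\cS_{e'}^*)$ then already gives $\dim(C_v^*) \leq \dim({\cC|}_{\omega^{-1}(v)}) + \sum_{e' \in E(v) - \{e\}}\dim(\cS_{e'}^*) + \dim(\cS_e^*)$, which is weaker than what we want by exactly $\dim(\cS_e^*)$. So the real content is to show that this last $\dim(\cS_e^*)$ term can be absorbed.

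The key claim I would aim for is: $\c \in \ker\psi$ implies $\s_e^*(\c) = \0$, i.e. $\phi_v(\ker\psi) = \0$. If $\c \in \ker\psi$, then ${\c|}_{\omega^{-1}(v)} = \0$ and $\s_{e'}^*(\c) = \0$ for every $e' \in E(v) - \{e\}$. Recall that $\s_{e'}^*(\c) = \0$ means $\c \in \cC_{J(e')} \oplus \cC_{\bar{J}(e')}$. Now consider the tree $T$ with $v$ as a reference vertex: deleting $v$ breaks $T$ into components, one through each $e' \in E(v)$, and for $e' \neq e$ the piece $T_{e'}$ (the side not containing $v$) carries index set $J(e')$. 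The condition $\s_{e'}^*(\c) = \0$ for all such $e'$, combined with ${\c|}_{\omega^{-1}(v)} = \0$, should force, via Lemma~\ref{b|e_lemma} applied in the minimal realization (where ${\b|}_{e'} = \0 \iff \s_{e'}^*(\c) = \0$) and an induction on the subtrees hanging off $v$ other than through $e$, that ${\c|}_{\bar{J}(e)} = \0$ — that is, $\c$ is supported entirely on $J(e)$. But $\c$ supported on $J(e)$ means $\c \in \cC_{J(e)} \subseteq \cC_{J(e)} \oplus \cC_{\bar{J}(e)}$, hence $\s_e^*(\c) = \0$. This yields the claim, and then $\dim(C_v^*) = \dim(\mathrm{im}\,\psi) \leq \dim({\cC|}_{\omega^{-1}(v)}) + \sum_{e' \in E(v)-\{e\}}\dim(\cS_{e'}^*)$ as desired.

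The main obstacle is making the inductive argument in the previous paragraph rigorous: I need that if $\c|_{\omega^{-1}(v)} = \0$ and $\c$ lies in $\cC_{J(e')} \oplus \cC_{\bar{J}(e')}$ for every $e' \in E(v) - \{e\}$, then $\c$ has no support on any coordinate mapped into the subtrees $T_{e'}$, $e' \neq e$. One clean way is to decompose the subtree on the $\bar{T}_e$ side further: walk away from $v$ along each branch $e'$, peel off the component using $\cC_{J(e')} \oplus \cC_{\bar{J}(e')}$ to conclude $\c|_{J(e')}$ is itself a codeword of the projection that vanishes off $J(e')$, then recurse into $T_{e'}$ using the essential realization structure. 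Alternatively, and perhaps more transparently, one can argue directly with the full behavior $\mfB$ of $\cM(\cC;T,\omega)$: take the configuration $\b = (\c, \s^*(\c))$, use ${\b|}_{e'} = \0$ for each $e' \in E(v)-\{e\}$ together with ${\c|}_{\omega^{-1}(v)} = \0$ to kill the local configuration ${\b|}_v$ except possibly on $S_e$, and then observe that a configuration with ${\b|}_v = (\0, \dots, \0, \s_e)$ must have $\s_e = \0$ because the projection of $\mfB$ onto $S_e$ is detected through the side $\bar{T}_e$, on which $\b$ now vanishes. I would expect the cleanest writeup to go through Lemma~\ref{b|e_lemma} and a short induction, and I would relegate that induction to an appendix if it becomes lengthy, in keeping with the paper's stated practice of deferring routine proofs.
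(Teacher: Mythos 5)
Your overall strategy is sound and genuinely different from the paper's: the upper bound does reduce to the kernel-containment claim that if $\c \in \cC$ satisfies ${\c|}_{\omega^{-1}(v)} = \0$ and $\s_{e'}^*(\c) = \0$ for all $e' \in E(v) - \{e\}$, then also $\s_e^*(\c) = \0$, and that claim is true. However, the route you propose for proving it would fail: you assert that the hypotheses force ${\c|}_{\bar{J}(e)} = \0$, i.e.\ that $\c$ is supported entirely on $J(e)$, and this is false in general. Any nonzero codeword supported inside a single set $J(e')$ with $e' \in E(v) - \{e\}$ satisfies all the hypotheses (its image under every map $\s_{e''}^*$ is $\0$), yet its support lies inside $\bar{J}(e)$. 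The same misconception undermines your alternative full-behavior argument: after forcing ${\b|}_{e'} = \0$ for $e' \in E(v)-\{e\}$ and ${\b|}_{\omega^{-1}(v)} = \0$, the configuration $\b$ need not vanish on the $\bar{T}_e$ side (symbol values on the sets $J(e')$ and states on edges interior to the subtrees $T_{e'}$ can be nonzero), so you cannot conclude $\s_e = \0$ on the grounds that $\b$ vanishes there. Fortunately, no induction and no appeal to Lemma~\ref{b|e_lemma} are needed; the claim has a one-line proof. For each $e' \in E(v)-\{e\}$, the hypothesis $\c \in \cC_{J(e')} \oplus \cC_{\bar{J}(e')}$ yields a codeword $\c_{e'} \in \cC$ supported on $J(e')$ that agrees with $\c$ on $J(e')$. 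Since $\bar{J}(e)$ is the disjoint union of $\omega^{-1}(v)$ and the sets $J(e')$, $e' \in E(v)-\{e\}$, and ${\c|}_{\omega^{-1}(v)} = \0$, the codeword $\g = \sum_{e' \in E(v)-\{e\}} \c_{e'}$ agrees with $\c$ on $\bar{J}(e)$ and vanishes on $J(e)$; hence $\g \in \cC_{\bar{J}(e)}$, $\c - \g \in \cC_{J(e)}$, and so $\c \in \cC_{J(e)} \oplus \cC_{\bar{J}(e)}$, i.e.\ $\s_e^*(\c) = \0$. With this repair, $\ker\psi = \ker\phi_v$ and your bound $\dim(C_v^*) = \dim(\mathrm{im}\,\psi) \leq \dim({\cC|}_{\omega^{-1}(v)}) + \sum_{e' \in E(v)-\{e\}} \dim(\cS_{e'}^*)$ goes through.

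For comparison, the paper's own proof uses no kernels at all: it starts from Forney's formula (\ref{dimCv*}), rewrites it via $\dim(\cC_J) = \dim(\cC) - \dim({\cC|}_\oJ)$ as $\dim(C_v^*) = \dim({\cC|}_{\oJ(e)}) + \sum_{e' \in E(v)-\{e\}} \left( \dim({\cC|}_{\oJ(e')}) - \dim(\cC) \right)$, bounds $\dim({\cC|}_{\oJ(e)}) \leq \dim({\cC|}_{\omega^{-1}(v)}) + \sum_{e' \in E(v)-\{e\}} \dim({\cC|}_{J(e')})$ using exactly the disjoint-union decomposition of $\oJ(e)$ that appears above, and concludes by (\ref{dimSe*_alt}). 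That argument is a few lines of dimension counting; your (repaired) argument is longer but more structural, exhibiting the bound as a statement about the map $\c \mapsto {\b|}_v$ rather than as an identity-plus-subadditivity computation.
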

\begin{proof}
The upper bound may be proved as follows. Since 
$\dim(\cC_J) = \dim(\cC) - \dim({\cC|}_\oJ)$ for any $J \subseteq I$,
we may write (\ref{dimCv*}) as 
$$
\dim(C_v^*) = \sum_{e \in E(v)} \dim({\cC|}_{\oJ(e)}) - (|E(v)-1) \dim(\cC).
$$
Now, let $e \in E(v)$ be fixed. We have 
$$
\dim(C_v^*) = \dim({\cC|}_{\oJ(e)}) + 
\sum_{e' \in E(v) - \{e\}} \left(\dim({\cC|}_{\oJ(e')}) - \dim(\cC)\right).
$$
However, as can be seen from Figure~\ref{vpersp_fig}, $\oJ(e)$ is the
disjoint union of $\omega^{-1}(v)$ and the sets $J(e')$, 
$e' \in E(v) - \{e\}$.
\begin{figure}
\epsfig{file=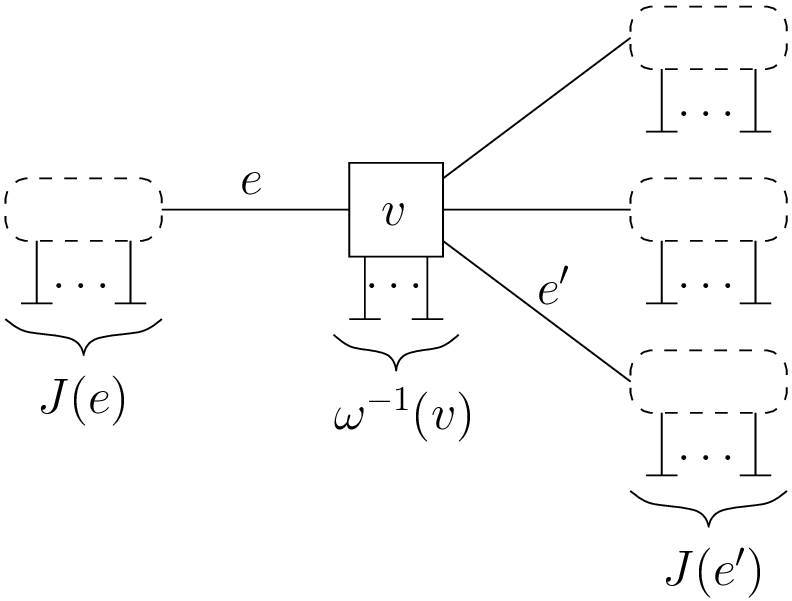, width=6cm}
\caption{Figure depicting the perspective about a vertex $v$ in
a tree decomposition. Dashed ovals represent subtrees.}
\label{vpersp_fig}
\end{figure}
Therefore,
$$
\dim({\cC|}_{\oJ(e)}) \leq \dim({\cC|}_{\omega^{-1}(v)}) + 
\sum_{e' \in E(v) - \{e\}} \dim({\cC|}_{J(e')}),
$$
and hence,
$$
\dim(C_v^*) \leq \dim({\cC|}_{\omega^{-1}(v)}) + 
\sum_{e' \in E(v) - \{e\}} 
\left( \dim({\cC|}_{J(e')}) + \dim({\cC|}_{\oJ(e')}) - \dim(\cC)\right).
$$ 
The lemma now follows from (\ref{dimSe*_alt}).
\end{proof}

As already mentioned, among all tree realizations of $\cC$
extending $(T,\omega)$, the minimal realization $\cM(\cC;T,\omega)$ 
minimizes state space dimension at each edge of the tree $T$. It is 
natural to ask whether $\cM(\cC;T,\omega)$ also minimizes local
constraint code dimension at each vertex of $T$. We will show in 
the next section that $\cM(\cC;T,\omega)$ does in fact have 
the following property:
\begin{quote}
if $(T,\omega,(\cS_e,\ e \in E), (C_v,\ v \in V))$ is
a tree realization of $\cC$ that extends $(T,\omega)$, then 
for all $v \in V$, $\dim(C_v^*) \leq \dim(C_v)$.
\end{quote}
We will deduce this fact from an alternative construction of
$\cM(\cC;T,\omega)$ that we present next.

\section{A Construction of $\cM(\cC;T,\omega)$ via 
State Merging\label{state_merging_section}}

The construction we describe in this section takes an arbitrary tree 
realization $\G$ that extends the tree decomposition $(T,\omega)$ --- 
for example, the trivial extension given in Example~\ref{trivial_example} --- 
and via a sequence of transformations, converts $\G$ to $\cM(\cC;T,\omega)$.
These transformations constitute a natural generalization of 
the state-merging process in the context of minimal trellis realizations;
see, for example, \cite[Section~4]{vardy}. It would be useful to keep 
this special case in mind while going through the details of the
description that follows.

Let $\G = (T,\omega,(\cS_e,\ e \in E), (C_v,\ v \in V))$ be an 
essential\footnote{This restriction can be dropped by considering
$\ess(\G)$ instead; see Theorem~\ref{merge_theorem}.} tree realization 
of a code $\cC$ with index set $I$, and let 
$\mfB$ be the full behavior of $\G$. As $\G$ is essential,
we have that ${\mfB|}_e = \cS_e$ for all $e \in E$ (by definition), and 
${\mfB|}_v = C_v$ for all $v \in V$ (by Lemma~\ref{ess_lemma}).

Pick an arbitrary edge $\he \in E$, and for ease of notation, set
$J = J(\he)$ and $\bar{J} = \bar{J}(\he)$. Let $W$ be the subspace
of $\cS_{\he}$ defined by
$$
W = \{\s \in \cS_{\he}:\ \exists\, \b \in \mfB \text{ such that }
{\b|}_I \in \cC_J \oplus \cC_{\bar{J}}\, \text{ and }\, {\b|}_{\he} = \s\}.
$$
We will define a new tree model $\bar{\G} = 
(T,\omega,(\bar{\cS}_e,\, e \in E), (\bar{C}_v,\, v \in V))$,
such that states in the same coset of $W$ in $\cS_{\he}$ are represented 
by a single ``merged'' state in $\bar{\cS}_{\he}$.


Let 
$$
\Phi: \F^I \oplus \left(\bigoplus_{e \neq \he} \cS_e\right) \oplus \, \cS_{\he}
\ \longrightarrow \ 
\F^I \oplus \left(\bigoplus_{e \neq \he} \cS_e\right) \oplus \, \cS_{\he}/W
$$
be the mapping defined by 
$$
\Phi((x_i,\, i \in I),\, (\s_e,\, e \neq \he),\, \s)
\ \ = \ \ ((x_i,\, i \in I),\, (\s_e,\, e \neq \he),\, \s+W).
$$
Define $\bar{\mfB} = \Phi(\mfB)$. It is clear from the definitions
that ${\bar{\mfB}|}_I = {\mfB|}_I = \cC$, and that 
$\dim(\bar{\mfB}) \leq \dim(\mfB)$.

Consider now the tree model $\bar{\G} = 
(T,\omega,(\bar{\cS}_e,\, e \in E), (\bar{C}_v,\, v \in V))$,
where $\bar{\cS}_e = {\bar{\mfB}|}_e$ for each $e \in E$, and 
$\bar{C}_v = {\bar{\mfB}|}_v$ for each $v \in V$. Note that
$\bar{\cS}_{\he} = \cS_{\he}/W$, and for $e \neq \he$, we have
$\bar{\cS}_e = {\bar{\mfB}|}_e = {\mfB|}_e = \cS_e$. 
All states in $\cS_{\he}$ belonging to the same coset of $W$, say, $\s + W$, 
are mapped to (or merged into) the single state $\s+W$ in $\bar{\cS}_{\he}$. 
Further note that if $v$ is not one of the two vertices incident with $\he$,
then $\bar{C}_v = {\bar{\mfB}|}_v = {\mfB|}_v = C_v$. At the two vertices that
are incident with $\he$, the local constraints are appropriately modified 
to take into account the state-merging at edge $\he$. In any case, we have 
\beq
\dim(\bar{\cS}_e) = \dim({\bar{\mfB}|}_e) \leq \dim({\mfB|}_e) = \dim(\cS_e),
\ \text{ for each } e \in E,
\label{dimSe_ineq}
\eeq
and 
\beq
\dim(\bar{C}_v) = \dim({\bar{\mfB}|}_v) \leq \dim({\mfB|}_v) = \dim(C_v),
\ \text{ for each } v \in V.
\label{dimCv_ineq}
\eeq
\mbox{}\\[-6pt]

We claim that $\bar{\G}$ is an essential tree realization of $\cC$. 
To prove this claim, we must show that ${\mfB(\bar{\G})|}_e = 
{\bar{\mfB}|}_e$ \/ for all $e \in E$, and that 
${\mfB(\bar{\G})|}_I = \cC$, where $\mfB(\bar{\G})$ denotes the 
full behavior of $\bar{\G}$. Note that we do \emph{not} claim 
that $\mfB(\bar{\G}) = \bar{\mfB}$; indeed, this may not be true.

It is easy to see that the inclusion $\cC \subseteq {\mfB(\bar{\G})|}_I$
holds. Indeed, since $\bar{\G} = (T,\omega,({\bar{\mfB}|}_e,\, e \in E),
({\bar{\mfB}|}_v,\, v \in V))$, it is evident that any 
$\bar{\b} \in \bar{\mfB}$ satisfies all the local constraints of 
$\bar{\G}$, and hence is in $\mfB(\bar{\G})$. 
Therefore, $\bar{\mfB} \subseteq \mfB(\bar{\G})$,
and in particular, $\cC = {\bar{\mfB}|}_I \subseteq {\mfB(\bar{\G})|}_I$.

The reverse inclusion, ${\mfB(\bar{\G})|}_I \subseteq \cC$, follows 
from part (a) of the lemma below.

\begin{lemma}
Let $\bar{\b}$ be a global configuration in $\mfB(\bar{\G})$. Then,
\begin{itemize}
\item[(a)] ${\bar{\b}|}_I \in \cC$; and
\item[(b)] ${\bar{\b}|}_{\he} = \0$ if and only if\/
${\bar{\b}|}_I \in \cC_J \oplus \cC_{\bar{J}}$.
\end{itemize}
\label{Gbar_lemma1}
\end{lemma}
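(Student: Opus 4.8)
The plan is to manufacture, from the given $\bar{\b}\in\mfB(\bar{\G})$, a configuration $\b^{*}\in\mfB$ of the \emph{original} realization $\G$ whose state at $\he$ is a prescribed coset representative $\tilde{\s}\in\cS_{\he}$ of $\bar{\b}|_{\he}\in\cS_{\he}/W$, and whose symbol part differs from $\bar{\b}|_{I}$ only by a codeword supported on $\bar{J}$. Granting such a $\b^{*}$, part (a) is immediate, and part (b) collapses --- using Lemma~\ref{b|e_lemma} and the definition of $W$ --- to the single equivalence ``$\b^{*}|_{I}\in\cC_{J}\oplus\cC_{\bar{J}}$ if and only if $\tilde{\s}\in W$''.

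To build $\b^{*}$, cut $T$ at $\he$ into the subtrees $T_{\he}$ (containing the endpoint $u$ of $\he$) and $\bar{T}_{\he}$ (containing the endpoint $w$), so that $I=J\disj\bar{J}$, $V(T)=V(T_{\he})\disj V(\bar{T}_{\he})$, and $\he$ is the only edge of $T$ joining the two sides. Since $\bar{\b}$ satisfies $\bar{C}_{u}={\bar{\mfB}|}_{u}$ and $\bar{\mfB}=\Phi(\mfB)$, there is a $\b\in\mfB$ agreeing with $\bar{\b}$ on the symbols at $\omega^{-1}(u)$ and on the states at the edges of $E(u)-\{\he\}$, and satisfying $\b|_{\he}+W=\bar{\b}|_{\he}$; set $\tilde{\s}_{L}=\b|_{\he}$. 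At the other vertices $v$ of $T_{\he}$ we have $\he\notin E(v)$, so $\bar{C}_{v}={\bar{\mfB}|}_{v}={\mfB|}_{v}\subseteq C_{v}$, and hence $\bar{\b}|_{v}\in C_{v}$ there. Therefore the assignment $\b_{L}$ got by restricting $\bar{\b}$ to the $T_{\he}$-side and installing $\tilde{\s}_{L}$ at $\he$ satisfies $C_{v}$ for every $v\in V(T_{\he})$; symmetrically the $\bar{T}_{\he}$-side yields an assignment $\b_{R}$ valid there, with state $\tilde{\s}_{R}$ at $\he$ and $\tilde{\s}_{R}+W=\bar{\b}|_{\he}$. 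Now $\tilde{\s}_{L}-\tilde{\s}_{R}\in W$, so by the definition of $W$ there is a $\b_{0}\in\mfB$ with $\b_{0}|_{I}\in\cC_{J}\oplus\cC_{\bar{J}}$ and $\b_{0}|_{\he}=\tilde{\s}_{L}-\tilde{\s}_{R}$; in particular $\b_{0}|_{\bar{J}}\in\cC_{\bar{J}}$. Adding the restriction of $\b_{0}$ to the $\bar{T}_{\he}$-side to $\b_{R}$ --- legitimate because each $C_{v}$ is linear --- moves the state at $\he$ to $\tilde{\s}_{L}$ while altering the symbol part only on $\bar{J}$, by $\b_{0}|_{\bar{J}}$; gluing the result to $\b_{L}$ along $\he$ (the states at $\he$ now agree) produces a valid global configuration $\b^{*}\in\mfB$ with $\b^{*}|_{\he}=\tilde{\s}_{L}$ and $\b^{*}|_{I}=\bar{\b}|_{I}+\c_{0}$, where $\c_{0}$ is the codeword of $\cC$ with $\c_{0}|_{J}=\0$ and $\c_{0}|_{\bar{J}}=\b_{0}|_{\bar{J}}$. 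Note $\c_{0}\in\cC_{J}\oplus\cC_{\bar{J}}$.

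Part (a) follows at once: $\b^{*}|_{I}\in{\mfB|}_{I}=\cC$ and $\c_{0}\in\cC$, so $\bar{\b}|_{I}=\b^{*}|_{I}-\c_{0}\in\cC$. For (b): because $\c_{0}\in\cC_{J}\oplus\cC_{\bar{J}}$, the membership $\bar{\b}|_{I}\in\cC_{J}\oplus\cC_{\bar{J}}$ is equivalent to $\b^{*}|_{I}\in\cC_{J}\oplus\cC_{\bar{J}}$, while $\bar{\b}|_{\he}=\0$ is equivalent to $\tilde{\s}_{L}\in W$ (by the choice of $\tilde{\s}_{L}$); so it remains to show $\b^{*}|_{I}\in\cC_{J}\oplus\cC_{\bar{J}}\Leftrightarrow\tilde{\s}_{L}\in W$. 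If $\tilde{\s}_{L}\in W$, pick $\b_{1}\in\mfB$ with $\b_{1}|_{I}\in\cC_{J}\oplus\cC_{\bar{J}}$ and $\b_{1}|_{\he}=\tilde{\s}_{L}$; then $(\b^{*}-\b_{1})|_{\he}=\0$, so Lemma~\ref{b|e_lemma} gives $(\b^{*}-\b_{1})|_{I}\in\cC_{J}\oplus\cC_{\bar{J}}$, whence $\b^{*}|_{I}\in\cC_{J}\oplus\cC_{\bar{J}}$. Conversely, if $\b^{*}|_{I}\in\cC_{J}\oplus\cC_{\bar{J}}$, then $\b^{*}\in\mfB$ together with $\b^{*}|_{\he}=\tilde{\s}_{L}$ witnesses $\tilde{\s}_{L}\in W$ directly from the definition of $W$.

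The main obstacle is the gluing step. The representatives $\tilde{\s}_{L}$ and $\tilde{\s}_{R}$ of $\bar{\b}|_{\he}$ are obtained independently at the two endpoints of $\he$ and will in general differ; the argument turns on recognizing that their difference lies in $W$, so it can be absorbed --- affecting the symbol part over $\bar{J}$ only --- using exactly the configuration of $\mfB$ that the definition of $W$ supplies. The remaining work is bookkeeping: keeping straight which symbols, states and vertices lie on which side of $\he$, and using that $\he$ is the unique edge joining $T_{\he}$ to $\bar{T}_{\he}$, so that a global configuration is valid precisely when its restrictions to the two sides are.
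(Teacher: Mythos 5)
Your construction is correct, and it reaches the lemma by a route that is organized differently from the paper's, though it rests on the same two pillars (the definition of $W$ and linearity of $\mfB$). The paper argues by cases on ${\bar{\b}|}_{\he}$: when ${\bar{\b}|}_{\he}=\0$ it glues the two halves of $\bar{\b}$ to the opposite halves of $W$-witnesses $\b,\b'$ to form hybrid configurations $\g,\g'\in\mfB$, and extracts ${\bar{\b}|}_J\in\cC_J$, ${\bar{\b}|}_{\bar{J}}\in\cC_{\bar{J}}$ from the differences $\b-\g$, $\b'-\g'$; when ${\bar{\b}|}_{\he}\neq\0$ it uses essentialness at $\he$ (i.e.\ $\cS_{\he}={\mfB|}_{\he}$) to lift an arbitrary coset representative to some $\b\in\mfB$, forms $\Phi(\b)\in\bar{\mfB}\subseteq\mfB(\bar{\G})$, and subtracts to reduce to the first case. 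You instead build a single lift $\b^*\in\mfB$ with a prescribed representative state at $\he$ and symbol part ${\bar{\b}|}_I+\c_0$ with $\c_0\in\cC_J\oplus\cC_{\bar{J}}$, obtaining your representatives not from essentialness at $\he$ but from the local constraints $\bar{C}_u={\bar{\mfB}|}_u$ at the two endpoints of $\he$, and reconciling the two independently obtained representatives by adding a $W$-witness on one side --- which is the same patching mechanism as the paper's Case-1 gluing, just applied as a correction to $\bar{\b}$'s own half rather than a wholesale replacement of it. You then finish part (b) by invoking Lemma~\ref{b|e_lemma} on $\b^*-\b_1$, where the paper effectively re-derives that step inline. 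What your version buys is a uniform argument with no case split and no explicit appeal to ${\mfB|}_{\he}=\cS_{\he}$, and it isolates cleanly the equivalence ``$\b^*{|}_I\in\cC_J\oplus\cC_{\bar{J}}\Leftrightarrow\tilde{\s}_L\in W$''; what the paper's case analysis buys is a more transparent view of how the merged zero state behaves and a direct production of the non-membership conclusion in the nonzero case. All the delicate points in your write-up (validity of the two half-assignments at the endpoints of $\he$, the fact that $\c_0=(\0,\b_0{|}_{\bar{J}})$ is a codeword in $\cC_J\oplus\cC_{\bar{J}}$, and the legitimacy of gluing once the $\he$-states agree) check out.
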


We defer the proof of the lemma to Appendix~\ref{Gbar_lemmas_app}.
Lemma~\ref{Gbar_lemma1}(a) shows that ${\mfB(\bar{\G})|}_I \subseteq \cC$,
thus proving that $\bar{\G}$ is a tree realization of $\cC$. It remains
to show that $\G$ is essential, \emph{i.e.}, that ${\mfB(\bar{\G})|}_e = 
{\bar{\mfB}|}_e$ \/ for all $e \in E$. This is shown by the following 
simple argument. We have already seen that $\bar{\mfB} \subseteq 
\mfB(\bar{\G})$, and hence, ${\bar{\mfB}|}_e \subseteq {\mfB(\bar{\G})|}_e$
for all $e \in E$. On the other hand, at any $e \in E$,
${\mfB(\bar{\G})|}_e$ is, by definition, a subspace of 
$\bar{\cS}_e = {\bar{\mfB}|}_e$. Hence, $\bar{\G}$ is essential, 
thus proving our original claim, which we record in the following proposition.

\begin{proposition}
The tree model $\bar{\G}$ is an essential tree realization of $\cC$.
\label{Gbar_prop}
\end{proposition}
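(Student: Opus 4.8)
The plan is to verify the two defining conditions separately: first that the full behavior $\mfB(\bar{\G})$ projects onto $\cC$ in the symbol coordinates, so that $\bar{\G}$ is a tree realization of $\cC$, and then that ${\mfB(\bar{\G})|}_e = \bar{\cS}_e$ for every $e\in E$, so that this realization is essential. The single observation that drives everything is the containment $\bar{\mfB}\subseteq\mfB(\bar{\G})$, so I would establish that first.

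To see $\bar{\mfB}\subseteq\mfB(\bar{\G})$, recall that $\bar{\G}$ is by construction the essentialization of the tree model whose full behavior is $\bar{\mfB}$: its state spaces are the projections ${\bar{\mfB}|}_e$ and its local constraints are $\bar{C}_v={\bar{\mfB}|}_v$. Hence any $\bar{\b}\in\bar{\mfB}$ has local configuration ${\bar{\b}|}_v\in{\bar{\mfB}|}_v=\bar{C}_v$ at each $v\in V$, so $\bar{\b}$ satisfies every local constraint of $\bar{\G}$ and therefore belongs to $\mfB(\bar{\G})$. From this I immediately get $\cC={\bar{\mfB}|}_I\subseteq{\mfB(\bar{\G})|}_I$ and ${\bar{\mfB}|}_e\subseteq{\mfB(\bar{\G})|}_e$ for each $e\in E$.

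For the reverse symbol-projection inclusion ${\mfB(\bar{\G})|}_I\subseteq\cC$, I would simply invoke Lemma~\ref{Gbar_lemma1}(a), which states that every global configuration $\bar{\b}$ of $\bar{\G}$ satisfies ${\bar{\b}|}_I\in\cC$. Combined with the previous paragraph this yields ${\mfB(\bar{\G})|}_I=\cC$. For essentiality, one inclusion ${\bar{\mfB}|}_e\subseteq{\mfB(\bar{\G})|}_e$ is already in hand; for the other, note that by the definition of full behavior and of the state space $\bar{\cS}_e={\bar{\mfB}|}_e$ assigned to $e$ in $\bar{\G}$, every configuration of $\bar{\G}$ has its $e$-component lying in $\bar{\cS}_e$, so ${\mfB(\bar{\G})|}_e\subseteq\bar{\cS}_e={\bar{\mfB}|}_e$. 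Thus ${\mfB(\bar{\G})|}_e={\bar{\mfB}|}_e$ for all $e\in E$, which is precisely the essentiality condition.

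I expect essentially no obstacle in the proof of the proposition itself: once $\bar{\mfB}\subseteq\mfB(\bar{\G})$ is observed, the rest is bookkeeping with projections and the definition of essentialization. The real content has been pushed into Lemma~\ref{Gbar_lemma1}(a), whose subtlety is that a configuration of $\bar{\G}$ need not lie in $\bar{\mfB}$ itself --- the full behavior $\mfB(\bar{\G})$ may be strictly larger --- so one cannot merely pull it back through the merging map $\Phi$. The argument there will instead have to reconstruct a codeword of $\cC$ from a valid configuration of $\bar{\G}$, using the merging structure of $W$ at the edge $\he$ together with Lemma~\ref{b|e_lemma} on the effect of a vanishing state component. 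That lemma, deferred to the appendix, is where I would concentrate the effort.
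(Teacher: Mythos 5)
Your proposal is correct and follows essentially the same route as the paper: establish $\bar{\mfB}\subseteq\mfB(\bar{\G})$ from the definition of $\bar{\G}$ via the projections ${\bar{\mfB}|}_v$, invoke Lemma~\ref{Gbar_lemma1}(a) for the inclusion ${\mfB(\bar{\G})|}_I\subseteq\cC$, and get essentiality from the two trivial inclusions at each edge. You also correctly identify, as the paper does, that $\mfB(\bar{\G})$ need not equal $\bar{\mfB}$ and that the substantive work lives in the appendix proof of Lemma~\ref{Gbar_lemma1}.
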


Let us call the process described above of obtaining $\bar{\G}$ from
$\G$ as the \emph{state-merging process at edge $\he$}. We use the
notation $\bar{\G} = \merge_{\he}(\G)$ to denote this transformation.
Our goal now is to show that, starting from an essential tree 
realization, if we apply the state-merging process at each edge of 
the underlying tree, then we always end up with a minimal realization. 
A proof of this assertion requires the following technical lemma, 
whose proof we also defer to Appendix~\ref{Gbar_lemmas_app}.

\begin{lemma}
Suppose that there exists
$e' \in E - \{\he\}$ such that the full behavior, $\mfB$, of $\G$
satisfies the following property: for $\b \in \mfB$, we have
${\b|}_{e'} = \0$ if and only if\/
${\b|}_I \in \cC_{J(e')} \oplus \cC_{\bar{J}(e')}$.
Then, for any $\ob \in \mfB(\bar{\G})$, we also have 
${\ob|}_{e'} = \0$ if and only if\/
${\ob|}_I \in \cC_{J(e')} \oplus \cC_{\bar{J}(e')}$.
\label{Gbar_lemma2}
\end{lemma}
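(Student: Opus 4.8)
The statement relates the ``$\0$ at an edge $e'$ iff codeword decomposes'' property for $\G$ to the same property for $\bar\G=\merge_{\he}(\G)$. The natural strategy is to track how configurations of $\bar\G$ lift to configurations of $\G$. First I would recall that $\bar{\mfB}=\Phi(\mfB)\subseteq\mfB(\bar\G)$ but equality need not hold, so I cannot simply push the property through $\Phi$; instead, I must work directly with $\mfB(\bar\G)$ and use Lemma~\ref{Gbar_lemma1}. The key observation is that for $e'\neq\he$ the state space is unchanged, $\bar{\cS}_{e'}=\cS_{e'}$, and the projection onto $e'$ is ``unaffected'' by the merging at $\he$. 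So given $\ob\in\mfB(\bar\G)$, I want to produce a $\b\in\mfB$ with ${\b|}_{e'}={\ob|}_{e'}$ and ${\b|}_I={\ob|}_I$, and then invoke the hypothesis on $\G$.

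\textbf{Key steps.} (1) Forward direction of the lifting: since $\bar\G$ is essential (Proposition~\ref{Gbar_prop}) and $\bar{\mfB}\subseteq\mfB(\bar\G)$, for any $\ob\in\mfB(\bar\G)$ I claim there is $\b\in\mfB$ with $\Phi(\b)$ agreeing with $\ob$ on all coordinates except possibly the $\cS_{\he}$-component — more precisely, with ${\b|}_I={\ob|}_I$ and ${\b|}_{e'}={\ob|}_{e'}$ for every $e'\neq\he$. The cleanest way to get this is: the map $\Phi$ restricted to $\mfB$ changes only the $\he$-coordinate (quotienting by $W$); so it suffices to show $\mfB(\bar\G)$ and $\bar{\mfB}$ have the same projection onto $\F^I\oplus\bigl(\bigoplus_{e\neq\he}\cS_e\bigr)$. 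This should follow from the fact that the local constraints of $\bar\G$ away from the two vertices incident with $\he$ coincide with those of $\G$, together with Lemma~\ref{Gbar_lemma1}(a),(b) handling what happens at $\he$ and its endpoints. (2) Once I have such a lift $\b\in\mfB$: if ${\ob|}_{e'}=\0$ then ${\b|}_{e'}=\0$, so by the hypothesis ${\b|}_I\in\cC_{J(e')}\oplus\cC_{\bar J(e')}$, and since ${\ob|}_I={\b|}_I$ the forward implication holds. (3) The converse: if ${\ob|}_I\in\cC_{J(e')}\oplus\cC_{\bar J(e')}$, then ${\b|}_I\in\cC_{J(e')}\oplus\cC_{\bar J(e')}$, so ${\b|}_{e'}=\0$ by the hypothesis, hence ${\ob|}_{e'}=\0$.

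\textbf{The main obstacle.} The crux is step~(1): establishing that every $\ob\in\mfB(\bar\G)$ lifts to some $\b\in\mfB$ agreeing with it on $I$ and on all edges $e'\neq\he$. This is delicate precisely because $\mfB(\bar\G)\supsetneq\bar{\mfB}$ in general — the full behavior of the merged model can be strictly larger than the image of the original full behavior. I expect to handle this by an argument localized at the edge $\he$ and its two endpoints: away from $\he$, the constraints are literally unchanged, so the only freedom $\mfB(\bar\G)$ has beyond $\bar{\mfB}$ lies in choices of the $\he$-state-coset and the two adjacent local configurations; Lemma~\ref{Gbar_lemma1} is exactly the tool that controls the interaction between ${\ob|}_{\he}$ and ${\ob|}_I$, and should let me select a consistent preimage state in $\cS_{\he}$ together with valid local configurations at the two endpoints, yielding the desired $\b$. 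Once the lifting is in hand, the rest is a routine two-way implication using the given hypothesis on $\G$. I would consign the technical lifting argument, if lengthy, to the same appendix as the statement itself.
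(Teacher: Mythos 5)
Your reduction of the lemma to the lifting claim in step (1) is where the proposal breaks down: that claim is both stronger than what you need and never established, and the localized argument you sketch runs into exactly the obstruction you anticipate. Writing $T$ as the two subtrees $T_{\he}$ and $\oT_{\he}$ joined by $\he$, the local constraints of $\bar{\G}$ at the two endpoints of $\he$ only guarantee representatives $\s_1,\s_2\in\cS_{\he}$ of the coset ${\ob|}_{\he}$ such that the restriction of $\ob$ to the $T_{\he}$-side is valid (for $\G$) with state $\s_1$, and the restriction to the $\oT_{\he}$-side is valid with state $\s_2$. In general $\s_1\neq\s_2$; they differ by some $\w\in W$, and the only evident way to repair the mismatch is to add to one side the restriction of a configuration $\b''\in\mfB$ with ${\b''|}_{\he}=\w$, which alters the symbol and state values on that entire side. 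The glued $\b\in\mfB$ then agrees with $\ob$ at $e'$ and on one side of $\he$, but not on all of $I$ and all edges other than $\he$, and nothing in Lemma~\ref{Gbar_lemma1} forces full agreement, precisely because $\mfB(\bar{\G})$ may strictly contain $\bar{\mfB}$. Without the lifting, your steps (2) and (3) have nothing to act on, so there is a genuine gap.

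The paper's proof sidesteps any lifting of the whole configuration. The ``only if'' direction is free: since $\bar{\G}$ is a tree realization of $\cC$ (Proposition~\ref{Gbar_prop}), Lemma~\ref{b|e_lemma} applied to $\bar{\G}$ already gives ${\ob|}_{e'}=\0 \Rightarrow {\ob|}_I\in \cC_{J(e')}\oplus\cC_{\bar{J}(e')}$, with no use of the hypothesis on $\G$. For the converse, argue by contradiction and lift only the single state value at $e'$: if ${\ob|}_I\in\cC_{J(e')}\oplus\cC_{\bar{J}(e')}$ but ${\ob|}_{e'}\neq\0$, then since $e'\neq\he$ we have $\bar{\cS}_{e'}={\mfB|}_{e'}$, so some $\b\in\mfB$ satisfies ${\b|}_{e'}={\ob|}_{e'}\neq\0$, and the hypothesis on $\G$ forces ${\b|}_I\notin\cC_{J(e')}\oplus\cC_{\bar{J}(e')}$. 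Setting $\tb=\Phi(\b)\in\bar{\mfB}\subseteq\mfB(\bar{\G})$, the difference $\ob-\tb$ lies in $\mfB(\bar{\G})$, vanishes at $e'$, yet its symbol part ${\ob|}_I-{\b|}_I$ is not in $\cC_{J(e')}\oplus\cC_{\bar{J}(e')}$, contradicting Lemma~\ref{b|e_lemma}. This essentiality-plus-linearity trick is the idea your proposal is missing; if you want to keep your framework, you need your lift to agree with $\ob$ only at the edge $e'$, and the symbol part is then handled by the subtraction argument rather than by matching on $I$.
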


We are now in a position to prove the main result of this section,
which provides a construction of $\cM(\cC;T,\omega)$ via state merging.

\begin{theorem}
Let $\G$ be a tree realization of $\cC$
that extends the tree decomposition $(T,\omega)$. 
Let $e_1, e_2, \ldots, e_{|E|}$ be a listing of the edges of 
$T$. Set $\G_0 = \ess(\G)$, and for $i = 1,2,\ldots,|E|$, 
set $\G_i = \merge_{e_i}(\G_{i-1})$. Then, $\G_{|E|}$ is 
the minimal tree realization $\cM(\cC;T,\omega)$.
\label{merge_theorem}
\end{theorem}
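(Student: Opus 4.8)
The plan is to show that after merging at every edge, the resulting realization $\G_{|E|}$ has the state-space dimensions prescribed by \eqref{dimSe*} at every edge, and then invoke the uniqueness statement for minimal tree realizations (the remark following \eqref{dimSe*_alt}) to conclude $\G_{|E|} = \cM(\cC;T,\omega)$. The key observation is that for any edge $e$, a state space $\cS_e$ in an essential realization has $\dim(\cS_e) = \dim(\cC_{J(e)}) + \dim(\cC_{\bar{J}(e)})$ \emph{exactly when} the ``kernel property'' holds at $e$: for $\b$ in the full behavior, ${\b|}_e = \0$ iff ${\b|}_I \in \cC_{J(e)} \oplus \cC_{\bar{J}(e)}$. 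Indeed, combining Lemma~\ref{b|e_lemma} (the forward direction is automatic) with the fact that in an essential realization ${\mfB|}_e = \cS_e$, the kernel of the projection $\b \mapsto {\b|}_e$ restricted to symbol coordinates is exactly the preimage of $\cC_{J(e)} \oplus \cC_{\bar{J}(e)}$ in $\cC$ precisely under this property, and then a dimension count (using $\mfB \cong \cC$ when the realization is essential, or more carefully $\dim(\mfB|_I) - \dim(\ker)$) gives $\dim(\cS_e) = \dim(\cC) - \dim(\cC_{J(e)} \oplus \cC_{\bar{J}(e)}) = \dim(\cS_e^*)$. So the whole theorem reduces to: after the sequence of merges, the kernel property holds simultaneously at all edges.

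The argument then proceeds in two steps. First, Lemma~\ref{Gbar_lemma1}(b) shows that immediately after $\merge_{\he}(\G)$, the kernel property holds at the edge $\he$ that was just merged. Second, Lemma~\ref{Gbar_lemma2} shows that the kernel property, once established at an edge $e'$, is \emph{preserved} under subsequent merges at other edges. Therefore, running the merges $\merge_{e_1}, \merge_{e_2}, \ldots, \merge_{e_{|E|}}$ in sequence: after step $i$, the kernel property holds at $e_i$ by Lemma~\ref{Gbar_lemma1}(b), and it continues to hold at $e_1, \ldots, e_{i-1}$ by repeated application of Lemma~\ref{Gbar_lemma2}; hence by induction after step $|E|$ the kernel property holds at every edge of $T$ simultaneously. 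One also needs that each $\G_i$ is an essential tree realization of $\cC$, which is exactly Proposition~\ref{Gbar_prop} applied inductively (with $\G_0 = \ess(\G)$ essential by construction).

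Given the kernel property at all edges in the essential realization $\G_{|E|}$, the dimension computation above yields $\dim(\cS_e(\G_{|E|})) = \dim(\cS_e^*)$ for every $e \in E$, matching expression \eqref{dimSe*}. By the uniqueness of the minimal tree realization noted after \eqref{dimSe*_alt}, $\G_{|E|} = \cM(\cC;T,\omega)$, which also forces $C_v(\G_{|E|}) = C_v^*$ for all $v$ since $\G_{|E|}$ is essential (so ${\mfB|}_v = C_v$ by Lemma~\ref{ess_lemma}) and the minimal realization is determined up to isomorphism. I expect the main obstacle to be the bookkeeping in the dimension count establishing the equivalence ``kernel property at $e$ $\iff$ $\dim(\cS_e) = \dim(\cS_e^*)$'' cleanly for essential realizations — in particular making sure one uses $\dim(\cC_J) = \dim(\cC) - \dim({\cC|}_{\bar{J}})$ and the essentiality hypothesis in the right order — but the structural heavy lifting has already been delegated to Lemmas~\ref{Gbar_lemma1} and~\ref{Gbar_lemma2} and Proposition~\ref{Gbar_prop}, so the proof of the theorem itself should be short.
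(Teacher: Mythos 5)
Your proposal is correct and is essentially the paper's own proof: essentiality is maintained via Proposition~\ref{Gbar_prop}, the kernel property is created at each merged edge by Lemma~\ref{Gbar_lemma1}(b) and preserved at earlier edges by Lemma~\ref{Gbar_lemma2}, so it holds simultaneously at all edges of $\G_{|E|}$. The paper finishes by showing that $\b \mapsto {\b|}_I$ is an isomorphism on $\mfB(\G_{|E|})$ and that the induced maps $\beta_e$ are the canonical projections of (\ref{se*_def}), identifying the realization with (\ref{Se*_def})--(\ref{Cv*_def}) outright, whereas you match state-space dimensions against (\ref{dimSe*}) and invoke uniqueness; that works too, with the small correction that $\mfB \cong \cC$ is \emph{not} a consequence of essentiality alone but of the kernel property holding at every edge (if ${\b|}_I = \0$ then ${\b|}_e = \0$ for all $e$), exactly as the paper notes.
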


\begin{proof}
Let $\mfB$ denote the full behavior of $\G$ (and hence, also of $\ess(\G)$),
and for $i = 1,2,\ldots,|E|$, let $\mfB(\G_i)$ denote the full behavior 
of $\G_i$. By Proposition~\ref{Gbar_prop},
each $\G_i$ is an essential tree realization of $\cC$.

By Lemma~\ref{Gbar_lemma1}(b), for any
$\b \in \mfB(\G_i)$, we have ${\b|}_{e_i} = \0$ if and only if
${\b|}_I \in \cC_{J(e_i)} \oplus \cC_{\bar{J}(e_i)}$. Furthermore,
by Lemma~\ref{Gbar_lemma2}, for any $j \geq i$, $\mfB(\G_j)$ satisfies
the following property: 
\begin{quote}
for any $\b \in \mfB(\G_j)$, we have ${\b|}_{e_i} = \0$ if and only if
${\b|}_I \in \cC_{J(e_i)} \oplus \cC_{\bar{J}(e_i)}$. 
\end{quote}
In particular, $\mfB^* \define \mfB(\G_{|E|})$ satisfies the 
following property for $i = 1,2,\ldots,|E|$: 
\begin{quote}
for any $\b \in \mfB^*$, we have ${\b|}_{e_i} = \0$ if and only if
${\b|}_I \in \cC_{J(e_i)} \oplus \cC_{\bar{J}(e_i)}$. 
\end{quote}
Let us call the above property (P).
Property (P) has two important consequences. Firstly,
it implies that if $\b \in \mfB^*$ is such that ${\b|}_I = \0$,
then ${\b|}_{e} = \0$ for all $e \in E$. This means that the
projection $\pi:\ \mfB^* \rightarrow \cC$ defined by
$\pi(\b) = {\b|}_I$ is in fact an isomorphism.

For the second consequence of (P), consider, for any $e \in E$, 
the homomorphism $\beta_e:\ \cC \rightarrow {\mfB^*|}_e$
defined by $\beta_e(\c) = {(\pi^{-1}(\c))|}_e$. This map
is well-defined since $\pi$ is an isomorphism. Property (P) is equivalent
to the assertion that, for any $e \in E$, the kernel of $\beta_e$
is precisely $\cC_{J(e)} \oplus \cC_{\bar{J}(e)}$. Therefore,
${\mfB^*|}_e \cong \cC/(\cC_{J(e)} \oplus \cC_{\bar{J}(e)})$.

Thus, for each $e \in E$, state space ${\mfB^*|}_e$ is isomorphic to 
$\cS_e^*$ defined in (\ref{Se*_def}), and the map $\beta_e$
is the canonical projection map $\s_e^*$ given by (\ref{se*_def}). 
It easily follows that for each $v \in V$, ${\mfB^*|}_v$ is 
isomorphic to $C_v^*$ defined in (\ref{Cv*_def}). Hence, 
$\G_{|E|} = (T,\omega,({\mfB^*|}_e,\ e \in E), ({\mfB^*|}_v,\ v \in V))$
is the minimal realization $\cM(\cC;T,\omega)$.
\end{proof}

Observe that at each step of the procedure outlined in 
Theorem~\ref{merge_theorem}, the dimensions of the state spaces
and the local constraints do not increase. To make this precise,
given tree models $\G' = (T,\omega,(\cS_e',\ e \in E), (C_v',\ v \in V))$
and $\G'' = (T,\omega,(\cS_e'',\ e \in E), (C_v'',\ v \in V))$,
let us say that $\G' \preccurlyeq \G''$ if 
$\dim(\cS_e') \leq \dim(\cS_e'')$ for all $e \in E$,
and $\dim(C_v') \leq \dim(C_v'')$ for all $v \in V$.
Then, for $\G$ and $\G_i$, $i = 0,1,2\ldots,|E|$, as in the statement of 
Theorem~\ref{merge_theorem}, we have by virtue of (\ref{dimSe_ineq}) 
and (\ref{dimCv_ineq}),
$$
\G_{|E|} \preccurlyeq \G_{|E|-1} \preccurlyeq \ldots \preccurlyeq \G_1 
\preccurlyeq \G_0 = \ess(\G) \preccurlyeq \G.
$$
Thus, we have that if $\G$ is any tree realization of $\cC$
that extends the tree decomposition $(T,\omega)$,
then $\cM(\cC;T,\omega) \preccurlyeq \G$. We record this strong property 
of minimal realizations as a corollary to Theorem~\ref{merge_theorem}.

\begin{corollary}
Let $(T,\omega)$ be a tree decomposition of the index set of a code $\cC$, and 
let $\cM(\cC;T,\omega) = (T,\omega,(\cS_e^*,\ e \in E), (C_v^*,\ v \in V))$
be the corresponding minimal tree realization of $\cC$. Then,
for any tree realization, $(T,\omega,(\cS_e,\ e \in E), (C_v,\ v \in V))$,
of $\cC$ that extends $(T,\omega)$, 
we have $\dim(\cS_e^*) \leq \dim(\cS_e)$ for all $e \in E$,
and $\dim(C_v^*) \leq \dim(C_v)$ for all $v \in V$.
\label{dimCv_cor}
\end{corollary}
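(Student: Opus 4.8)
The plan is to read the corollary straight off Theorem~\ref{merge_theorem} together with the single-step dimension inequalities (\ref{dimSe_ineq}) and (\ref{dimCv_ineq}). Fix an arbitrary tree realization $\G = (T,\omega,(\cS_e,\ e\in E),(C_v,\ v\in V))$ of $\cC$ extending $(T,\omega)$, and let $\mfB$ be its full behavior. First I would pass to the essentialization $\ess(\G) = (T,\omega,({\mfB|}_e,\ e\in E),({\mfB|}_v,\ v\in V))$. Since ${\mfB|}_e$ is a subspace of $\cS_e$ and ${\mfB|}_v$ a subspace of $C_v$, we immediately get $\ess(\G)\preccurlyeq\G$; moreover $\ess(\G)$ has the same full behavior $\mfB$ as $\G$, so it is again a tree realization of $\cC$ extending $(T,\omega)$, and it is essential, so the state-merging operation may legitimately be applied to it.

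Next, fix an ordering $e_1,\dots,e_{|E|}$ of the edges and form the sequence $\G_0 = \ess(\G)$, $\G_i = \merge_{e_i}(\G_{i-1})$ for $i=1,\dots,|E|$. By Proposition~\ref{Gbar_prop} each $\G_i$ is again an essential tree realization of $\cC$, so each successive merge is well-posed, and by (\ref{dimSe_ineq})–(\ref{dimCv_ineq}) applied at the $i$-th step we have $\G_i\preccurlyeq\G_{i-1}$ for every $i$. Chaining these inequalities gives
$$
\G_{|E|}\preccurlyeq\G_{|E|-1}\preccurlyeq\cdots\preccurlyeq\G_1\preccurlyeq\G_0 = \ess(\G)\preccurlyeq\G .
$$
Theorem~\ref{merge_theorem} identifies $\G_{|E|}$ with $\cM(\cC;T,\omega)$, so $\cM(\cC;T,\omega)\preccurlyeq\G$. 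Unwinding the definition of $\preccurlyeq$ yields $\dim(\cS_e^*)\le\dim(\cS_e)$ for all $e\in E$ and $\dim(C_v^*)\le\dim(C_v)$ for all $v\in V$, which is exactly the corollary.

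I do not expect any genuine obstacle here: the corollary is a direct consequence of machinery already built, and the substantive work lives in Theorem~\ref{merge_theorem} and the lemmas feeding it (Lemmas~\ref{Gbar_lemma1} and~\ref{Gbar_lemma2}, Proposition~\ref{Gbar_prop}). The only points needing a moment's care are bookkeeping ones: the merge operation is defined on essential realizations, which is why the chain must start from $\ess(\G)$ rather than $\G$ itself, and Proposition~\ref{Gbar_prop} must be invoked to see that essentiality is preserved along the whole sequence so that every $\merge_{e_i}(\G_{i-1})$ makes sense. It is also worth remarking that the state-space half of the statement, $\dim(\cS_e^*)\le\dim(\cS_e)$, is already the defining property of $\cM(\cC;T,\omega)$ from \cite{For01},\cite{For03}; the genuinely new content is the local-constraint inequality $\dim(C_v^*)\le\dim(C_v)$, and the argument above delivers both uniformly.
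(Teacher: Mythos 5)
Your proposal is correct and follows essentially the same route as the paper: essentialize $\G$, apply $\merge_{e_i}$ along all edges, use the single-step inequalities (\ref{dimSe_ineq})--(\ref{dimCv_ineq}) to get the chain $\G_{|E|}\preccurlyeq\cdots\preccurlyeq\ess(\G)\preccurlyeq\G$, and invoke Theorem~\ref{merge_theorem} to identify $\G_{|E|}$ with $\cM(\cC;T,\omega)$. This is precisely how the paper deduces the corollary, including the observation that $\ess(\G)\preccurlyeq\G$ and that Proposition~\ref{Gbar_prop} keeps each intermediate realization essential.
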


The procedure outlined in Theorem~\ref{merge_theorem} does not 
translate to an efficient algorithm for the construction of 
$\cM(\cC;T,\omega)$. This is because the state-merging procedure 
that creates $\G_i$ from $\G_{i-1}$ requires knowledge of the full
behavior of $\G_{i-1}$, which may not be easily determined. 
So, as a practical method for constructing $\cM(\cC;T,\omega)$, given
$\cC$ and $(T,\omega)$, we propose a novel construction
that relies upon the code decomposition techniques of the next
section.

\section{Code Decompositions\label{decomp_section}}

In previous work \cite{kashyap}, it was demonstrated that techniques 
from the decomposition theory of matroids \cite{Sey80},\cite{truemper} 
could be put to good use in a coding-theoretic setting. 
The decomposition theory in that work was presented in the 
context of binary linear codes. As we will now show, the basic 
elements of that theory can be easily extended to cover the case
of nonbinary codes as well. The object of this exercise is not just
to create a more general code decomposition theory, but as we will 
see in the next section, this decomposition theory ties in very nicely 
with the theory of tree realizations.

Let $\cC_1$ and $\cC_2$ be linear codes over the 
finite field\footnote{Up to this point, we did not need to specify
the number of elements in the finite field over which we were working,
but from now on, it will be useful for us to do so.}
$\F_q = GF(q)$, defined on the index sets $I_1$ and $I_2$, respectively. 
Let $I_1 \Delta I_2$ denote the symmetric difference, 
$(I_1 \cup I_2) - (I_1 \cap I_2)$, of the index sets.
We will construct a code $\sfS(\cC_1,\cC_2)$ with $I_1 \Delta I_2$ 
as its index set. For $\x = (x_i,\, i \in I_1) \in \cC_1$
and $\y = (y_i,\, i \in I_2) \in \cC_2$, let 
$\x \star \y = (c_i, \, i \in I_1 \cup I_2)$ be defined by
$$
c_i = \left\{
\begin{array}{cl}
x_i & \text{for $i \in I_1 - I_2$} \\
y_i & \text{for $i \in I_2 - I_1$} \\
x_i-y_i & \text{for $i \in I_1 \cap I_2$}.
\end{array}
\right.
$$
Setting $\cC_1 \star \cC_2 = 
\{\x \star \y:\ \x \in \cC_1,\, \y \in \cC_2\}$,
we see that $\cC_1 \star \cC_2$ has $I_1 \cup I_2$ as its index set.
We take $\sfS(\cC_1,\cC_2)$ to be the cross-section 
${(\cC_1 \star\cC_2)}_{I_1 \Delta I_2}$.
Note that when $I_1 \cap I_2 = \emptyset$, we have
$\sfS(\cC_1,\cC_2) = \cC_1 \star \cC_2 = \cC_1 \oplus \cC_2$.

For $i = 1,2$, let $\cC_i^{(p)}$ and $\cC_i^{(s)}$ denote the
projection ${\cC_i|}_{I_1 \cap I_2}$ and the cross-section
${(\cC_i)}_{I_1 \cap I_2}$, respectively. The codes 
$\cC_i^{(p)}$ and $\cC_i^{(s)}$, for $i = 1,2$, 
all have $I_1 \cap I_2$ as their index set.
The dimension of $\sfS(\cC_1,\cC_2)$ can be expressed in terms of the
codes $\cC_i$, $\cC_i^{(p)}$ and $\cC_i^{(s)}$, $i=1,2$, as 
stated in the following lemma.

\begin{proposition}
For codes $\cC_1,\cC_2$, we have
$$
\dim(\sfS(\cC_1,\cC_2)) = \dim(\cC_1) + \dim(\cC_2) 
 - \dim(\cC_1^{(s)} \cap \cC_2^{(s)}) - \dim(\cC_1^{(p)} + \cC_2^{(p)}),
$$
where $\cC_1^{(p)} + \cC_2^{(p)} = \{\x + \y:\, \x \in \cC_1^{(p)},
\y \in \cC_2^{(p)}\}$.
\label{dim_sum_prop}
\end{proposition}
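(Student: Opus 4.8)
The plan is to compute $\dim(\sfS(\cC_1,\cC_2))$ by tracking dimensions through the two operations that define it: first forming $\cC_1 \star \cC_2$ on the index set $I_1 \cup I_2$, and then taking the cross-section onto $I_1 \Delta I_2$. Write $K = I_1 \cap I_2$ and $L = I_1 \Delta I_2$, so $I_1 \cup I_2 = K \disj L$. First I would analyze $\cC_1 \star \cC_2$. The map $(\x,\y) \mapsto \x \star \y$ from $\cC_1 \times \cC_2$ onto $\cC_1 \star \cC_2$ is linear and surjective, so $\dim(\cC_1 \star \cC_2) = \dim(\cC_1) + \dim(\cC_2) - \dim(N)$, where $N$ is the kernel, i.e. the set of pairs $(\x,\y)$ with $\x \star \y = \0$. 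Unpacking the definition of $\star$: on $I_1 - I_2$ we need $\x|_{I_1-I_2} = \0$, on $I_2 - I_1$ we need $\y|_{I_2-I_1} = \0$, and on $K$ we need $\x|_K = \y|_K$. So a pair is in $N$ exactly when $\x \in (\cC_1)_K = \cC_1^{(s)}$ (viewed as a codeword of $\cC_1$ supported on $K$), $\y \in (\cC_2)_K = \cC_2^{(s)}$, and their common restriction to $K$ agrees. Since such an $\x$ is determined by $\x|_K \in \cC_1^{(s)}$ and likewise for $\y$, the pair $(\x,\y)$ is determined by the single vector $\x|_K = \y|_K$ lying in $\cC_1^{(s)} \cap \cC_2^{(s)}$; hence $\dim(N) = \dim(\cC_1^{(s)} \cap \cC_2^{(s)})$.

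Next I would pass from $\cC_1 \star \cC_2$ to its cross-section on $L$. By the identity noted in the excerpt (for a code $\cD$ on index set $I' = K \disj L$, $\dim(\cD_L) = \dim(\cD) - \dim(\cD|_K)$), we get
$$
\dim(\sfS(\cC_1,\cC_2)) = \dim(\cC_1 \star \cC_2) - \dim\big((\cC_1 \star \cC_2)|_K\big).
$$
So it remains to identify $(\cC_1 \star \cC_2)|_K$. From the definition, the $K$-component of $\x \star \y$ is $\x|_K - \y|_K$, where $\x|_K$ ranges over all of $\cC_1^{(p)} = \cC_1|_K$ as $\x$ ranges over $\cC_1$, and independently $\y|_K$ ranges over $\cC_2^{(p)}$. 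Therefore $(\cC_1 \star \cC_2)|_K = \{\u - \v : \u \in \cC_1^{(p)}, \v \in \cC_2^{(p)}\} = \cC_1^{(p)} + \cC_2^{(p)}$ (since these are $\F_q$-subspaces, the difference set equals the sum set). Combining the three displayed computations yields
$$
\dim(\sfS(\cC_1,\cC_2)) = \dim(\cC_1) + \dim(\cC_2) - \dim(\cC_1^{(s)} \cap \cC_2^{(s)}) - \dim(\cC_1^{(p)} + \cC_2^{(p)}),
$$
which is the claimed formula.

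The routine parts are the two rank-nullity applications and the observation that difference sets of subspaces are sums; the one step that needs a little care is the claim that $\dim(N) = \dim(\cC_1^{(s)} \cap \cC_2^{(s)})$, which I expect to be the main (mild) obstacle. The subtlety is the identification of an element of $\cC_i^{(s)} = (\cC_i)_K$ with its restriction to $K$: this restriction map $(\cC_i)_K \to \F^K$ is injective by definition of the cross-section (a codeword of $\cC_i$ with support in $K$ that vanishes on $K$ is zero), so $\cC_i^{(s)}$ may be regarded as a subspace of $\F^K$, and then the constraints in $N$ say precisely that $\x|_K = \y|_K$ lies in the intersection $\cC_1^{(s)} \cap \cC_2^{(s)} \subseteq \F^K$, with the full pair $(\x,\y)$ recoverable from this vector. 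I would also note, for cleanliness, the sanity check that when $K = \emptyset$ all three correction terms involving $K$ vanish and the formula collapses to $\dim(\cC_1) + \dim(\cC_2) = \dim(\cC_1 \oplus \cC_2)$, consistent with the remark $\sfS(\cC_1,\cC_2) = \cC_1 \oplus \cC_2$ in that case.
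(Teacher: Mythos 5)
Your proof is correct and follows essentially the same route as the paper's: rank--nullity applied to the map $(\x,\y)\mapsto \x\star\y$ (whose kernel is identified with $\cC_1^{(s)}\cap\cC_2^{(s)}$, exactly as in the paper's homomorphism $\phi$ on $\cC_1\oplus\widetilde{\cC}_2$), combined with the identity $\dim(\cC_J)=\dim(\cC)-\dim({\cC|}_{\oJ})$ and the observation that ${(\cC_1\star\cC_2)|}_{I_1\cap I_2}=\cC_1^{(p)}+\cC_2^{(p)}$. The only differences are presentational (order of the two steps, and your explicit remark that the difference set of subspaces equals their sum).
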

\begin{proof}
For a code $\cC$, and a subset $J$ of its index set, the kernel
of the projection map $\pi: \cC \rightarrow {\cC|}_\oJ$ is isomorphic
to $\cC_J$, and hence, $\dim(\cC_J) = \dim(\cC) - \dim({\cC|}_\oJ)$.
Thus, taking $\cC = \cC_1 \star \cC_2$, and $J = I_1 \Delta I_2$, 
we find that
$$
\dim(\sfS(\cC_1,\cC_2)) = \dim(\cC_1 \star \cC_2) - 
\dim({(\cC_1 \star \cC_2)|}_{I_1 \cap I_2}) 
= \dim(\cC_1 \star \cC_2) - 
\dim(\cC_1^{(p)} + \cC_2^{(p)}),
$$
since ${(\cC_1 \star \cC_2)|}_{I_1 \cap I_2} = \cC_1^{(p)} + \cC_2^{(p)}$.
So, we must show that $\dim(\cC_1 \star \cC_2) = 
\dim(\cC_1) + \dim(\cC_2) - \dim(\cC_1^{(s)} \cap \cC_2^{(s)})$.

Let $\widetilde{\cC}_2$ be a copy of $\cC_2$ defined on an index set
that is disjoint from $I_1$. For each $\y \in \cC_2$,
denote by $\widetilde{\y}$ its copy in $\widetilde{\cC}_2$. Consider
the homomorphism $\phi: \cC_1 \oplus \widetilde{\cC}_2 
\rightarrow \cC_1 \star \cC_2$ defined by 
$\phi(\x,\widetilde{\y}) = \x \star \y$. Note that 
$\x \star \y = \0$ iff ${\x|}_{I_1 - I_2} = {\y|}_{I_2-I_1} = \0$
and ${\x|}_{I_1 \cap I_2} - {\y|}_{I_1 \cap I_2} = \0$. Equivalently,
$\x \star \y = \0$ iff ${\x|}_{I_1 \cap I_2} \in \cC_1^{(s)}$,
${\y|}_{I_1 \cap I_2} \in \cC_2^{(s)}$, and 
${\x|}_{I_1 \cap I_2} = {\y|}_{I_1 \cap I_2}$. It follows
that the kernel of $\phi$ is isomorphic to 
$$
\{\z:\ \z \in \cC_1^{(s)},\, \z \in \cC_2^{(s)}\}.
$$
which is simply $\cC_1^{(s)} \cap \cC_2^{(s)}$.

Hence, $\dim(\cC_1 \star \cC_2) = \dim(\cC_1 \oplus \widetilde{\cC}_2)
- \dim(\ker(\phi)) = \dim(\cC_1) + \dim(\cC_2) - 
\dim(\cC_1^{(s)} \cap \cC_2^{(s)})$, as desired.
\end{proof}

We will restrict our attention to a particular instance of the 
$\sfS(\cC_1,\cC_2)$ construction, in which we require that the
codes $\cC_i^{(p)}$ and $\cC_i^{(s)}$, $i = 1,2$, take on a specific
form. We need to introduce some notation first.
For each positive integer $r$, set $m_r = (q^r-1)/(q-1)$, and
fix an $r \times m_r$ matrix, which we denote by $D_r$, over $\F_q$,
with the property that each pair of columns of $D_r$ is linearly
independent over $\F_q$. Note that $D_r$ is a parity-check matrix
for an $[m_r,m_r-r]$ Hamming code over $\F_q$ 
(cf.\ \cite[\S~3.3]{vanlint}). Let $\D_r$ denote the dual of this 
Hamming code, \emph{i.e.}, $\D_r$ is the $[m_r,r]$ code over 
$\F_q$ \emph{generated} by $D_r$. The code $\D_r$ is sometimes
referred to as a \emph{simplex code}.

We take a moment to record an important property of the matrix $D_r$
that we will use later. The column vectors of $D_r$ form a maximal
subset of $\F_q^r$ with the property that each pair of vectors 
from the subset is linearly independent over $\F_q$. This is due
to the fact that the number of distinct one-dimensional subspaces
of $\F_q^r$ is precisely $m_r$. Therefore, any (column) vector 
in $\F_q^r$ is a scalar multiple of some column of $D_r$.

Given an $r > 0$, suppose that the codes $\cC_1$ and $\cC_2$, 
defined on the index sets $I_1$ and $I_2$, respectively, are
such that $|I_1 \cap I_2| = m_r$, and for $i = 1,2$, we have
$\cC_i^{(p)} = \D_r$ and $\cC_i^{(s)} = \{\0\}$. In such a case,
$\sfS(\cC_1,\cC_2)$ is called the \emph{$r$-sum} of $\cC_1$ and $\cC_2$,
and is denoted by $\cC_1 \oplus_r \cC_2$. It is convenient to 
extend this definition to the case of $r=0$ as well: when
$|I_1 \cap I_2| = 0$, the $0$-sum $\cC_1 \oplus_0 \cC_2$ is defined 
to be the direct sum $\cC_1 \oplus \cC_2$. 

\begin{example}
Consider the case of codes defined over the binary field $\F_2$.
Note that $\D_1 = \{0,1\}$. Suppose that $|I_1 \cap I_2| = 1$, 
and that the coordinates of $\cC_1$ are $\cC_2$
are ordered so that the index common to $I_1$ and $I_2$ 
corresponds to the last coordinate of $\cC_1$ and the first 
coordinate of $\cC_2$. The conditions necessary for the 1-sum
$\cC_1 \oplus_1 \cC_2$ to be defined can then be stated as
\begin{itemize}
\item[(P1)] $0 \ldots 01$ is not a codeword of $\cC_1$, and the
last coordinate of $\cC_1$ is not identically zero;
\item[(P2)] $10 \ldots 0$ is not a codeword of $\cC_2$, and the
first coordinate of $\cC_2$ is not identically zero.
\end{itemize}
The composite code $\sfS(\cC_1,\cC_2)$ resulting from $\cC_1, \cC_2$
that satisfy (P1), (P2) above was studied in \cite{kashyap}, 
where it was actually called a ``2-sum''.

We would also like to point out that the specialization of our 
$r$-sum operation to the case $r=2$ was called ``$\bar{3}$-sum'' 
in \cite{kashyap}.\footnote{The 2-sum and $\bar{3}$-sum operations 
defined in \cite{kashyap} imposed additional conditions on the 
lengths of the codes involved in the sum, which we have dropped here.}
To add to the confusion, there was in fact an 
operation called ``3-sum'' defined in \cite{kashyap}, but that, 
in a certain sense, dualizes the 2-sum operation we have given 
in this paper.
\end{example}

For $r > 0$, note that if $\cC_i^{(p)}$ and $\cC_i^{(s)}$ ($i = 1,2$)
are in the form needed to define an $r$-sum, then 
$\cC_1^{(p)} + \cC_2^{(p)} = \D_r$, and 
$\cC_1^{(s)} \cap \cC_2^{(s)} = \{\0\}$.
Therefore, as a corollary to Proposition~\ref{dim_sum_prop}, 
we have the following result (which also applies trivially to the 
$r=0$ case).

\begin{corollary}
For $r \geq 0$, if $\cC_1$, $\cC_2$ are such that 
$\cC_1 \oplus_r \cC_2$ can be defined, then 
$$
\dim(\cC_1 \oplus_r \cC_2) = \dim(\cC_1) + \dim(\cC_2) - r.
$$
\label{dim_rsum_cor}
\end{corollary}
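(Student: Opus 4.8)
The plan is to derive Corollary~\ref{dim_rsum_cor} directly from Proposition~\ref{dim_sum_prop} by simplifying the two ``correction terms'' $\dim(\cC_1^{(s)} \cap \cC_2^{(s)})$ and $\dim(\cC_1^{(p)} + \cC_2^{(p)})$ under the hypotheses that define an $r$-sum. First I would handle the $r = 0$ case separately: when $|I_1 \cap I_2| = 0$, by definition $\cC_1 \oplus_0 \cC_2 = \cC_1 \oplus \cC_2$, and $\dim(\cC_1 \oplus \cC_2) = \dim(\cC_1) + \dim(\cC_2)$ by the definition of direct sum, which matches the claimed formula with $r = 0$.

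For $r > 0$, the hypotheses for $\cC_1 \oplus_r \cC_2$ to be defined are precisely $\cC_i^{(p)} = \D_r$ and $\cC_i^{(s)} = \{\0\}$ for $i = 1,2$. Substituting these: $\cC_1^{(s)} \cap \cC_2^{(s)} = \{\0\} \cap \{\0\} = \{\0\}$, so $\dim(\cC_1^{(s)} \cap \cC_2^{(s)}) = 0$; and $\cC_1^{(p)} + \cC_2^{(p)} = \D_r + \D_r = \D_r$, so $\dim(\cC_1^{(p)} + \cC_2^{(p)}) = \dim(\D_r) = r$, since $\D_r$ is the $[m_r, r]$ simplex code generated by $D_r$. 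Plugging these two values into the formula of Proposition~\ref{dim_sum_prop} gives $\dim(\cC_1 \oplus_r \cC_2) = \dim(\cC_1) + \dim(\cC_2) - 0 - r = \dim(\cC_1) + \dim(\cC_2) - r$, as desired.

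There is essentially no obstacle here: the corollary is a mechanical substitution into the already-proved general dimension formula, and the only facts needed are the definition of the $r$-sum (which pins down $\cC_i^{(p)}$ and $\cC_i^{(s)}$), the observation $\D_r + \D_r = \D_r$, and $\dim(\D_r) = r$ — all of which are recorded in the paragraphs immediately preceding the corollary. The one tiny thing worth stating explicitly in the write-up is that the $r = 0$ instance is covered ``trivially'' exactly because the direct-sum dimension identity is the $r = 0$ specialization of the same arithmetic. So the proof is just: split on $r = 0$ versus $r > 0$, and in each case read off the correction terms and invoke Proposition~\ref{dim_sum_prop}.
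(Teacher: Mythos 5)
Your proof is correct and follows exactly the paper's route: the paper also notes that the $r$-sum hypotheses force $\cC_1^{(s)} \cap \cC_2^{(s)} = \{\0\}$ and $\cC_1^{(p)} + \cC_2^{(p)} = \D_r$ (with $\dim(\D_r) = r$), then substitutes into Proposition~\ref{dim_sum_prop}, observing the $r=0$ case holds trivially. Nothing is missing.
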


An elementary property of direct sums (\emph{i.e.}, 0-sums) is that 
a code $\cC$ is expressible as a direct sum of smaller codes
if and only if there exists a partition $(J,\oJ)$ of the index set 
of $\cC$ such that $\dim({\cC|}_J) + \dim({\cC|}_\oJ) - \dim(\cC) = 0$.
This property extends beautifully to $r$-sums in general. 

\begin{theorem}
Let $\cC$ be a linear code over $\F_q$, defined on the index set $I$,
and let $r$ be a positive integer. Then, the following statements
are equivalent.
\begin{itemize}
\item[(a)] $\cC = \cC_1 \oplus_r \cC_2$ for some codes $\cC_1$, $\cC_2$.
\item[(b)] There exists a partition $(J,\oJ)$ of $I$, with 
$\min\{|J|,|\oJ|\} \geq r$, such that 
$$
\dim({\cC|}_J) + \dim({\cC|}_\oJ) - \dim(\cC) = r.
$$
\end{itemize}
\label{rsum_theorem}
\end{theorem}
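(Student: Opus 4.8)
The plan is to prove the equivalence by establishing both directions, with the dimension count of Corollary~\ref{dim_rsum_cor} handling the easy implication and a structural argument handling the hard one.

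For the direction (a) $\Rightarrow$ (b): Suppose $\cC = \cC_1 \oplus_r \cC_2$ with $\cC_i$ defined on $I_i$, where $|I_1 \cap I_2| = m_r$ and $I = I_1 \Delta I_2$. Let $J = I_1 - I_2$ and $\oJ = I_2 - I_1$, so that $(J,\oJ)$ partitions $I$. First I would argue $\min\{|J|,|\oJ|\} \geq r$: since $\cC_i^{(p)} = \D_r$ has dimension $r$, each $\cC_i$ must surject onto an $r$-dimensional space under projection to $I_1 \cap I_2$, forcing $\dim(\cC_i) \geq r$; moreover the coordinates in $I_i - I_{3-i}$ alone must carry enough information, and more carefully, since $\cC_i^{(s)} = \{\0\}$, the projection $\cC_i \to \cC_i|_{I_i \cap I_{3-i}}$ is injective on the kernel's complement — I would show $\dim(\cC_i|_{I_i - I_{3-i}}) \geq r$ using that $\cC_i^{(s)} = \{\mathbf 0\}$ implies no nonzero codeword of $\cC_i$ is supported on $I_i \cap I_{3-i}$, together with a column-space argument via $D_r$. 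Then I would compute $\dim(\cC|_J)$ and $\dim(\cC|_{\oJ})$ in terms of the $\cC_i$ and use Corollary~\ref{dim_rsum_cor} to get $\dim(\cC) = \dim(\cC_1) + \dim(\cC_2) - r$, and combine these to obtain $\dim(\cC|_J) + \dim(\cC|_{\oJ}) - \dim(\cC) = r$.

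For the direction (b) $\Rightarrow$ (a): Given the partition $(J,\oJ)$ with the stated dimension equation, I need to construct $\cC_1$ on $I_1 = J \disj M$ and $\cC_2$ on $I_2 = \oJ \disj M$, where $M$ is a fresh index set of size $m_r$, such that $\cC_i^{(p)} = \D_r$, $\cC_i^{(s)} = \{\mathbf 0\}$, and $\sfS(\cC_1,\cC_2) = \cC$. The natural candidate: let $\cS^*$ be the minimal state space from Section~\ref{minimal_section} for the tree decomposition consisting of a single edge separating $J$ from $\oJ$; its dimension is exactly $\dim(\cC|_J) + \dim(\cC|_{\oJ}) - \dim(\cC) = r$ by (\ref{dimSe*_alt}). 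Pick an isomorphism from the coset-space $\cC/(\cC_J \oplus \cC_{\oJ})$ to $\F_q^r$, giving a surjective "state map" $\sigma:\cC \to \F_q^r$ with kernel $\cC_J \oplus \cC_{\oJ}$. Then set $\cC_1 = \{(\c|_J,\, \text{(vector in } \D_r \text{ with coefficient data } \sigma(\c))):\ \c\in\cC\}$, embedding the $r$-dimensional state into the simplex code $\D_r$ of length $m_r$ — since $\D_r$ is generated by $D_r$, there is an injection $\F_q^r \hookrightarrow \D_r$ (namely $\u \mapsto \u^{\mathsf T} D_r$), and I define $\cC_1$ via this injection; similarly $\cC_2$ using $-\sigma(\c)$ or $\sigma(\c)$ on the $\oJ$ side so that the $\star$-operation (which subtracts on $M$) recovers codewords. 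Then I would verify $\cC_1^{(p)} = \D_r$ (the image of $\sigma$ is all of $\F_q^r$, so the $M$-projection is all of $\u^{\mathsf T}D_r$, i.e.\ $\D_r$), that $\cC_1^{(s)} = \{\mathbf 0\}$ (a codeword supported on $M$ would have $\c|_J = \0$, hence $\c \in \cC_J$, hence $\sigma(\c)=\0$, hence the $M$-part is $\0$), and finally that $\sfS(\cC_1,\cC_2) = (\cC_1 \star \cC_2)_J = \cC$ — a codeword $\x\star\y$ vanishes on $M$ iff the $\D_r$-parts of $\x$ and $\y$ agree, iff $\sigma$ of the two underlying codewords agree (using injectivity of $\u\mapsto\u^{\mathsf T}D_r$), iff their difference lies in $\cC_J\oplus\cC_{\oJ}$, which lets me reconstruct a single codeword of $\cC$ projecting correctly.

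The main obstacle I expect is the (b) $\Rightarrow$ (a) construction: getting the bookkeeping exactly right so that (i) the shared coordinates genuinely realize the simplex code $\D_r$ rather than just some $r$-dimensional code, and (ii) the cross-section ${(\cC_1\star\cC_2)}_{I_1\Delta I_2}$ comes out equal to $\cC$ and not merely isomorphic to it or a sub/supercode. The key technical fact making (i) work is the recorded property that every vector in $\F_q^r$ is a scalar multiple of a column of $D_r$ — this is what guarantees the injection $\u \mapsto \u^{\mathsf T} D_r$ lands inside $\D_r$ with the right projection properties and that no degeneracy arises. The condition $\min\{|J|,|\oJ|\} \geq r$ is needed to ensure $\dim(\cC|_J) \geq r$ and $\dim(\cC|_{\oJ}) \geq r$, so that $\sigma$ restricted appropriately behaves well; I would check this is automatically consistent with the dimension equation. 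The remaining verifications are routine linear algebra of the kind already exercised in Proposition~\ref{dim_sum_prop}.
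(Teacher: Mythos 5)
Your (a)~$\Rightarrow$~(b) argument is essentially the paper's own: identify $\cC|_J$ with $\cC_1$ (using $\cC_1^{(s)}=\{\0\}$ for injectivity of the projection onto $J$, and $\cC_1^{(p)}=\cC_2^{(p)}$ to see that every such projection is realized by a word of $\cC$), do the same on $\oJ$, and invoke Corollary~\ref{dim_rsum_cor}; the bound $\min\{|J|,|\oJ|\}\geq r$ then follows from the dimension equation alone, as you note (it is a consequence of (b)'s equation rather than an extra hypothesis your construction needs). Your (b)~$\Rightarrow$~(a) construction, however, is genuinely different from the paper's. The paper works with an explicit generator matrix: it brings $G$ into the partitioned rref form (\ref{rref_eq}), isolates the rank-$r$ block $B$, replaces it by a matrix $X$ whose rows are the corresponding combinations of the rows of $D_r$, and reads off generator matrices $G_1,G_2$ as in (\ref{G1_def})--(\ref{G2_def}). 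You instead argue invariantly via the one-edge minimal state space: take the surjection $\sigma:\cC\rightarrow\F_q^r\cong\cC/(\cC_J\oplus\cC_\oJ)$ and set $\cC_1=\{(\c|_J,\,\sigma(\c)^{\mathsf T}D_r):\c\in\cC\}$, $\cC_2=\{(\sigma(\c)^{\mathsf T}D_r,\,\c|_\oJ):\c\in\cC\}$. This works, and your verifications go through: surjectivity of $\sigma$ gives $\cC_i^{(p)}=\D_r$; a word of $\cC_1$ vanishing on $J$ comes from a codeword supported on $\oJ$ (you wrote ``$\c\in\cC_J$'' where it should be the $\cC_\oJ$-part --- harmless, since $\cC_J\oplus\cC_\oJ=\ker\sigma$ either way), giving $\cC_i^{(s)}=\{\0\}$; and $\x\star\y$ vanishes on the shared coordinates iff $\sigma(\c)=\sigma(\c')$, i.e.\ $\c-\c'\in\cC_J\oplus\cC_\oJ$, whence $(\c|_J,\c'|_\oJ)=(\c-\b)|_I\in\cC$ for the $\oJ$-supported part $\b$ of $\c-\c'$, while every $\c\in\cC$ arises from the pair built from $\c$ itself. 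Two small corrections: the sign must be $+\sigma$ on \emph{both} sides (since $\star$ subtracts on $I_1\cap I_2$; the $-\sigma$ variant you also float fails already for the length-2 repetition code over $\F_3$), and the only property of $D_r$ you actually use is that its $r$ rows are independent and generate $\D_r$ --- the ``every vector of $\F_q^r$ is a scalar multiple of a column'' fact you single out is not needed here (the paper needs it only later, in Lemma~\ref{appD_lemma}). As for what each route buys: yours is cleaner and makes transparent that the hidden coordinates carry exactly the minimal state space $\cS_e^*$ of the single-edge tree decomposition; the paper's matrix construction is preferred there because it is directly algorithmic (its run-time is analyzed at the end of Section~\ref{decomp_section}) and because the explicit matrices $\bar{G},G_1,G_2,B,X$ are reused verbatim in the proof of Lemma~\ref{appD_lemma} underpinning the recursive construction of Section~\ref{new_construct_section}.
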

\begin{proof}
\underline{(a) $\Rightarrow$ (b)}: See Appendix~\ref{rsum_app}. \\

\noindent \underline{(b) $\Rightarrow$ (a)}: We give here a complete proof 
of this direction of the theorem, as it gives an explicit construction of 
codes $\cC_1$, $\cC_2$ such that $\cC = \cC_1 \oplus_r \cC_2$, given
a partition $(J,\oJ)$ as in (b). The proof generalizes ideas from 
similar constructions presented in \cite{kashyap}.

Let $(J,\oJ)$ be a partition of $I$ such that 
$\dim({\cC|}_J)+\dim({\cC|}_\oJ)-\dim(\cC) = r$. 
Set $n = |I|$ and $k = \dim(\cC)$, and let $G$ be a $k \times n$ 
generator matrix for $\cC$. Without loss of generality, 
we may assume that the columns of $G$ are ordered so 
that the first $|J|$ columns are indexed by the elements of $J$, 
and the rest by the elements of $\oJ$. In the following exposition,
we will often permute the columns of $G$ to bring the matrix into some 
desired form. Whenever this is the case, it will be tacitly assumed
that column indices migrate with the columns.

Let ${G|}_J$ and ${G|}_{\oJ}$ denote the restrictions of $G$ to 
the columns indexed by the elements of $J$ and $\oJ$, respectively;
thus, $G = \left[{G|}_J \ \ {G|}_{\oJ}\right]$. Let $\rank({G|}_J) = k_1$ 
and $\rank({G|}_{\oJ}) = k_2$; by our assumption on $(J,\oJ)$, we have
we have $k_1 + k_2 = k + r$. 

Bring $G$ into reduced row-echelon form (rref) over $\F_q$. 
Permuting within the columns of ${G|}_J$ and within those of
${G|}_\oJ$ if necessary, $\rref(G)$ may be assumed to be of the form
\begin{equation}
\overline{G} = 
\left[
\begin{array}{cccc}
I_{k_1} & A & \O & B \\
\O & \O & I_{k-k_1} & C
\end{array}
\right],
\label{rref_eq}
\end{equation}
where $I_j$, for $j = k_1,k_2-r$, denotes the $j \times j$ identity matrix,
$A$ is a $k_1 \times (|J| - k_1)$ matrix, $B$ is a 
$k_1 \times (|\oJ| - k + k_1)$ matrix, 
$C$ is a $(k-k_1) \times (|\oJ| - k + k_1)$ matrix, 
and the $\O$'s denote all-zeros matrices of appropriate sizes.

The fact that the submatrix 
$
\left[
\begin{array}{cc}
\O & B \\ I_{k-k_1} & C
\end{array}
\right]
$
must have rank equal to $\rank({G|}_{\oJ}) = k_2$ implies that 
$B$ must have rank $k_2 - (k-k_1) = r$. 
Hence, $B$ has $r$ linearly independent rows, 
call them $\b_1, \ldots, \b_r$, which form a basis of 
the row-space of $B$. Permuting the first $k_1$ rows of $\bar{G}$ 
if necessary, we may assume that $\b_1, \ldots, \b_r$ constitute 
the first $r$ rows of $B$. (Permuting these rows of $\bar{G}$ will
also permute the rows of the $I_{k_1}$ matrix, but the effects
of this can be negated by appropriately permuting the first $k_1$ columns 
of $\bar{G}$.) Any row of $B$ is uniquely expressible as a linear
combination (over $\F_q$) of $\b_1,\ldots,\b_r$. In particular, 
for $i=1,2,\ldots,k_1$, the $i$th row of $B$ can be uniquely expressed
as $\sum_{j=1}^r \alpha_{i,j} \b_j$ for some $\alpha_{i,j} \in \F_q$.

Let us denote by $\d_1, \ldots, \d_r$, the rows of the 
$r \times m_r$ generator matrix, $D_r$, of the code $\D_r$. 
Let $X$ be the $k_1 \times m_r$ matrix such that for $i=1,2,\ldots,k_1$,
the $i$th row of $X$ equals $\sum_{j=1}^r \alpha_{i,j} \d_j$, where the
$\alpha_{i,j}$'s are such that the $i$th row of $B$ is 
$\sum_{j=1}^r \alpha_{i,j} \b_j$. Thus, the row-space of $X$ is
the span of $\d_1,\ldots,\d_r$, \emph{i.e.}, it is the code $\D_r$. To the 
columns of $X$, we assign indices from some set $I_X$ disjoint from $I$.

Now, define the $k_1 \times (|J|+m_r)$ matrix
\beq
G_1 = \left[\begin{array}{ccc}
I_{k_1} & A & X \\
\end{array}
\right],
\label{G1_def}
\eeq
allowing the submatrix $[I_{k_1} \ \ A]$ to retain its column indices 
from $\bar{G}$. Also, define the $k \times (|\oJ|+m_r)$ matrix
\beq
G_2 = \left[\begin{array}{ccc} X & \O & B \\ 
\O & I_{k-k_1} & C \end{array}\right],
\label{G2_def}
\eeq
again allowing the submatrix $
\left[
\begin{array}{cc}
\O & B \\ I_{k-k_1} & C
\end{array}
\right]
$
to retain its column indices from $\bar{G}$.
Thus, the index set of the columns of $G_1$ is $I_1 \define J \disj I_X$,
while that of the columns of $G_2$ is $I_2 \define I_X \disj \oJ$.

Finally, for $i = 1,2$, let $\cC_i$ denote the code over $\F_q$
generated by $G_i$. The following facts about $\cC_1$ and $\cC_2$ may
be verified:
\begin{itemize}
\item[(i)] $\dim(\cC_i) = \rank(G_i) = k_i$, $i = 1,2$.
\item[(ii)] $\cC_1 \oplus_r \cC_2$ can be defined, so that by 
Corollary~\ref{dim_rsum_cor}, $\dim(\cC_1 \oplus_r \cC_2) = k_1 + k_2 - r
= k = \dim(\cC)$.
\item[(iii)] All rows of $\bar{G}$ are in $\cC_1 \oplus_r \cC_2$. Since 
$\bar{G}$ generates the same code as $G$ (recall that column indices
get permuted along with columns), we see that $\cC_1 \oplus_r \cC_2$
contains all the codewords of $\cC$.
\end{itemize}
We leave the details of the routine verification of the above facts
to the reader. It only remains to point out that facts (ii) and (iii) 
above show that $\cC_1 \oplus_r \cC_2 = \cC$, thus completing the proof
of the implication (b) $\Rightarrow$ (a).
\end{proof}

The procedure described in the above proof can be formalized into an 
algorithm that takes as input a $k \times n$ generator matrix $G$ 
(over $\F_q$) for $\cC$, and a partition $(J,\oJ)$ of the index set of $\cC$, 
and produces as output generator matrices of two codes $\cC_1$ and $\cC_2$
(and their associated index sets) such that $\cC = \cC_1 \oplus_r \cC_2$,
where $r = \dim({\cC|}_J) + \dim({\cC|}_\oJ) - \dim(\cC)$.
The run-time complexity of this procedure is determined by the following:
\begin{itemize}
\item an rref computation to find $\bar{G}$ as in (\ref{rref_eq});
this can be carried out in $O(k^2n)$ time, 
which is the run-time complexity of bringing a $k \times n$ matrix to 
reduced row-echelon form via elementary row operations;
\item the computations required to identify a basis ($\b_1,\ldots,\b_r$) 
of the row-space of the matrix $B$, and correspondingly the coefficients
$\alpha_{i,j}$; this could be done by computing the rref of $B$, which
would also take $O(k^2n)$ time;
\item the computations needed to determine the $k_1 \times m_r$ matrix $X$; 
each row of the matrix requires $O(rm_r)$ computations, and there are
$k_1 = O(|J|)$ rows, so the computation of $X$ takes $O(|J|rm_r) =
O(|J|rq^r)$ time.
\end{itemize}
Therefore, the entire procedure can be carried out in
$O(k^2n+|J| r q^r)$ time. It is worth noting that the 
run-time complexity of the procedure is polynomial in 
$n$, $k$ and $q$, but exponential in $r$.

\section{A Construction of $\cM(\cC;T,\omega)$ via 
Code Decompositions\label{new_construct_section}}

\begin{figure}
\epsfig{file=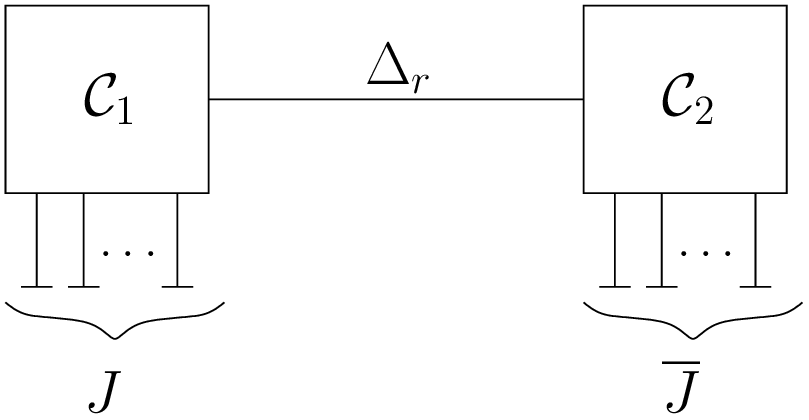, width=5.5cm}
\caption{A tree realization of an $r$-sum decomposition.}
\label{rsum_fig}
\end{figure}

The above procedure for determining an $r$-rum decomposition of a given
code forms the basis of a new construction of minimal tree realizations 
that we present in this section. The key observation behind this
construction is that if a code $\cC$ has a partition $(J,\oJ)$ of
its index set such that $\dim({\cC|}_J) + \dim({\cC|}_\oJ) - \dim(\cC) = r$,
then $\cC$ has an essential tree realization of the form depicted 
in Figure~\ref{rsum_fig}. The tree in the figure consists of a 
single edge $e = \{v_1,v_2\}$, the state space $\cS_e$ is the code $\D_r$,
and the local constraint codes at the two vertices are the codes $\cC_1$
and $\cC_2$ such that $\cC_1 \oplus_r \cC_2 = \cC$. In fact, this
is the minimal realization $\cM(\cC;T,\omega)$,
for the tree $T$ consisting of the single edge $e = \{v_1,v_2\}$, 
and the index map $\omega$ such that $\omega^{-1}(v_1) = J$ and 
$\omega^{-1}(v_2) = \oJ$. This is simply because 
$\dim(\cS_e) = \dim(\D_r) = r$, so by virtue of 
(\ref{dimSe*_alt}), $\cS_e$ has the same dimension as the state space
$\cS^*_e$ in the minimal realization $\cM(\cC;T,\omega)$. So, by 
the uniqueness of minimal tree realizations, the tree realization
depicted in Figure~\ref{rsum_fig} is $\cM(\cC;T,\omega)$.

To summarize, if $\cC$ is a code defined on the index set $I$, 
and $(T,\omega)$ is a tree decomposition of $I$ such that $T$ consists
of the single edge $e=\{v_1,v_2\}$, then we may construct $\cM(\cC;T,\omega)$
as follows. Set $J = \omega^{-1}(v_1)$ and $\oJ = \omega^{-1}(v_2)$,
and compute $r = \dim({\cC|}_J) + \dim({\cC|}_\oJ) - \dim(\cC)$. 
Assign an index set $I_\D$ that is disjoint from $I$ to the code $\D_r$. 
Use the procedure in the proof of Theorem~\ref{rsum_theorem} to
determine codes $\cC_1$ and $\cC_2$, defined on the respective index sets 
$I_1 = J \disj I_\D$ and $I_2 = \oJ \disj I_\D$, such that 
$\cC = \cC_1 \oplus_r \cC_2$. For $i=1,2$,\, assign $\cC_i$ to be the local 
constraint code at vertex $v_i$, and assign $\D_r$ to be the state space
at edge $e$. The resulting tree model $(T,\omega,\D_r,C_1,C_2)$ is
the minimal tree realization $\cM(\cC;T,\omega)$.

Before describing how the construction may be extended to the case of
trees with more than one edge, we deal with the trivial case of 
trees without any edges. If $T$ is a tree consisting of a single 
vertex $v$, and no edges, then given any code $\cC$ defined on some 
index set $I$, there is only one way of realizing $\cC$ on $T$. 
This is the realization $(T,\omega,C_v)$, where $\omega$ is
the unique mapping $\omega: I \rightarrow \{v\}$, and $C_v$ is the
code $\cC$ itself. Of course, this is also the minimal realization 
$\cM(\cC;T,\omega)$.

\begin{figure}
\epsfig{file=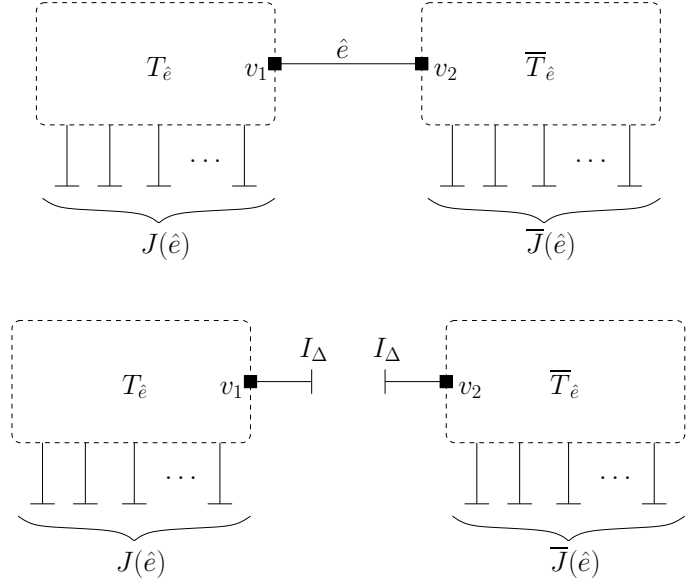, width=9cm}
\caption{Depiction of the manner in which the tree decompositions 
$(T_\he,\omega_1)$ and $(\oT_\he,\omega_2)$ are obtained from $(T,\omega)$.}
\label{rec_constr_fig}
\end{figure}

At this point, we know how to construct $\cM(\cC;T,\omega)$, for any
code $\cC$, and any tree decomposition $(T,\omega)$ such that $T$ has at 
most one edge. From this, we can recursively construct $\cM(\cC;T,\omega)$ 
for any $\cC$ and any $(T,\omega)$, as we now describe.

Suppose that we know how to construct $\cM(\cC;T,\omega)$ for any $\cC$, 
and any $(T,\omega)$ such that $T$ has at most $\eta-1$ edges, for some
integer $\eta \geq 2$. Let $\cC$ be a code defined on the index set $I$,
and let $(T,\omega)$ be a tree decomposition such that $|E(T)| = \eta$.
Pick any $\he = \{v_1,v_2\} \in E(T)$, and as usual, 
let $T_\he$ and $\oT_\he$ be the two components of $T-\he$. 
We will assume that $v_1 \in V(T_\he)$ and $v_2 \in V(\oT_\he)$. 
Let $J(\he) = \omega^{-1}(V(T_\he))$ and 
$\oJ(\he) = \omega^{-1}(V(\oT_\he))$. Compute 
\beq
r = \dim({\cC|}_{J(\he)}) + \dim({\cC|}_{\oJ(\he)}) - \dim(\cC),
\label{r_def}
\eeq
which determines the code $\D_r$. Assign $\D_r$ an index set 
$I_\D$ that is disjoint from $I$. Use the procedure in the proof of 
Theorem~\ref{rsum_theorem} to determine codes $\cC_1$ and $\cC_2$, 
defined on the respective index sets $I_1 = J(\he) \disj I_\D$ and 
$I_2 = \oJ(\he) \disj I_\D$, such that $\cC = \cC_1 \oplus_{r} \cC_2$. 

Now, define the index maps $\omega_1: I_1 \rightarrow V(T_\he)$
and $\omega_2: I_2 \rightarrow V(\oT_\he)$ as follows 
(see Figure~\ref{rec_constr_fig}):
\begin{eqnarray}
\omega_1(i) &=&
\begin{cases}
\omega(i), & \text{ if $i \in J(\he)$} \\
v_1, & \text{ if $i \in I_\D$}
\end{cases} 
\label{omega1_def}
\\
\omega_2(i) &=&
\begin{cases}
\omega(i), & \text{ if $i \in \oJ(\he)$} \\
v_2, & \text{ if $i \in I_\D$}
\end{cases} 
\label{omega2_def}
\end{eqnarray}
Thus, $(T_\he,\omega_1)$ and $(\oT_\he,\omega_2)$ are tree decompositions
of the index sets of $\cC_1$ and $\cC_2$, respectively. As neither
$E(T_\he)$ nor $E(\oT_\he)$ contains the edge $\he$, we have 
$|E(T_\he)| \leq \eta-1$ and $|E(\oT_\he)| \leq \eta-1$. Therefore, by
our assumption, we know how to construct $\cM(\cC_1;T_\he,\omega_1)$ 
and $\cM(\cC_2;\oT_\he,\omega_2)$. Let
\begin{eqnarray}
\cM(\cC_1;T_\he,\omega_1) &=& 
\left(T_\he, \omega_1,\ (\cS_e^{(1)},\, e \in E(T_\he)),\
(C_v^{(1)},\, v \in V(T_\he))\right), \label{minC1_form}
\\
\cM(\cC_2;\oT_\he,\omega_2) &=& 
\left(\oT_\he, \omega_2,\ (\cS_e^{(2)},\, e \in E(\oT_\he)),\
(C_v^{(2)},\, v \in V(\oT_\he))\right). \label{minC2_form}
\end{eqnarray}

Finally, set 
$\G^* = \left(T,\omega,\,(\cS_e,\, e \in E(T)),\,(C_v,\, v \in V(T))\right)$, 
where
\beq
\cS_e = 
\begin{cases}
\cS_e^{(1)}, & \text{ if } e \in E(T_\he) \\
\D_{r}, & \text{ if } e = \he \\
\cS_e^{(2)}, & \text{ if } e \in E(\oT_\he),
\end{cases}
\label{Se_def}
\eeq
and
\beq
C_v =
\begin{cases}
C_v^{(1)}, & \text{ if } v \in V(T_\he) \\
C_v^{(2)}, & \text{ if } v \in V(\oT_\he).
\label{Cv_def}
\end{cases}
\eeq
Figure~\ref{gamma_star} contains a depiction of $\G^*$. 
\begin{figure}
\epsfig{file=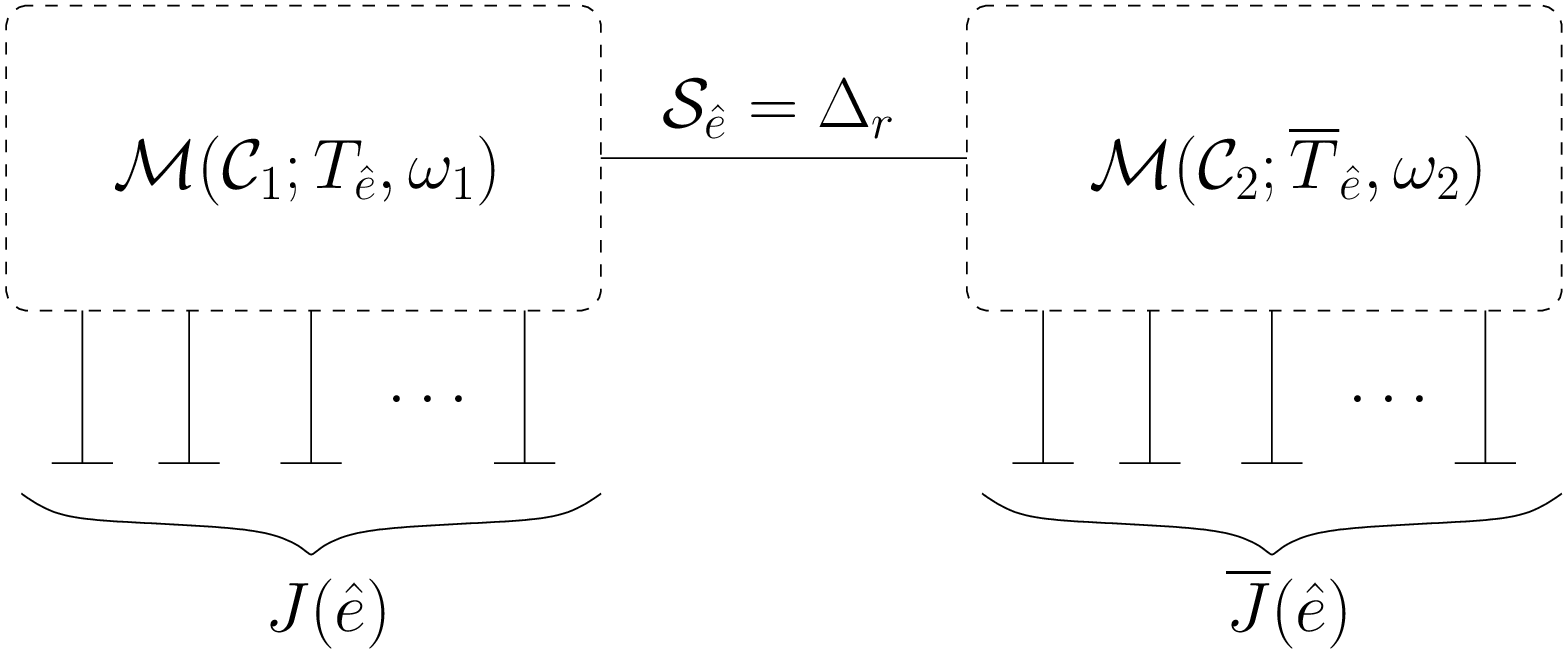, width=8.5cm}
\caption{A depiction of the construction of $\G^*$ from 
$\cM(\cC_1;T_\he;\omega_1)$ and $\cM(\cC_2;\oT_\he,\omega_2)$.}
\label{gamma_star}
\end{figure}
It is easy to see that $\G^*$ is a tree realization of $\cC$. Indeed, 
$\cM(\cC_1;T_\he,\omega_1)$ is a realization of $\cC_1$, and 
$\cM(\cC_2;\oT_\he,\omega_2)$ is a realization of $\cC_2$, 
and hence (as should be clear from Figure~\ref{gamma_star}), 
$\G^*$ is a realization of $\cC_1 \oplus_r \cC_2 = \cC$. 
It is not immediately obvious that $\G^*$ is actually $\cM(\cC;T,\omega)$, 
but this is in fact true, as stated in the following proposition, 
a proof of which is given in Appendix~\ref{rec_constr_app}.

\begin{proposition}
$\G^*$ is the minimal tree realization $\cM(\cC;T,\omega)$.
\label{rec_constr_prop}
\end{proposition}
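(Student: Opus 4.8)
The plan is to show that $\G^*$ has, at every edge, the state-space dimension prescribed by the minimal realization, and then invoke the uniqueness of $\cM(\cC;T,\omega)$ together with the remark (following~(\ref{dimSe*_alt})) that a tree realization whose state-space dimensions all match~(\ref{dimSe*}) or~(\ref{dimSe*_alt}) must be the minimal one. So the goal reduces to a single dimension count at each edge $e \in E(T)$.

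\textbf{First, the edge $\he$.} By construction $\cS_{\he} = \D_r$, so $\dim(\cS_{\he}) = r$, and by the choice of $r$ in~(\ref{r_def}) together with~(\ref{dimSe*_alt}) this equals $\dim(\cS_{\he}^*)$ in $\cM(\cC;T,\omega)$. So the edge $\he$ is handled immediately.

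\textbf{Second, an edge $e \in E(T_\he)$ (the case $e \in E(\oT_\he)$ being symmetric).} Here $\cS_e = \cS_e^{(1)}$, the state space at $e$ in the \emph{already-minimal} realization $\cM(\cC_1;T_\he,\omega_1)$, so by~(\ref{dimSe*_alt}) applied to $\cC_1$ and $(T_\he,\omega_1)$,
\[
\dim(\cS_e) = \dim({\cC_1|}_{J_1(e)}) + \dim({\cC_1|}_{\bar{J}_1(e)}) - \dim(\cC_1),
\]
where $J_1(e), \bar{J}_1(e)$ is the partition of $I_1 = J(\he) \disj I_\D$ induced by $e$ inside $T_\he$. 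What must be shown is that this equals $\dim({\cC|}_{J(e)}) + \dim({\cC|}_{\bar{J}(e)}) - \dim(\cC)$, the quantity from~(\ref{dimSe*_alt}) for $\cC$ and $(T,\omega)$. The key geometric point (visible in Figure~\ref{rec_constr_fig}) is that removing $e$ from $T$ and removing $e$ from $T_\he$ produce compatible partitions: one of the two parts, say $J(e)$, lies entirely inside $V(T_\he)$ and satisfies $J(e) = J_1(e)$ (the side of $e$ not containing $v_1$ picks up no elements of $I_\D$), while the other part $\bar{J}(e)$ is $\bar{J}_1(e)$ with the extra indices $I_\D$ replaced by $\oJ(\he)$. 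Because $\cC = \cC_1 \oplus_r \cC_2$ is obtained as a cross-section of $\cC_1 \star \cC_2$ along $I_\D$, projecting $\cC$ onto a subset of $I$ that lies entirely on the $\cC_1$-side and does not meet $I_\D$ gives the same code as projecting $\cC_1$ onto that subset — because every codeword of $\cC$ extends (via the $\cC_1 \oplus_r \cC_2$ construction) to a codeword of $\cC_1$ with a fixed behaviour on $I_\D$, and conversely. Concretely, $\cC|_{J(e)} = \cC_1|_{J_1(e)}$. For the complementary side, one uses that $\cC|_{\bar J(e)}$ and $\cC_1|_{\bar J_1(e)}$ differ only in that the coordinates of $I_\D$ in the latter are glued onto $\cC_2|_{\oJ(\he)}$ in the former, and a short computation with Proposition~\ref{dim_sum_prop} (or directly with Corollary~\ref{dim_rsum_cor}, applied to the sub-$r$-sum realizing $\cC|_{\bar J(e)}$) relates the two dimensions by exactly $\dim(\cC_1) - \dim(\cC) + \big(\dim(\cC_2) - r\big) = \dim(\cC_1) - \dim(\cC)$ after cancellation. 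Summing the two sides, the $\dim(\cC_1)$ and $r$ bookkeeping cancels against $\dim(\cC) = \dim(\cC_1) + \dim(\cC_2) - r$ (Corollary~\ref{dim_rsum_cor}), leaving precisely $\dim({\cC|}_{J(e)}) + \dim({\cC|}_{\bar{J}(e)}) - \dim(\cC)$.

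\textbf{Main obstacle and finish.} The only real work is the bookkeeping lemma asserting $\cC|_{J(e)} = \cC_1|_{J_1(e)}$ and the analogous relation on the complementary side — i.e.\ that restricting the $r$-sum to one ``branch'' of the tree behaves like restricting the corresponding factor code, modulo the glued simplex-code coordinates. This is where one must be careful about what the $\star$ and cross-section operations do to projections; I would isolate it as a small auxiliary claim proved directly from the definition of $\sfS(\cC_1,\cC_2) = (\cC_1 \star \cC_2)_{I_1 \Delta I_2}$, using that every $\c \in \cC$ lifts to a pair $(\x,\y) \in \cC_1 \times \cC_2$ agreeing on $I_\D$. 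Once every edge is verified, all state-space dimensions of $\G^*$ agree with~(\ref{dimSe*_alt}) for $\cM(\cC;T,\omega)$; since $\G^*$ is a tree realization of $\cC$ extending $(T,\omega)$ (already observed before the proposition), the uniqueness criterion recorded after~(\ref{dimSe*_alt}) forces $\G^* = \cM(\cC;T,\omega)$, and the local constraints match automatically. (One should also remark that $\G^*$ is essential, which is immediate since its two constituent pieces are minimal hence essential and the simplex-code state at $\he$ is fully exercised by the $r$-sum; but this is not strictly needed for the uniqueness argument.)
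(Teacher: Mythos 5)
Your top-level argument is the same as the paper's: verify that every state space of $\G^*$ has the dimension prescribed by (\ref{dimSe*_alt}) --- immediately at $\he$ via (\ref{r_def}), and at $e \in E(T_\he)$ (resp.\ $E(\oT_\he)$) by combining minimality of $\cM(\cC_1;T_\he,\omega_1)$ with an identity transferring $\dim({\cC_1|}_{J_1}) + \dim({\cC_1|}_{I_1-J_1}) - \dim(\cC_1)$, for $J_1 = J(e) \subseteq J(\he)$, to the corresponding quantity for $\cC$ --- and then invoke the uniqueness remark following (\ref{dimSe*_alt}). Where you genuinely depart from the paper is in that transfer identity. The paper isolates it as Lemma~\ref{appD_lemma} and proves it by explicit row/column manipulations of the matrices $\bar{G}$, $G_1$, $G_2$ from the proof of Theorem~\ref{rsum_theorem}, the key matrix fact being that $B$ and $X$ have identical column spaces; the statement there is thus tied to the particular codes produced by that procedure. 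You argue structurally from the definition of the $r$-sum, and this route does work --- indeed it proves the identity for an \emph{arbitrary} decomposition $\cC = \cC_1 \oplus_r \cC_2$ and any $J_1 \subseteq J(\he)$: one has ${\cC|}_{J_1} = {\cC_1|}_{J_1}$ (lift any $\x \in \cC_1$ to a codeword of $\cC$ using $\cC_1^{(p)} = \cC_2^{(p)}$), and on the complementary side ${\cC|}_{I - J_1} = \sfS({\cC_1|}_{I_1 - J_1},\, \cC_2)$, to which Proposition~\ref{dim_sum_prop} applies: since $\cC_2^{(s)} = \{\0\}$ and $({\cC_1|}_{I_1-J_1})|_{I_\D} + \cC_2^{(p)} = \D_r$, it gives $\dim({\cC|}_{I-J_1}) = \dim({\cC_1|}_{I_1-J_1}) + \dim(\cC_2) - r = \dim({\cC_1|}_{I_1-J_1}) + \dim(\cC) - \dim(\cC_1)$, and summing the two sides yields exactly (\ref{appD_lemma_eq1}). (Equivalently: $\cC_{J_1} = (\cC_1)_{J_1}$, using $\cC_2^{(s)} = \{\0\}$, together with $\dim(\cC_J) = \dim(\cC) - \dim({\cC|}_{\oJ})$.) So your approach buys a cleaner, construction-independent version of Lemma~\ref{appD_lemma}, at the price of having to make this cross-section bookkeeping precise --- which is exactly the auxiliary claim you flagged but did not carry out.

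Two corrections to that bookkeeping as you wrote it. First, Corollary~\ref{dim_rsum_cor} cannot be applied to a ``sub-$r$-sum realizing ${\cC|}_{\bar J(e)}$'': the projection ${\cC_1|}_{I_1 - J_1}$ may have a nonzero cross-section on $I_\D$, so the pair $({\cC_1|}_{I_1-J_1}, \cC_2)$ need not satisfy the conditions under which an $r$-sum is even defined; the general $\sfS$ construction and Proposition~\ref{dim_sum_prop}, as above, is the right tool. Second, the cancellation ``$\dim(\cC_1) - \dim(\cC) + (\dim(\cC_2) - r) = \dim(\cC_1) - \dim(\cC)$'' is garbled --- the left-hand side equals $0$ by Corollary~\ref{dim_rsum_cor}; the relation you actually need is $\dim({\cC|}_{I-J_1}) - \dim({\cC_1|}_{I_1-J_1}) = \dim(\cC) - \dim(\cC_1)$. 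With these repairs your argument is complete and correct.
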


In summary, we have the following recursive procedure for constructing
$\cM(\cC;T,\omega)$, given a code $\cC$ and a tree decomposition 
$(T,\omega)$. \\[-6pt]

\begin{quote}
\textbf{Procedure} \texttt{MIN\_REALZN}($\cC,T,\omega$) \\[4pt]
\emph{Input\/}: A $k \times n$ generator matrix for a code $\cC$,
and a tree decomposition $(T,\omega)$ of the index set of $\cC$. \\[4pt]
\emph{Output\/}: A specification of the state spaces and the local
constraints in the minimal realization $\cM(\cC;T,\omega)$. \\[4pt]
\textbf{Step M1}. If $T$ consists of a single vertex, then return
$\cM(\cC;T,\omega) = (T,\omega,\cC)$.\\[4pt]
\textbf{Step M2}. 
If $T$ contains at least one edge, then choose an $\he \in E(T)$.
Let $v_1$ be the vertex of $T_\he$ incident with $\he$, and let 
$v_2$ be the vertex of $\oT_\he$ incident with $\he$.
\\
(M2.1)\ Compute
$r = \dim({\cC|}_{J(\he)}) + \dim({\cC|}_{\oJ(\he)}) - \dim(\cC)$. \\
(M2.2)\ Determine $\D_r$, and assign it an index set $I_\D$ disjoint from 
$J(\he) \cup \oJ(\he)$. \\
(M2.3)\ Determine codes $\cC_1$ and $\cC_2$, 
with index sets $I_1 = J(\he) \disj I_\D$ and 
$I_2 = \oJ(\he) \disj I_\D$, respectively,
such that $\cC = \cC_1 \oplus_{r} \cC_2$. \\
(M2.4)\ Determine the index maps $\omega_1$ and $\omega_2$ as in 
(\ref{omega1_def}) and (\ref{omega2_def}). \\[4pt]
\textbf{Step M3}. 
Determine $\cM(\cC_1;T_\he,\omega_1)$ by calling 
\texttt{MIN\_REALZN}($\cC_1,T_\he,\omega_1$);
determine $\cM(\cC_2;\oT_\he,\omega_2)$ by calling 
\texttt{MIN\_REALZN}($\cC_2,\oT_\he,\omega_2$). We may assume that 
$\cM(\cC_1;T_\he,\omega_1)$ and $\cM(\cC_2;\oT_\he,\omega_2)$ are
in the form given in (\ref{minC1_form}) and (\ref{minC2_form}). \\[4pt]
\textbf{Step M4}. Return $\cM(\cC;T,\omega) = 
(T,\omega,\,(\cS_e,\, e \in E(T)),\,(C_v,\, v \in V(T)))$,
where $\cS_e$ and $C_v$ are as defined in (\ref{Se_def}) and (\ref{Cv_def}).
\end{quote}
\mbox{} 

A simplified version of the above procedure may
be obtained by choosing, in Step M2, the edge $\he$ to be an edge incident
with a leaf of $T$. Then, one of the two components of $T - \he$,
say, $\oT_\he$, consists of a single vertex, so that the call
to \texttt{MIN\_REALZN}($\cC_2;\oT_\he,\omega_2$) may be avoided, as it
would simply return $(\oT_\he,\omega_2,\cC_2)$. We will use this 
modification of the procedure to give an estimate of its run-time
complexity. 

Let $n$ denote the length of $\cC$, let $k = \dim(\cC)$, and let $E = E(T)$.
Also, define 
\beq
r_{\max} = 
\max_{e \in E} \ \dim({\cC|}_{J(e)}) + \dim({\cC|}_{\oJ(e)}) - \dim(\cC).
\label{rmax_eq}
\eeq
Observe that, as a result of the modification suggested above, in the
determination of $\cM(\cC;T,\omega)$, the procedure \texttt{MIN\_REALZN}
gets called $|E|$ times, once for each edge $e \in E$. The run-time
complexity of any particular run of \texttt{MIN\_REALZN} is determined
by the computations in Step M2. In the $i$th run,
the procedure acts upon some code $\cC^{(i)}$ of length $n_i$ and 
dimension $k_i$, and in Step M2, it computes an $r_i$,
a code $\D_{r_i}$ with index set $I_\D^{(i)}$, and a code
$\cC_1^{(i)}$. Via Lemma~\ref{appD_lemma}, we have that 
$r_i \leq r_{\max}$. We bound $k_i$ and $n_i$ as follows. 
Note that $\cC_1^{(i)}$ is the code $\cC^{(i+1)}$ that the $(i+1)$th 
run of the procedure takes as input. Thus, we have $k_{i+1} \leq k_i$,
and $n_{i+1} \leq n_i + |I_\D^{(i)}|$. Since $k_1 = k$, $n_1 = n$,
and $|I_\D^{(i)}| = (q^{r_i}-1)/(q-1) \leq q^{r_{\max}}$,
we have, for $i=1,2,\ldots,|E|$, 
$k_i \leq k$ and $n_i \leq n+(i-1) q^{r_{\max}}$. 
Now, by the estimate given in Section~\ref{decomp_section} 
of the run-time complexity of the $r$-sum decomposition procedure, 
we see that the $i$th run of Step M2 of \texttt{MIN\_REALZN} takes 
$O(k_i^2n_i + n_ir_iq^{r_i})$ time. Hence the overall run-time 
complexity of \texttt{MIN\_REALZN} may be estimated to be 
$\sum_{i=1}^{|E|} O(k_i^2n_i + n_ir_iq^{r_i})$.
This expression can be simplified by observing that
\begin{eqnarray*}
\sum_{i=1}^{|E|} (k_i^2n_i + n_ir_iq^{r_i}) 
&\leq& (k^2 + r_{\max}q^{r_{\max}}) \sum_{i=1}^{|E|} n_i \\
&\leq& (k^2 + r_{\max}q^{r_{\max}}) 
\sum_{i=1}^{|E|} \left[n+(i-1) q^{r_{\max}}\right] \\
&=& (k^2 + r_{\max}q^{r_{\max}}) \, 
\left[n|E| + (1/2)|E|(|E|-1)q^{r_{\max}}\right].
\end{eqnarray*}
It follows that \texttt{MIN\_REALZN} runs in
$O((k^2 + r_{\max}q^{r_{\max}})(n|E| + |E|^2 r_{\max}q^{r_{\max}}))$
time. Note that this is polynomial in $n$, $k$, $q$ and $|E|$, but 
exponential in $r_{\max}$.

\section{Complexity Measures\label{complexity_section}}

\subsection{Complexity Measures for Codes\label{code_complexity_section}}
As observed in \cite{For01}, any graphical realization of a code 
specifies an associated decoding algorithm, namely, the sum-product
algorithm. The sum-product algorithm specified by a tree realization,
$\G = (T,\omega,(\cS_e,\, e \in E),\,(C_v,\,v \in V))$, of a code $\cC$
provides an exact implementation of ML decoding for $\cC$. 
A reasonable initial estimate of the computational complexity of the 
sum-product algorithm on $\G$ is provided by the \emph{constraint complexity}
of $\G$, which is defined as $\max_{v \in V} \dim(C_v)$.
As implied by Corollary~\ref{dimCv_cor}, given a tree decomposition 
$(T,\omega)$ of the index set of $\cC$, the minimal realization
$\cM(\cC;T,\omega)$ has the least constraint complexity among all
tree realizations of $\cC$ that extend $(T,\omega)$. Let $\k(\cC;T,\omega)$
denote the constraint complexity of $\cM(\cC;T,\omega)$. Note that,
by (\ref{dimCv*}), 
\beq
\k(\cC;T,\omega) = \max_{v \in V} 
\left(\dim(\cC) - \sum_{e \in E(v)} \dim(\cC_{J(e)})\right).
\label{k_CT_omega}
\eeq
Thus, $\k(\cC;T,\omega)$ is a measure of the complexity of implementing
ML decoding for $\cC$ as a sum-product algorithm on $\cM(\cC;T,\omega)$.

Let us now define the \emph{treewidth} of the code $\cC$ to be
\beq
\k(\cC) = \min_{(T,\omega)} \k(\cC;T,\omega),
\label{treewidth_def_0}
\eeq
where the minimum is taken over all tree decompositions $(T,\omega)$
of the index set of $\cC$. The treewidth of a code is an indicator of how
small the computational complexity of an ML decoding algorithm for $\cC$
can be. The notion of treewidth (\emph{i.e.,} minimal constraint complexity)
of a code was first introduced by Forney \cite{For03}. 
A related notion, called minimal tree complexity, was defined and
studied by Halford and Chugg \cite{halford}. Treewidth, as defined in
(\ref{treewidth_def_0}), is an upper bound on the minimal tree complexity
measure of Halford and Chugg.

A tree is called \emph{cubic} if all its internal nodes have degree 3.
Forney \cite{For03} showed that the minimum in (\ref{treewidth_def_0})
is always achieved by a tree decomposition $(T,\omega)$ in which 
$T$ is a cubic tree, and $\omega$ is a bijection\footnote{Forney 
\cite{For03} only explicitly states that the minimizing $(T,\omega)$ 
may be taken to be such that $T$ is a cubic tree and $\omega$ is
a surjective map onto the leaves of $T$. However, the symbol-splitting
argument in Section~V.F of his paper actually implies that 
$\omega$ in the minimizing tree decomposition may be taken to be 
one-to-one as well.}
between the index set of $\cC$ and the set of leaves of $T$. 
Let $\cQ(\cC)$ denote the set of all tree decompositions $(T,\omega)$
in which $T$ is cubic and $\omega$ maps the index set
of $\cC$ bijectively onto the set of leaves of $T$. We may then re-write
(\ref{treewidth_def_0}) as
\beq
\k(\cC) = \min_{(T,\omega) \in \cQ(\cC)} \k(\cC;T,\omega).
\label{treewidth_def}
\eeq

An alternate measure of code complexity may be obtained 
from the notion of \emph{state complexity} of a tree realization $\G$, 
which is the largest dimension of a state space in $\G$. Thus, 
by virtue of (\ref{dimSe*}) and (\ref{dimSe*_alt}), the 
state complexity of a minimal realization $\cM(\cC;T,\omega)$ is given by
\begin{eqnarray}
\sg(\cC;T,\omega) & = &
\max_{e \in E} \ \dim(\cC) - \dim(\cC_{J(e)}) - \dim(\cC_{\bar{J}(e)}) 
\notag \\
&=& \max_{e \in E} \
\dim({\cC|}_{J(e)}) + \dim({\cC|}_{\bar{J}(e)}) - \dim(\cC).
\label{s_CT_omega}
\end{eqnarray}
We then define, in analogy with (\ref{treewidth_def}),
\beq
\sg(\cC) = \min_{(T,\omega) \in \cQ(\cC)} \sg(\cC;T,\omega).
\label{branchwidth_def}
\eeq
Note that the minimum in the above definition is taken over tree 
decompositions in $\cQ(\cC)$ only. It must be emphasized that
$\sg(\cC)$, as defined in (\ref{branchwidth_def}), need \emph{not} 
be the same as the least $\sg(\cC;T,\omega)$ over all tree decompositions
$(T,\omega)$ of the index set of $\cC$. 

A notion analogous to $\sg(\cC)$ is known as branchwidth in the matroid
theory literature; see \emph{e.g.}, \cite{HOSG07}. 
In keeping with that nomenclature, we will call $\sg(\cC)$ 
the \emph{branchwidth} of the code $\cC$. Branchwidth and treewidth
are very closely related, as shown by the following result,
which can be obtained in a straightforward manner from the bounds in 
Lemma~\ref{dimCv*_bnd}.

\begin{proposition}[\cite{HW06}, Theorem~4.2]
Given a code $\cC$, if $(T,\omega) \in \cQ(\cC)$, then 
$$
\sg(\cC;T,\omega) \leq \k(\cC;T,\omega) \leq 2\sg(\cC;T,\omega).
$$
Hence, $\sg(\cC) \leq \k(\cC) \leq 2\sg(\cC)$.
\label{bw_tw_bnds}
\end{proposition}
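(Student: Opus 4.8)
The plan is to read off both inequalities directly from Lemma~\ref{dimCv*_bnd}, applied to the minimal tree realization $\cM(\cC;T,\omega)=(T,\omega,(\cS_e^*,\,e\in E),(C_v^*,\,v\in V))$, and then to push the pointwise bounds through to the minima over $\cQ(\cC)$ at the end. For the lower bound $\sg(\cC;T,\omega)\le\k(\cC;T,\omega)$ I would not need the cubic/bijective structure at all: every edge $e\in E$ is incident with at least one vertex $v$, and for that pair the left-hand inequality of Lemma~\ref{dimCv*_bnd} gives $\dim(\cS_e^*)\le\dim(C_v^*)$; hence
$$
\sg(\cC;T,\omega)=\max_{e\in E}\dim(\cS_e^*)\le\max_{v\in V}\dim(C_v^*)=\k(\cC;T,\omega).
$$

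For the upper bound $\k(\cC;T,\omega)\le 2\sg(\cC;T,\omega)$ the hypothesis $(T,\omega)\in\cQ(\cC)$ does get used. Because $\omega$ is a bijection onto the leaves of $T$, an internal node $v$ has $\omega^{-1}(v)=\emptyset$, while a leaf $v$ has $|\omega^{-1}(v)|=1$; and because $T$ is cubic, $|E(v)|=3$ at each internal node and $|E(v)|=1$ at each leaf. Fix a vertex $v$ and apply the right-hand inequality of Lemma~\ref{dimCv*_bnd} with an arbitrary $e\in E(v)$. If $v$ is internal, the term $\dim({\cC|}_{\omega^{-1}(v)})$ vanishes and the sum $\sum_{e'\in E(v)-\{e\}}\dim(\cS_{e'}^*)$ has exactly two summands, each at most $\sg(\cC;T,\omega)$, so $\dim(C_v^*)\le 2\sg(\cC;T,\omega)$. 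If $v$ is a leaf, the sum is empty and the inequality reduces to $\dim(C_v^*)\le\dim({\cC|}_{\omega^{-1}(v)})\le 1$. Taking the maximum over $v$ then yields $\k(\cC;T,\omega)\le 2\sg(\cC;T,\omega)$.

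Finally, both pointwise inequalities promote to the minima over $\cQ(\cC)$: evaluating $\sg(\cC;T,\omega)\le\k(\cC;T,\omega)$ at a tree decomposition in $\cQ(\cC)$ attaining $\k(\cC)$ gives $\sg(\cC)\le\k(\cC)$, and evaluating $\k(\cC;T,\omega)\le 2\sg(\cC;T,\omega)$ at one attaining $\sg(\cC)$ gives $\k(\cC)\le 2\sg(\cC)$. I do not expect a genuine obstacle: the statement is essentially a repackaging of Lemma~\ref{dimCv*_bnd} together with the two structural facts ``$T$ cubic'' and ``$\omega$ bijective onto the leaves''. The one point that calls for care is the leaf vertices in the upper bound, where $\dim(C_v^*)$ is governed by the symbol term $\dim({\cC|}_{\omega^{-1}(v)})$ rather than by state-space dimensions; there one must reconcile the crude estimate $\dim(C_v^*)\le 1$ with $2\sg(\cC;T,\omega)$, which is immediate unless $\sg(\cC;T,\omega)=0$, i.e.\ unless $\cC$ is a direct sum of one-dimensional codes --- a degenerate situation that is handled by the standard conventions on such trivial coordinates.
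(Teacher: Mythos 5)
Your proof is correct and takes exactly the route the paper intends: the paper obtains this proposition ``in a straightforward manner from the bounds in Lemma~\ref{dimCv*_bnd}'', using the cubic structure and the bijectivity of $\omega$ onto the leaves precisely as you do (empty symbol term and two remaining edges at internal nodes, a single index at leaves), and then promoting the pointwise bounds to the minima over $\cQ(\cC)$. The one wrinkle you flag --- reconciling the leaf bound $\dim(C_v^*)\le 1$ with $2\sg(\cC;T,\omega)$ when $\sg(\cC;T,\omega)=0$ --- is a degenerate-case defect of the statement itself (e.g.\ $\cC=\F^I$ has $\sg=0$ but $\k=1$), not of your argument, and the paper's own one-line derivation carries the same caveat.
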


The notions of state and constraint complexity have been studied
extensively in the context of conventional trellis realizations of a
code; see \emph{e.g.}, \cite{vardy}. Recall from 
Example~\ref{trellis_example} that a conventional trellis realization
of a code is a tree realization that extends a tree decomposition 
$(T,\omega)$ in which $T$ is a simple path and $\omega$ is a bijection
between the index set of $\cC$ and the vertices of $T$. This special
case of a tree decomposition is referred to as a \emph{path decomposition}.
Specifically, a path decomposition of a code $\cC$ defined on the 
index set $I$ is a pair $(T,\omega)$, where $T$ is a simple path on 
$|I|$ vertices, and $\omega:I \rightarrow V(T)$ is a bijection. Let 
$\cP(\cC)$ denote the set of all path decompositions of $\cC$. We then
define
\beq
\k_\tr(\cC) = \min_{(T,\omega) \in \cP(\cC)} \k(\cC;T,\omega)
\label{k_trellis_def}
\eeq
and
\beq
\sg_\tr(\cC) = \min_{(T,\omega) \in \cP(\cC)} \sg(\cC;T,\omega).
\label{s_trellis_def}
\eeq
It is well-known, and indeed readily follows from Lemma~\ref{dimCv*_bnd},
that $\sg_\tr(\cC) \leq \k_\tr(\cC) \leq \sg_\tr(\cC) + 1$.

It is clear from (\ref{treewidth_def_0}) and (\ref{k_trellis_def}) that
$\k(\cC) \leq \k_\tr(\cC)$. Forney \cite{For03} asked the question of
whether $\k(\cC)$ could be significantly smaller than $\k_\tr(\cC)$. He 
conjectured that either $\k_\tr(\cC) - \k(\cC) \leq 1$ for all codes $\cC$, 
or $\k_\tr(\cC) - \k(\cC)$ is unbounded. We show here that it is in fact
the latter that is true. To do so, we need to introduce some new concepts.

\subsection{Complexity Measures for Graphs\label{graph_complexity_section}}
In their fundamental work on graph minors \cite{RS-survey}, 
Robertson and Seymour introduced two notions of complexity of 
graphs, namely, treewidth and pathwidth. These notions have proved to
be invaluable tools with many applications in graph theory and 
theoretical computer science. An overview of such applications can
be found, for example, in \cite{bod93}. We will define the notions 
of treewidth and pathwidth of a graph in this subsection, 
and subsequently, relate them to the complexity measures 
$\k(\cC)$ and $\k_\tr(\cC)$ defined above for codes.

Let $\cG$ be a graph with vertex set $V(\cG)$ and edge set $E(\cG)$. 
The graph may contain self-loops and parallel edges. 
A \emph{tree decomposition} of $\cG$ is a pair $(T,\beta)$, where $T$ 
is a tree, and $\beta: V(T) \rightarrow 2^{V(\cG)}$ is a mapping that
satisfies the following:
\begin{itemize}
\item[(T1)] $\bigcup_{x \in V(T)} \beta(x) = V(\cG)$;
\item[(T2)] for each pair of adjacent vertices $u,v \in V(\cG)$, we have
$\{u,v\} \subseteq \beta(x)$ for some $x \in V(T)$; and
\item[(T3)] for each pair of vertices $x,z \in V(T)$, if $y \in V(T)$
is any vertex on the unique path between $x$ and $z$, then 
$\beta(x) \cap \beta(z) \subseteq \beta(y)$.
\end{itemize}
It may be helpful to point out that (T3) above is equivalent to the
following: 
\begin{itemize}
\item[(T3$'$)] for each $v \in V(\cG)$, the subgraph of $T$ induced 
by $\{x \in V(T): v \in \beta(x)\}$ is a (connected) subtree of $T$.
\end{itemize}
A reader familiar with the notion of ``junction trees'' (see \emph{e.g.},
\cite{AM00}) will recognize a tree decomposition of $\cG$ to be a 
junction tree. 

The \emph{width} of a tree decomposition $(T,\beta)$ as above
is defined to be $\max_{x \in V(T)} |\beta(x)|-1$. The \emph{treewidth} 
of $\cG$, which we denote by $\k(\cG)$, is the minimum among the widths of 
all its tree decompositions. Note that if $\cG$ has at least one edge, 
then, because of (T2), any tree decomposition of $\cG$ must have width 
at least one. Thus, for any graph $\cG$ with $|E(\cG)| \geq 1$,
we have $\k(\cG) \geq 1$.

\begin{example}
For any tree $T$ with at least two vertices, we have $\k(T) = 1$. 
This can be seen as follows. Fix a vertex $r \in V(T)$. Define 
a mapping $\beta:V(T) \rightarrow 2^{V(T)}$ as follows: 
$\beta(r) = \{r\}$, and for $x \neq r$, $\beta(x) = \{x,y\}$,
where $\{x,y\}$ is the first edge on the unique path from $x$ to $r$.
It is easily verified that $(T,\beta)$ is a tree decomposition of $T$.
Since this tree decomposition has width one, it follows that $\k(T) = 1$.
\label{treewidth_example1}
\end{example}

If $(T,\beta)$ is a tree decomposition in which $T$ is a simple path,
then $(T,\beta)$ is called a \emph{path decomposition}. The minimum
among the widths of all the path decompositions of $\cG$ is called
the \emph{pathwidth} of $\cG$, which we denote by $\k_\path(\cG)$. 
It is evident that $\k(\cG) \leq \k_\path(\cG)$.

Analogous to the situation of Example~\ref{treewidth_example1}, 
a simple path has pathwidth one. However, trees may have arbitrarily
large pathwidth. The following example is due to Robertson and Seymour
\cite{RS-I}. 

\begin{example}
Let $Y_1$ be the complete bipartite graph $K_{1,3}$. For $i \geq 2$,
we inductively define $Y_i$ by taking a copy of $Y_{i-1}$, and to each
leaf $v$ of this graph, adding two new vertices adjacent to $v$. 
Figure~\ref{Y_trees} shows the trees $Y_1$, $Y_2$ and $Y_3$.
\begin{figure}
\epsfig{file=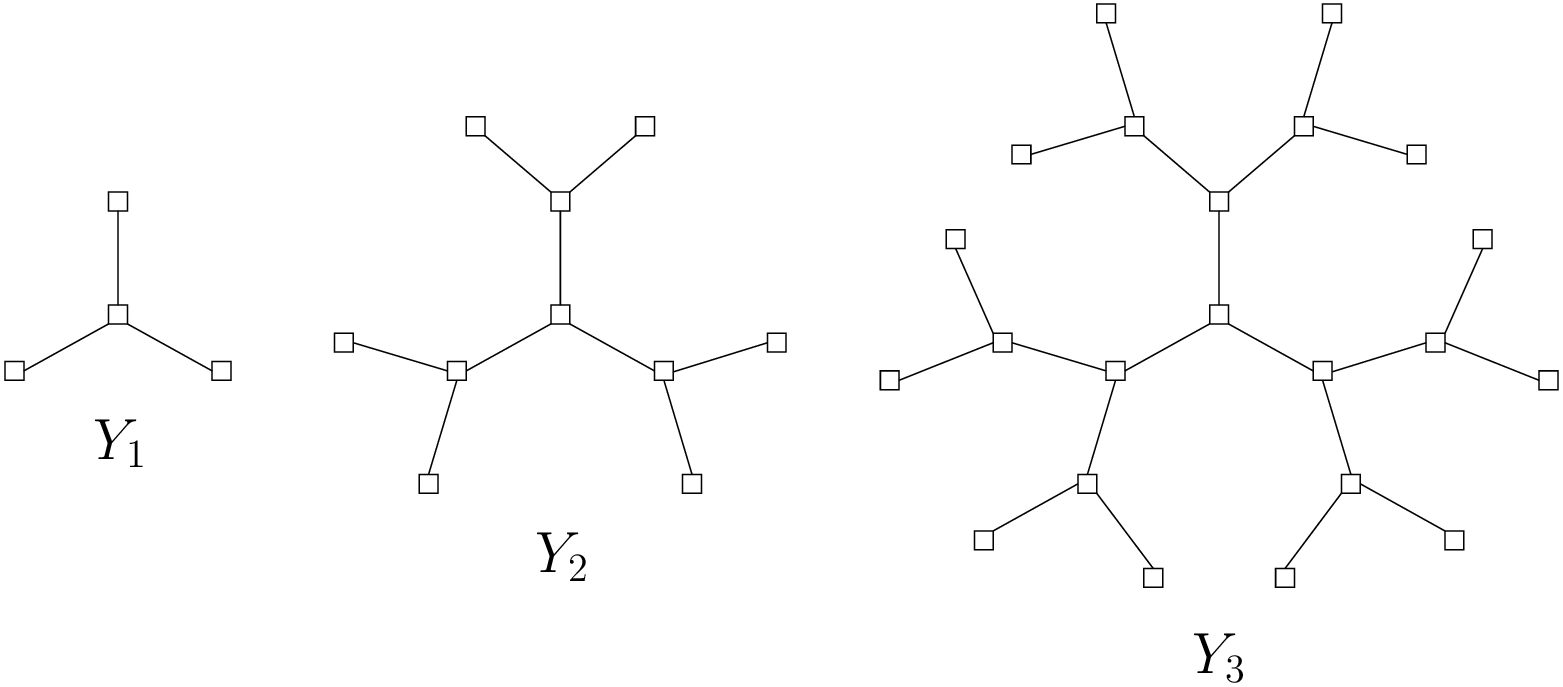, width=9.5cm}
\caption{The trees $Y_1$, $Y_2$ and $Y_3$.}
\label{Y_trees}
\end{figure}
The pathwidth of $Y_i$, $i \geq 1$, is $\lceil \frac{1}{2}(i+1) \rceil$
\cite{RS-I}.
\label{treewidth_example2}
\end{example}

Thus, for trees $T$, the difference $\k_\path(T) - \k(T)$ can be 
arbitrarily large. We will use this fact to construct codes $\cC$ 
for which $\k_\tr(\cC) - \k(\cC)$ is arbitrarily large.

We remark that the problem of determining the treewidth or pathwidth
of a graph is known to be NP-hard \cite{arnborg},\cite{bod93}.
As we will see a little later, this implies that the problem
of determining the treewidth of a code, or its trellis counterpart,
is also NP-hard.

\subsection{Relating the Complexity Measures for Codes and 
Graphs\label{relate_section}}
Let $\F$ be an arbitrary finite field. To any given graph $\cG$,
we will associate a code $\cC[\cG]$ over $\F$ as follows.
Let $D(\cG)$ be any directed graph obtained by arbitrarily 
assigning orientations to the edges of $\cG$,
and let $A_{D(\cG)}$ be the vertex-edge incidence matrix of $D(\cG)$. 
This is the $|V(\cG)| \times |E(\cG)|$ matrix whose rows and columns
are indexed by the vertices and directed edges, respectively, of $D(\cG)$,
and whose $(i,j)$th entry, $a_{i,j}$, is determined as follows:
$$
a_{i,j} = 
\begin{cases}
1 & \text{if vertex $i$ is the tail of non-loop edge $j$} \\
-1 & \text{if vertex $i$ is the head of non-loop edge $j$} \\
0 & \text{otherwise}.
\end{cases}
$$
The code $\cC[\cG]$ is defined to be the linear code over $\F$ generated by 
the matrix $A_{D(\cG)}$. When $\F$ is the binary field, the code $\cC[\cG]$
is the \emph{cut-set code} of $\cG$, \emph{i.e.}, the dual of the cycle 
code of $\cG$ \cite{HB68}.

The following fundamental result that relates the treewidths of the graph 
$\cG$ and the code $\cC[\cG]$ is due to Hlin{\v e}n\'y and 
Whittle\footnote{The results in \cite{HW06} are stated in matroid-theoretic
language. The vocabulary necessary to translate the language of 
matroid theory into that of coding theory can be found, for example,
in \cite{kashyap_siam}.} \cite{HW06}. 

\begin{theorem}[\cite{HW06}, Theorem~3.2]
If $\cG$ is a graph with at least one edge, then $\k(\cG) = \k(\cC[\cG])$.
\label{treewidth_theorem}
\end{theorem}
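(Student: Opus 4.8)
The plan is to treat this as the coding-theoretic incarnation of \cite[Theorem~3.2]{HW06}, proving the elementary half of the equality directly and citing \cite{HW06} for the harder half. First I would make two routine reductions. We may assume $\cG$ is connected: an isolated vertex affects neither $\k(\cG)$ nor $\cC[\cG]$, and for a disconnected $\cG$ both quantities are the maxima of the corresponding quantities over the connected components (this standard reduction follows from~(\ref{dimCv*}) together with the behaviour of realizations on disconnected graphs, cf.\ \cite{For01}, using that $\cC[\cG]$ is the direct sum of the cut-set codes of the components). So take $\cG$ connected with $n$ vertices, a tree decomposition of which we may always assume has no empty bag; then $\dim(\cC[\cG]) = n-1$, and for $F \subseteq E(\cG)$ the dimension $\dim(\cC[\cG]|_F)$ is the rank of $F$ in the cycle matroid of $\cG$, namely $n - c(F)$, where $c(F)$ is the number of connected components of the spanning subgraph $\bigl(V(\cG),F\bigr)$.

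Next I would establish $\k(\cC[\cG]) \le \k(\cG)$ by a direct construction. Given a tree decomposition $(T,\beta)$ of $\cG$ of width $w = \k(\cG)$, keep the tree $T$ and define $\omega\colon E(\cG) \to V(T)$ by sending each edge $\{a,b\}$ of $\cG$ to a node whose bag contains both $a$ and $b$ (one exists by (T2)); then $(T,\omega)$ is a tree decomposition of the index set of $\cC[\cG]$. To bound $\k(\cC[\cG];T,\omega)$ via~(\ref{dimCv*}), fix $v \in V(T)$ with incident $T$-edges $e_1,\dots,e_d$, and let $U_i$ be the union of the bags in the component of $T - e_i$ not containing $v$; every coordinate of $J(e_i)$ is an edge of $\cG$ with both endpoints in $U_i$. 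By (T1) and (T3$'$), $V(\cG)$ is the disjoint union of $\beta(v)$ and the sets $U_i \setminus \beta(v)$; feeding this and the rank formula above into~(\ref{dimCv*}) yields, after a short computation,
\[
\dim(C_v^*) \;=\; |\beta(v)| - 1 + \Bigl( d - \sum_{i=1}^{d} c_i \Bigr),
\]
where $c_i$ is the number of connected components of the subgraph of $\cG$ with vertex set $V(\cG) \setminus (U_i \setminus \beta(v))$ and edge set $E(\cG) \setminus J(e_i)$. This vertex set contains the nonempty set $\beta(v)$, so $c_i \ge 1$, whence $\dim(C_v^*) \le |\beta(v)| - 1 \le w$ for every $v$, and therefore $\k(\cC[\cG]) \le \k(\cC[\cG];T,\omega) \le w$.

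For the reverse inequality $\k(\cG) \le \k(\cC[\cG])$ --- which I expect to be the main obstacle --- I would invoke \cite[Theorem~3.2]{HW06}. By~(\ref{dimCv*}), $\k(\cC;T,\omega)$ depends only on the rank function of the matroid represented by the columns of a generator matrix of $\cC$, and for $\cC[\cG]$ that matroid is the cycle matroid of $\cG$; hence $\k(\cC[\cG])$ coincides with the matroid treewidth of the cycle matroid of $\cG$, which by \cite[Theorem~3.2]{HW06} equals $\k(\cG)$ (the matroid--coding dictionary being routine; see \cite{kashyap_siam}). It is instructive to see why this direction resists the kind of argument used above. The naive attempt is to take a tree decomposition $(T,\omega)$ of $\cC[\cG]$ and set $\beta(x) = \{\, a \in V(\cG) : x \in V(T_a) \,\}$, where $T_a$ is the minimal subtree of $T$ containing $\omega(E_a)$ and $E_a$ is the set of edges of $\cG$ at $a$; this is a bona fide tree decomposition of $\cG$ (conditions (T1)--(T3) are immediate), but the same rank computation now gives $|\beta(x)| - 1 = \dim(C_x^*) + \sum_i (c_i' - 1)$ with each $c_i' \ge 1$, and these correction terms need not vanish --- intuitively, a minimal code realization may route an edge $\{a,b\}$ through a part of $T$ in which neither endpoint is otherwise active. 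Overcoming this is exactly what the Hlin{\v e}n\'y--Whittle construction accomplishes, building the graph tree decomposition by an inductive restructuring of $T$ driven by submodularity of the matroid connectivity function.
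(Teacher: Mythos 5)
Your proposal is correct, but note that the paper itself offers no proof of Theorem~\ref{treewidth_theorem}: it is quoted verbatim from \cite{HW06}, with only a footnote observing that the translation between matroid and coding language is routine. That translation is exactly the identification you make before citing \cite{HW06} --- by (\ref{dimCv*}), $\k(\cC;T,\omega)$ depends only on the column matroid of a generator matrix, and for $\cC[\cG]$ (generated by the incidence matrix $A_{D(\cG)}$ over any $\F$) that matroid is the cycle matroid of $\cG$, so $\k(\cC[\cG])$ is the Hlin{\v e}n\'y--Whittle matroid tree-width of $M(\cG)$. What you add beyond the paper is a self-contained proof of the easy inequality $\k(\cC[\cG])\le\k(\cG)$, and your computation is sound: with $\cG$ connected and all bags nonempty, $\dim(\cC[\cG]|_F)=n-c(F)$ gives $\dim(\cC_{J(e_i)})=\kappa_i-1$, where $\kappa_i$ is the number of components of $(V(\cG),E(\cG)\setminus J(e_i))$; every vertex of $U_i\setminus\beta(v)$ is isolated in that subgraph (each edge outside $J(e_i)$ has both endpoints in $\beta(v)\cup U_j$ for some $j\ne i$, and the sets $U_j\setminus\beta(v)$ are pairwise disjoint by (T3$'$)), so $\kappa_i=|U_i\setminus\beta(v)|+c_i$ and (\ref{dimCv*}) collapses to your identity $\dim(C_v^*)=|\beta(v)|-1+\sum_i(1-c_i)\le|\beta(v)|-1\le\k(\cG)$. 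Two minor points to tighten: the no-empty-bag normalization is genuinely needed for $c_i\ge1$ and should be justified (delete an empty bag and reattach the resulting subtrees, which preserves (T1)--(T3) and the width), and the reduction to connected $\cG$ is only required for the direction you prove, where just the inequality $\k(\cC_1\oplus\cC_2)\le\max\{\k(\cC_1),\k(\cC_2)\}$ (join two trees by an edge, the new state space being trivial) is used; the hard direction is cited for general $\cG$ anyway. Your closing discussion of why the naive reversal of the construction fails --- the correction terms $c_i'-1$ need not vanish --- is an accurate account of where the real content of \cite{HW06} lies; that direction is delegated to the citation exactly as the paper does.
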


Since determining the treewidth of a graph is NP-hard, it immediately 
follows from the above theorem that the problem of determining the 
treewidth of a code (over any fixed finite field) is also NP-hard.
We remark that the problem of determining the branchwidth of a code
is also NP-hard. This follows from a result \cite{HM07}
that relates the branchwidth of the code $\cC[\cG]$ 
to the branchwidth of the graph $\cG$, the
latter being a notion we have not defined in this paper.

Unfortunately, it is not true that $\k_\path(\cG) = \k_\tr(\cC[\cG])$.
As an example, consider the code $\cC[T]$ over the binary field, for
an arbitrary tree $T$. It is not hard to see that 
$\cC[T] = \{0,1\}^{|E(T)|}$ which, being the direct sum of multiple
copies of $\{0,1\}$, has $\k_\tr(\cC[T]) = 1$. But as we have already
noted, trees can have arbitrarily large pathwidth.

\begin{figure}[t]
\centering{\epsfig{file=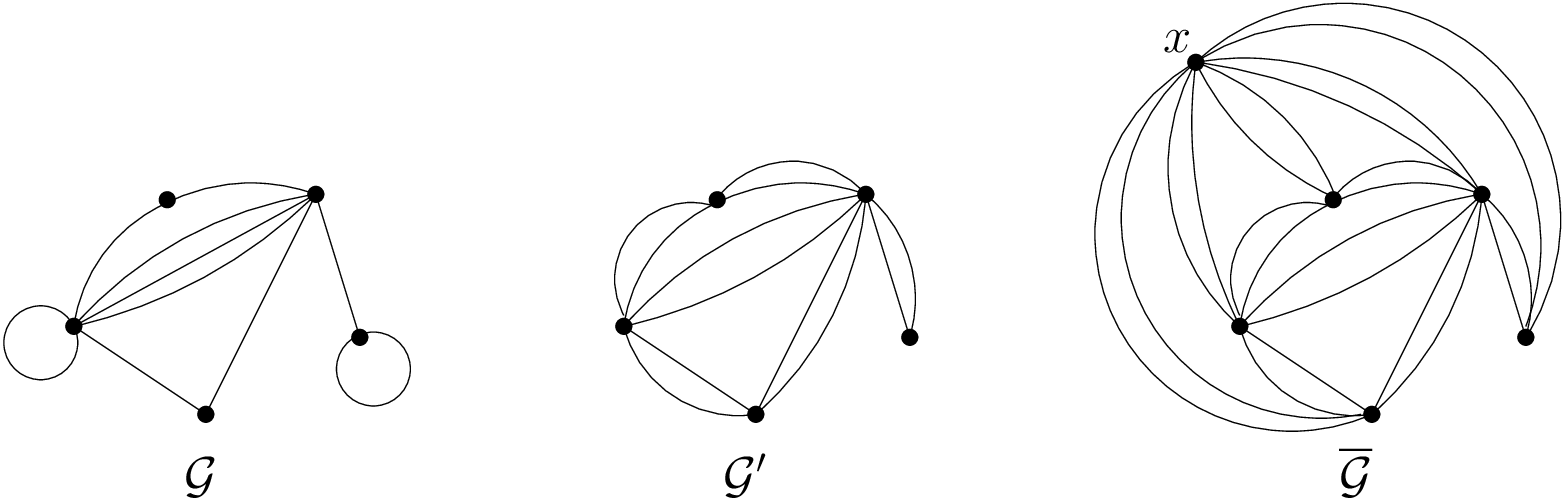, width=11cm}}
\caption{Construction of $\cG'$ and $\bar{\cG}$ from $\cG$.}
\label{Gbar}
\end{figure}

We get around this problem by means of a suitable transformation of graphs.
Given a graph $\cG$, let $\cG'$ be a graph defined on the same vertex set 
as $\cG$, having the following properties (see Figure~\ref{Gbar}):
\begin{itemize}
\item $\cG'$ is loopless;
\item a pair of distinct vertices is adjacent in $\cG'$
iff it is adjacent in $\cG$; and
\item in $\cG'$, there are exactly two edges between 
each pair of adjacent vertices.
\end{itemize}
Define $\bar{\cG}$ to be the graph obtained by adding an extra vertex, 
$x$, to $\cG'$, along with a pair of parallel edges from $x$ to each 
$v \in V(\cG')$ (see Figure~\ref{Gbar}). It is easy to see that 
$\bar{\cG}$ is constructible directly from $\cG$ in $O(|V(\cG)|^2)$ time. 

The following result was used in \cite{kashyap_siam} to show that the problem
of determining $\sg_\tr(\cC)$ for an arbitrary code $\cC$ (over any
fixed finite field) is NP-hard. 

\begin{theorem}[\cite{kashyap_siam}, Proposition~3.1]
If $\bar{\cG}$ is the graph constructed from a given graph $\cG$ as
described above, then $\sg_\tr(\cC[\bar{\cG}]) = \k_\path(\cG) + 1$.
\label{pathwidth_theorem}
\end{theorem}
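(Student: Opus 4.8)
The plan is to analyze the structure of the code $\cC[\bar\cG]$ directly in terms of the graph $\bar\cG$, and to use the already-established Theorem~\ref{treewidth_theorem}-style correspondence between graph-theoretic width parameters and code-theoretic state complexity, but now for \emph{path} decompositions. First I would recall that $\sg_\tr(\cC)$ is, by (\ref{s_trellis_def}) together with (\ref{s_CT_omega}), the minimum over orderings of the coordinate set $I = E(\bar\cG)$ of the maximum "cut rank" $\dim(\cC|_J) + \dim(\cC|_{\oJ}) - \dim(\cC)$ taken over prefixes $J$ of the ordering. For a graphic (cut-set) code $\cC[\bar\cG]$, this cut rank has a purely combinatorial interpretation: if the edges of $\bar\cG$ are partitioned as $(J,\oJ)$, then $\dim(\cC[\bar\cG]|_J) + \dim(\cC[\bar\cG]|_{\oJ}) - \dim(\cC[\bar\cG])$ counts (up to a bookkeeping constant depending on connectivity) the number of vertices of $\bar\cG$ that are incident with edges from both $J$ and $\oJ$ — essentially the size of the vertex boundary separating the edge set $J$ from $\oJ$. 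This is the standard connection between the connectivity function of a graphic matroid and vertex separators, and it is exactly the quantity that linear-width / path-width of the graph controls.

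Next I would set up the correspondence between linear orderings of $E(\bar\cG)$ and path decompositions of $\cG$. The point of the construction of $\bar\cG$ from $\cG$ — adding the apex vertex $x$ joined to every vertex by a pair of parallel edges, and doubling every original edge — is twofold. The doubling of edges (both the original edges and the apex edges) forces any ordering of $E(\bar\cG)$ that is "efficient" to keep the two parallel copies of each edge close together, so that in effect we may think of the ordering as an ordering of \emph{vertices} of $\cG$ (each vertex $v$ being "processed" when its pair of apex edges $\{x,v\}$ is crossed), i.e. as a path decomposition of $\cG$. The apex vertex $x$ guarantees that once we have processed some set $S \subseteq V(\cG)$ of vertices in this sense, the vertex $x$ is always in the boundary, and moreover that the boundary of the corresponding edge cut of $\bar\cG$ is precisely (the vertices of $\cG$ straddling $S$) together with $x$ — which matches the "$+1$" in the statement. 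The doubling also rules out degenerate savings one could otherwise get from a single edge being its own minimal separator. So the plan is: given an optimal-width path decomposition $(P,\beta)$ of $\cG$, build an ordering of $E(\bar\cG)$ realizing $\sg_\tr(\cC[\bar\cG]) \le \k_\path(\cG)+1$; conversely, given any ordering of $E(\bar\cG)$, extract from it a path decomposition of $\cG$ whose width is at most $\sg_\tr(\cC[\bar\cG]) - 1$, giving $\k_\path(\cG) \le \sg_\tr(\cC[\bar\cG]) - 1$.

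For the concrete computations I would use (\ref{dimSe*_alt})/(\ref{s_CT_omega}) in the form: for a prefix set of edges $J$, the cut rank equals $|V(J)| + |V(\oJ)| - |V(\bar\cG)| - (c(\bar\cG[J]) + c(\bar\cG[\oJ]) - c(\bar\cG))$ where $V(\cdot)$ denotes the set of endpoints and $c(\cdot)$ the number of connected components — this is just the rank formula for a graphic matroid, $\dim(\cC[\cH]) = |V(\cH)| - c(\cH)$ for the cut-set code, applied to subgraphs. Because $\bar\cG$ is connected and, thanks to the apex $x$, each of $\bar\cG[J]$ and $\bar\cG[\oJ]$ is connected once $J$ (resp. $\oJ$) is non-trivial in the relevant range, the component-correction term collapses to a constant, and the cut rank reduces to $|V(J) \cap V(\oJ)|$, the number of vertices touched by both sides. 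Then I would identify $|V(J) \cap V(\oJ)|$, for the structured orderings coming from path decompositions of $\cG$, with $|\beta(y)|$ for the appropriate bag $y$, plus $1$ for the apex.

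The main obstacle I expect is the converse direction: showing that an \emph{arbitrary} edge ordering of $\bar\cG$ — not one of the nicely structured ones — cannot do better than $\k_\path(\cG)+1$. This requires an "uncrossing" or normalization argument showing that any ordering can be massaged, without increasing the maximum prefix cut rank, into one in which the two parallel copies of each edge are adjacent and the apex edges are processed "vertex by vertex", so that a genuine path decomposition of $\cG$ of width $\sg_\tr(\cC[\bar\cG])-1$ can be read off. This is the technically delicate step (it is the analogue, for path/linear width, of the symbol-merging / sectionalization arguments), and it is presumably where the doubling of edges in the construction of $\bar\cG$ earns its keep. Since the statement is cited as \cite[Proposition~3.1]{kashyap_siam}, I would at this point simply invoke that reference for the full normalization argument rather than reproduce it, noting that the forward inequality is the elementary direction carried out above.
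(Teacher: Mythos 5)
The paper does not actually prove Theorem~\ref{pathwidth_theorem}: it is imported verbatim from \cite{kashyap_siam} (Proposition~3.1), so there is no internal argument to compare yours against, and your proposal, which in the end also defers the delicate converse/normalization direction to that same reference, in effect takes the same route as the paper. Your supporting sketch --- computing the prefix cut rank of $\cC[\bar{\cG}]$ via the graphic rank formula $\dim = |V|-c$, identifying it with the size of the vertex boundary between the two edge sets, and letting the apex vertex and the doubled edges account for the ``$+1$'' and for the reduction of arbitrary edge orderings to path decompositions of $\cG$ --- is consistent with how the cited result is established; the only slip is that for connected sides the cut rank is $|V(J)\cap V(\oJ)|-1$ rather than $|V(J)\cap V(\oJ)|$, an off-by-one which your own bookkeeping (bag size $=$ width $+1$, apex adds one more boundary vertex) already absorbs correctly.
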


Since $\sg_\tr(\cC)$ is always within one of $\k_\tr(\cC)$, the above 
theorem implies that 
\beq
\k_\path(\cG) + 1 \leq \k_\tr(\cC[\bar{\cG}]) \leq \k_\path(\cG) + 2.
\label{k_trellis_bnd}
\eeq
While this falls short of establishing the NP-hardness of computing
$\k_\tr(\cC)$ for an arbitrary code $\cC$, it is certainly enough to
provide us with the desired example of codes $\cC$ for which 
$\k_\tr(\cC) - \k(\cC)$ is arbitrarily large. We just need to make
one more observation: $\k(\bar{\cG}) = \k(\cG)+1$. The proof of this
fact, which is along the lines of the proof of Lemma~3.5 
in \cite{kashyap_siam}, is left to the reader as a straightforward
exercise. We can now prove the following corollary to 
Theorems~\ref{treewidth_theorem} and \ref{pathwidth_theorem}.

\begin{corollary}
Over any finite field $\F$, there exists a family of codes 
$\cC_i$, $i \in 1,2,\ldots$, such that 
$$
\lim_i\ \k_\tr(\cC_i) - \k(\cC_i) = \infty.
$$
\label{pathwidth_cor}
\end{corollary}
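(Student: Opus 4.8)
The plan is to assemble the corollary directly from the ingredients already laid out in this subsection. Start from Example~\ref{treewidth_example2}: the trees $Y_i$ satisfy $\k_\path(Y_i) = \lceil \frac{1}{2}(i+1)\rceil$, while every tree has treewidth $1$ by Example~\ref{treewidth_example1}. So the difference $\k_\path(Y_i) - \k(Y_i) = \lceil \frac{1}{2}(i+1)\rceil - 1$ tends to infinity. First I would set $\cG_i = Y_i$ and form $\bar{\cG}_i$ by the transformation described just before Theorem~\ref{pathwidth_theorem} (doubling each edge of a loopless graph, adding an apex vertex $x$ joined by a pair of parallel edges to every original vertex). Then define the code family by $\cC_i = \cC[\bar{\cG}_i]$ over the given field $\F$.

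Next I would chain the three cited facts. From Theorem~\ref{pathwidth_theorem}, $\sg_\tr(\cC[\bar{\cG}_i]) = \k_\path(\cG_i) + 1$, and since $\sg_\tr(\cC) \le \k_\tr(\cC) \le \sg_\tr(\cC)+1$ for every code, this gives the sandwich \eqref{k_trellis_bnd}, namely $\k_\path(\cG_i) + 1 \le \k_\tr(\cC_i) \le \k_\path(\cG_i) + 2$. From Theorem~\ref{treewidth_theorem} applied to $\bar{\cG}_i$ (which has at least one edge), $\k(\cC_i) = \k(\cC[\bar{\cG}_i]) = \k(\bar{\cG}_i)$, and by the observation $\k(\bar{\cG}) = \k(\cG)+1$ this equals $\k(\cG_i) + 1 = \k(Y_i) + 1 = 2$. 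Putting these together,
$$
\k_\tr(\cC_i) - \k(\cC_i) \ \ge\ \big(\k_\path(\cG_i) + 1\big) - 2 \ =\ \k_\path(Y_i) - 1 \ =\ \Big\lceil \tfrac{1}{2}(i+1)\Big\rceil - 1,
$$
which diverges as $i \to \infty$, establishing the claim.

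The only nontrivial points to check are bookkeeping rather than mathematics: that $Y_i$ is loopless (immediate from its inductive construction, since $K_{1,3}$ is simple and each step only adds pendant vertices), so the edge-doubling step of the transformation is legitimate; that $\bar{\cG}_i$ indeed has an edge so Theorem~\ref{treewidth_theorem} applies; and that the ``straightforward exercise'' identity $\k(\bar{\cG}) = \k(\cG)+1$ may be invoked as stated. I expect the main obstacle, if any, to be purely expository — making sure the reader sees that $\cC_i$ has length growing polynomially in $i$ (hence this is a genuine code family with unbounded length), which follows since $|E(\bar{\cG}_i)| = 2|E(Y_i)| + 2|V(Y_i)|$ and $Y_i$ has $O(2^i)$ vertices, but is not needed for the limit statement itself. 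No new estimates are required; the corollary is a direct composition of Theorems~\ref{treewidth_theorem} and \ref{pathwidth_theorem} with the pathwidth computation of Example~\ref{treewidth_example2}.
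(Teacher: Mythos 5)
Your proposal is correct and follows exactly the paper's own argument: the same construction $\cC_i = \cC[\bar{Y_i}]$, the same use of Theorem~\ref{treewidth_theorem} with $\k(\bar{Y_i}) = \k(Y_i)+1 = 2$, and the same lower bound on $\k_\tr(\cC_i)$ via Theorem~\ref{pathwidth_theorem} and the bound (\ref{k_trellis_bnd}). The arithmetic matches as well, since $\lceil \tfrac{1}{2}(i+3)\rceil - 2 = \lceil \tfrac{1}{2}(i+1)\rceil - 1$.
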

\begin{proof}
Let $Y_i$, $i = 1,2,\ldots$, be the family of trees defined in 
Example~\ref{treewidth_example2}. Define $\cC_i = \cC[\bar{Y_i}]$,
where $\bar{Y_i}$ refers to the graph obtained from $Y_i$ by the
transformation depicted in Figure~\ref{Gbar}. Note that 
$\k(\bar{Y_i}) = \k(Y_i)+1 = 2$, since $\k(Y_i) = 1$, as shown
in Example~\ref{treewidth_example1}. Thus, on the one hand, from 
Theorem~\ref{treewidth_theorem}, we have $\k(\cC_i) = \k(\bar{Y_i}) = 2$.
And on the other hand, from (\ref{k_trellis_bnd}) and 
Example~\ref{treewidth_example2}, we have 
$\k_\tr(\cC_i) \geq \k_\path(Y_i) + 1 = \lceil \frac{1}{2}(i+3) \rceil$.
\end{proof}

Using standard facts known about the incidence matrix $A_{D(\cG)}$ for
a graph $\cG$ (see, for example, \cite[Chapter~5]{oxley}), it may be
verified that the codes $\cC_i$, $i \geq 1$, constructed in the above proof 
are $[n_i,k_i,d_i]$ codes, where 
\begin{eqnarray*}
n_i &=& |E(\bar{Y_i})| \ \ =\ \ 12(2^i - 1) + 2, \\ 
k_i &=& |V(\bar{Y_i})|-1 \ \ = \ \ 3(2^i-1) + 1, \\
d_i &=& \text{size of the smallest cut-set in $\bar{Y_i}$} \ \ = \ \ 4.
\end{eqnarray*}
Note that $\k_\tr(\cC_i) - \k(\cC_i)$ grows as $O(\log n_i)$. 
We conjecture that this is in fact the maximal rate of growth 
of the difference $\k_\tr(\cC_i) - \k(\cC_i)$ for any code family $\cC_i$.

\begin{conjecture}
If $\cC_i$, $i \geq 1$, is any sequence of codes over $\F$, then
$$
\limsup_i \frac{\k_\tr(\cC_i) - \k(\cC_i)}{\log n_i} < \infty,
$$
where $n_i$ denotes the length of the code $\cC_i$.
\label{conj1}
\end{conjecture}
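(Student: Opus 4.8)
The plan is to peel the problem down, via the two elementary sandwich inequalities already in hand, to a statement about branchwidth, and then to attack that statement along the lines of the classical graph inequality between pathwidth and treewidth.

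\textbf{Step 1 (reduction to branchwidth).}
Proposition~\ref{bw_tw_bnds} gives $\sg(\cC)\le\k(\cC)$, and it was noted above that $\k_\tr(\cC)\le\sg_\tr(\cC)+1$; hence $\k_\tr(\cC)-\k(\cC)\le\sg_\tr(\cC)-\sg(\cC)+1$. So it suffices to show $\sg_\tr(\cC_i)-\sg(\cC_i)=O(\log n_i)$. Writing $\mu_\cC(J)=\dim({\cC|}_J)+\dim({\cC|}_{\oJ})-\dim(\cC)$ for the connectivity function of $\cC$ (the quantity appearing in (\ref{dimSe*_alt}) and (\ref{s_CT_omega})), this says: if the coordinates of $\cC$ can be placed at the leaves of a cubic tree $T$ so that every edge-cut $e$ of $T$ has $\mu_\cC(J(e))\le\sg(\cC)$, then they can be linearly ordered so that every prefix $J$ has $\mu_\cC(J)\le\sg(\cC)+O(\log n)$.

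\textbf{Step 2 (the graph-theoretic core).}
Conjecture~\ref{conj1} contains, as a special case, the purely graph-theoretic assertion that $\k_\path(\cG)-\k(\cG)=O(\log|V(\cG)|)$ for every graph $\cG$: applying the conjecture to the family $\cC_i=\cC[\bar{\cG_i}]$ and combining Theorem~\ref{pathwidth_theorem}, the identity $\k(\bar\cG)=\k(\cG)+1$, Theorem~\ref{treewidth_theorem}, and $\sg_\tr\le\k_\tr\le\sg_\tr+1$, one gets $\k_\tr(\cC_i)-\k(\cC_i)=\k_\path(\cG_i)-\k(\cG_i)+O(1)$, while $\log n_i=\Theta(\log|V(\cG_i)|)$ because $|E(\bar{\cG_i})|$ lies between $2|V(\cG_i)|$ and $|V(\cG_i)|^2$. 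Thus the first task is to settle this graph statement. The only bound I know in general is the multiplicative one, $\k_\path(\cG)=O(\k(\cG)\cdot\log|V(\cG)|)$, proved by taking a width-$\k(\cG)$ tree decomposition $(S,\beta)$ of $\cG$, using that the tree $S$ (on $O(|V(\cG)|)$ nodes) has a path decomposition $(Q,\gamma)$ of width $O(\log|V(\cG)|)$, and checking that $p\mapsto\bigcup_{s\in\gamma(p)}\beta(s)$ is a path decomposition of $\cG$. This yields $\k_\path(\cG)-\k(\cG)=O(\k(\cG)\log|V(\cG)|)$, which matches the conjectured $O(\log|V(\cG)|)$ only when $\k(\cG)$ is bounded --- exactly the regime of Corollary~\ref{pathwidth_cor}, where $\k(\bar{Y_i})=2$.

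\textbf{Step 3 (general codes, and the obstacle).}
Granting the graph statement, one would lift it to arbitrary $\F_q$-linear codes by a matroid analogue of the argument in Step~2: given a cubic-tree decomposition $(T,\omega)\in\cQ(\cC)$ realizing $\sg(\cC)$, note that $T$ has $O(n)$ vertices, hence a path decomposition of width $O(\log n)$; sweep $T$ along it, and at each stage bound $\mu_\cC(J)$ of the swept-out coordinate set $J$ by the sum of $\mu_\cC(J(e))$ over the $O(\log n)$ edges of $T$ crossing the current bag, using submodularity of $\mu_\cC$. This produces a linear coordinate order with prefix cut-values $O(\sg(\cC)\log n)$; by the minimality/essentialization results of Sections~\ref{state_merging_section}--\ref{new_construct_section}, the corresponding conventional trellis attains these values as its state complexity, so $\sg_\tr(\cC)=O(\sg(\cC)\log n)$. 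The real obstacle is that both this and the graph bound of Step~2 are multiplicative in the (branch/tree)width: to obtain the uniform $O(\log n)$ of the conjecture one needs an essentially additive estimate, $\sg_\tr(\cC)\le\sg(\cC)+O(\log n)$, equivalently $\k_\path(\cG)\le\k(\cG)+O(\log|V(\cG)|)$ for graphs. Whether such an additive bound holds is, as far as I know, open, and it is equivalent to controlling the pathwidth of blow-ups of bounded-pathwidth trees --- precisely the constructions that witness tightness of the multiplicative bound. I expect this additive-versus-multiplicative gap to be the crux: either one proves a dichotomy for codes of large branchwidth (a grid-like part forcing $\sg_\tr$ and $\sg$ to agree up to an additive $O(\log n)$, versus a sufficiently slender recursive separator structure along which the sweep loses only an additive term), or else one exhibits a counterexample, most naturally of the form $\cC[\bar{\cG_i}]$, refuting the conjecture.
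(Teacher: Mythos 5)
The statement you were asked to prove is Conjecture~\ref{conj1}, which the paper itself leaves open: it offers no proof, only evidence (the family of Corollary~\ref{pathwidth_cor} shows a gap of order $\log n_i$ does occur, and the conjecture asserts that this rate is extremal), so there is no argument in the paper to compare yours against. More to the point, your proposal is not a proof either, as you yourself concede in Step~3. Steps~1 and~2 are correct and genuinely useful: the reduction $\k_\tr(\cC)-\k(\cC)\le \sg_\tr(\cC)-\sg(\cC)+1$ via Proposition~\ref{bw_tw_bnds}, and the observation that, through $\cC[\bar{\cG}]$ together with Theorems~\ref{treewidth_theorem} and~\ref{pathwidth_theorem} and the identity $\k(\bar{\cG})=\k(\cG)+1$, the conjecture would force the uniform graph bound $\k_\path(\cG)-\k(\cG)=O(\log|V(\cG)|)$, correctly identify what the conjecture is really about. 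But the only estimate you can actually establish is the multiplicative one, $\k_\path(\cG)=O(\k(\cG)\log|V(\cG)|)$, equivalently $\sg_\tr(\cC)=O(\sg(\cC)\log n)$ via your sweep-with-submodularity argument; this yields the conjectured additive $O(\log n)$ gap only when the treewidth or branchwidth is bounded, which is exactly the regime already settled by Corollary~\ref{pathwidth_cor}. The needed additive estimate $\sg_\tr(\cC)\le\sg(\cC)+O(\log n)$, equivalently $\k_\path(\cG)\le\k(\cG)+O(\log|V(\cG)|)$, is precisely the missing step, and you leave it open; the proposal therefore reduces one open statement to another rather than proving anything.

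That said, the equivalence you set up in Step~2 is worth keeping, because it cuts both ways: it shows that Conjecture~\ref{conj1} stands or falls with an additive pathwidth-versus-treewidth bound for graphs, so the families that witness tightness of the multiplicative bound (clique substitutions into trees of growing pathwidth, such as blow-ups of the trees $Y_i$ of Example~\ref{treewidth_example2}) are exactly the candidates that either refute the conjecture or must be controlled by any eventual proof. If you want to make progress, the concrete next step is to compute (or bound from below) $\k_\path$ and $\k$ for such a blow-up family and transfer the result to codes via Theorems~\ref{treewidth_theorem} and~\ref{pathwidth_theorem}, rather than pursuing the dichotomy you sketch, for which you currently have no supporting lemma.
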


The codes $\cC_i$ constructed in the proof of 
Corollary~\ref{pathwidth_cor} all have treewidth equal to two. 
Issues related to families of codes whose treewidth is bounded 
by a constant are discussed next.

\subsection{Codes of Bounded Complexity\label{bounded_section}}
Many NP-hard combinatorial problems on graphs are known to be solvable in 
polynomial (often, linear) time when restricted to graphs of bounded 
treewidth \cite{AP89},\cite{bod88}. In this subsection, we will see 
that the same general principle applies to problems pertaining to codes 
as well.

Let $\F_q = GF(q)$ be a fixed finite field. Given an integer $t \geq 0$,
denote by $\TW(t)$ (resp.\ $\BW(t)$) the family of all codes over $\F_q$
of treewidth (resp.\ branchwidth) at most $t$. Thus, a family $\mfC$ of 
codes over $\F_q$ is said to have bounded treewidth (resp.\ branchwidth)
if $\mfC \subseteq \TW(t)$ (resp.\ $\mfC \subseteq \BW(t)$) for some
integer $t$. Note that by Proposition~\ref{bw_tw_bnds},
$\BW(\lfloor t/2 \rfloor) \subseteq \TW(t) \subseteq \BW(t)$, and
so, a code family $\mfC$ has bounded treewidth if and only if it has bounded
branchwidth. 
 
A fundamental result of coding theory \cite{BMvT} states that the problem 
of ML decoding is NP-hard for an arbitrary family of codes. However,
we will now show that this problem becomes solvable in linear time for 
any code family of bounded treewidth. So, consider a code family
$\mfC \subseteq \TW(t)$, where $t$ is a \emph{fixed} integer, and 
pick an arbitrary $\cC \in \mfC$. Let $n$ denote the length of $\cC$. 
By definition, $\cC$ has a minimal realization $\cM(\cC;T,\omega)$ 
with constraint complexity at most $t$. 
Moreover, by (\ref{treewidth_def}), $(T,\omega)$ can be chosen
to be in $\cQ(\cC)$, \emph{i.e.}, it may be chosen so that
$T$ is a cubic tree, and $\omega$ maps the index set of $\cC$
bijectively onto the leaves of $T$. In particular, the number of 
leaves of $T$ equals the cardinality, $n$, of the index set of $\cC$.

Now, recall that ML decoding of $\cC$ may be implemented as a sum-product
algorithm on any tree realization of $\cC$, and in particular, 
on $\cM(\cC;T,\omega)$. The computational complexity of the 
sum-product algorithm on $\cM(\cC;T,\omega)$ is determined
by the computations that take place at the internal nodes of $T$. 
By an estimate of Forney \cite[Theorem~5.2]{For01}, the 
number of computations at the internal node $v \in V(T)$ is 
of the order of $\delta_v (\delta_v-2) q^{\dim(C_v^*)}$, where 
$\delta_v$ is the degree of $v$ in $T$. Since $T$ is cubic, $\delta_v = 3$,
and since the constraint complexity of $\cM(\cC;T,\omega)$ is at most $t$,
we have $\dim(C_v^*) \leq t$. Hence, the number of computations performed
by the sum-product algorithm at any internal node of $T$ is bounded
by $3q^t$, which is a constant. Now, $T$ is a cubic tree on $n$ leaves,
so it has at most $n-2$ internal nodes. It follows that the computational
complexity of the sum-product algorithm on $\cM(\cC;T,\omega)$ is $O(n)$,
the constant in the $O$-notation being proportional to $3q^t$. 
Thus, there is a linear-time implementation of ML decoding for any 
$\cC \in \mfC$.

A question that naturally arises in this context is that of how hard 
it is to explicitly determine the minimal tree realization required for 
linear-time implementation of ML decoding. Note that this is not exactly
a decoding complexity issue, since the determination of a suitable
$\cM(\cC;T,\omega)$ may be done ``off-line'' for each $\cC \in \mfC$.

An explicit determination of $\cM(\cC;T,\omega)$ involves finding a
tree decomposition $(T,\omega) \in \cQ(\cC)$ such that 
$\k(\cC;T,\omega) \leq t$, and the specification of the state spaces
and the local constraint codes of $\cM(\cC;T,\omega)$. Given a 
$(T,\omega) \in \cQ(\cC)$ satisfying $\k(\cC;T,\omega) \leq t$,
the state spaces and local constraint codes of $\cM(\cC;T,\omega)$ may
be determined by the \texttt{MIN\_REALZN} procedure of 
Section~\ref{new_construct_section}. An estimate of the computational
complexity of this procedure was given in that section, in terms
of the length and dimension of $\cC$, the number of edges in $T$,
and the quantity $r_{\max}$ defined in (\ref{rmax_eq}). Comparing 
(\ref{rmax_eq}) with (\ref{s_CT_omega}), we see that $r_{\max}$ 
is simply $\sg(\cC;T,\omega)$, which by Proposition~\ref{bw_tw_bnds},
is bounded from above by $t$. The number of edges of $T$ is $|V(T)|-1$,
and $V(T)$ consists of $n$ leaves and at most $n-2$ internal nodes, $n$
being the length of $\cC$. Therefore, by the estimate of the computational
complexity of \texttt{MIN\_REALZN} given in 
Section~\ref{new_construct_section}, for an $[n,k]$ code $\cC \in \mfC$, 
and a $(T,\omega) \in \cQ(\cC)$ such that $\k(\cC;T,\omega) \leq t$, 
the minimal realization $\cM(\cC;T,\omega)$ may be constructed in 
$O(k^2n^2)$ time. Note that $t$ appears in the exponent of the
constant implicit in the $O$-notation.

This leaves us with the problem of finding, for a given code $\cC \in \TW(t)$,
a tree decomposition $(T,\omega) \in \cQ(\cC)$ such that 
$\k(\cC;T,\omega) \leq t$. Unfortunately, there appears to be no 
efficient algorithm known for solving this problem. However, reasonably
good algorithms do exist for solving a closely related problem: 
given a code $\cC \in \BW(t)$, find a tree decomposition 
$(T,\omega) \in \cQ(\cC)$ such that $\sg(\cC;T,\omega) \leq t$. 
Several polynomial-time algorithms for solving this problem are given
in \cite{HO07}, the most efficient of these being an algorithm 
that runs in $O(n^3)$ time\footnote{As usual, the constant hidden in the 
$O$-notation depends exponentially on $t$.}, $n$ being the length of $\cC$.
Now, by Proposition~\ref{bw_tw_bnds}, any $\cC \in \TW(t)$ is also
in $\BW(t)$, and furthermore, $\k(\cC;T,\omega) \leq 2\sg(\cC;T,\omega)$. 
Therefore, the algorithms of \cite{HO07} find, for a given code 
$\cC \in \TW(t)$, a tree decomposition $(T,\omega) \in \cQ(\cC)$ 
such that $\k(\cC;T,\omega) \leq 2t$. 
This is sufficient for our purposes, as the computational complexity
of the sum-product algorithm on the resulting $\cM(\cC;T,\omega)$
would still be $O(n)$, except that the constant in the $O$-notation
would now be proportional to $3q^{2t}$.

While code families of bounded treewidth have the desirable property
of having linear decoding complexity, it is very likely that they
do not have good error-correcting properties. We give an argument to
support the plausibility of this statement. Recall from coding theory 
that a code family $\mfC$ is called \emph{asymptotically good} if there 
exists a sequence of $[n_i,k_i,d_i]$ codes $\cC_i \in \mfC$, with 
$\lim_i n_i = \infty$, such that $\liminf_i k_i/n_i$ and 
$\liminf_i d_i/n_i$ are both strictly positive. The code family 
$\cC_i$, $i \geq 1$, from the proof of Corollary~\ref{pathwidth_cor} 
has bounded treewidth, but is not asymptotically good: 
$k_i/n_i \rightarrow 1/4$, but $d_i/n_i \rightarrow 0$, 
as $i \rightarrow \infty$. 

It is known that if a code family $\mfC$ has bounded trellis complexity, 
\emph{i.e.}, if there exists an integer $t$ such that 
$\sg_\tr(\cC) \leq t$ for all $\cC \in \mfC$, then $\mfC$ is not 
asymptotically good. This is a consequence of the following bound 
for an $[n,k,d]$ code $\cC$ \cite{LV95}:
\beq
\sg_\tr(\cC) \geq \frac{k}{n} \, (d-1).
\label{s_trellis_bnd}
\eeq
Since $\sg_\tr(\cC) \leq \k_\tr(\cC)$, the above is also a lower bound
on $\k_\tr(\cC)$. 

Now, suppose that $\cC_i$, $i \geq 1$, is a sequence of $[n_i,k_i,d_i]$
codes of bounded treewidth, so that there exists some $t \geq 0$
such that for all $i$, $\k(\cC_i) \leq t$. Hence, from (\ref{s_trellis_bnd}),
we have $\k_\tr(\cC_i) - \k(\cC_i) \geq \frac{k_i}{n_i} \, (d_i-1) - t$.
Therefore, assuming the validity of Conjecture~\ref{conj1}, we have 
$$
\limsup_i \frac{\frac{k_i}{n_i} \, (d_i-1) - t}{\log n_i} < \infty.
$$
Since $t$ is a constant, this implies that 
$$
\limsup_i \frac{k_i d_i}{n_i \log n_i} < \infty,
$$
from which we infer that at least one of $\liminf_i k_i/n_i$ and 
$\liminf_i d_i/n_i$ must be zero. 

Observe that the final conclusion of the above argument can also be reached
if we only assume that $\k(\cC_i)$ grows at most logarithmically with $n_i$.
We formalize this as a conjecture.
\begin{conjecture}
Let $\cC_i$, $i \geq 1$, be any sequence of $[n_i,k_i,d_i]$ codes 
such that $\limsup_i \frac{\k(\cC_i)}{\log n_i} < \infty$.
Then, $\liminf_i k_i/n_i$ and $\liminf_i d_i/n_i$ cannot both be
strictly positive. In particular, for any $t \geq 0$, 
the code family $\TW(t)$ is not asymptotically good.
\label{conj2}
\end{conjecture}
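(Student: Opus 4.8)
\medskip\noindent\emph{A possible approach.}
The plan is to prove the contrapositive. Working in the only regime relevant to asymptotic goodness, namely $n_i\to\infty$, I would show that if $\liminf_i k_i/n_i>0$ and $\liminf_i d_i/n_i>0$, then $\k(\cC_i)$ grows faster than any constant multiple of $\log n_i$. Two ingredients are combined. First, the Lafourcade--Vardy-type bound (\ref{s_trellis_bnd}), $\sg_\tr(\cC)\ge\frac{k}{n}(d-1)$, together with $\sg_\tr(\cC)\le\k_\tr(\cC)$, forces $\k_\tr(\cC_i)=\Omega(n_i)$ under the stated hypotheses on rate and distance. Second, a code-theoretic analogue of the classical graph inequality ``pathwidth is at most $O(\log(\text{order}))$ times treewidth'' should give $\k_\tr(\cC)\le\k(\cC)\cdot p(\log n)$ for every length-$n$ code $\cC$, with $p$ a fixed polynomial. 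Chaining the two, the assumption $\k(\cC_i)=O(\log n_i)$ yields $\Omega(n_i)\le\k_\tr(\cC_i)=O(p(\log n_i))$, a contradiction for large $i$. The ``in particular'' clause then follows at once, since every length-increasing family inside $\TW(t)$ has $\k(\cC)\le t$, hence $\k(\cC)/\log n\to 0$.

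\medskip
The substance is the second ingredient. Fix a tree decomposition $(T,\omega)\in\cQ(\cC)$ attaining $\k(\cC;T,\omega)=\k(\cC)$; by Proposition~\ref{bw_tw_bnds} it also satisfies $\sg(\cC;T,\omega)\le\k(\cC)$. Put $\lambda(J)=\dim(\cC|_J)+\dim(\cC|_{\oJ})-\dim(\cC)$ for $J\subseteq I$. I would first record that $\lambda$ is non-negative, vanishes at $\emptyset$, is symmetric ($\lambda(J)=\lambda(\oJ)$), and is submodular --- the last because $J\mapsto\dim(\cC|_J)$ is the rank function of the column matroid of a generator matrix of $\cC$, while $J\mapsto\dim(\cC|_{\oJ})$ is obtained from a submodular function by complementation, which preserves submodularity. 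Submodularity together with $\lambda(\emptyset)=0$ gives subadditivity, $\lambda(A\cup B)\le\lambda(A)+\lambda(B)$, and likewise $\lambda(A\cap B)\le\lambda(A)+\lambda(B)$. By (\ref{s_CT_omega}) and (\ref{dimSe*_alt}), $\lambda(J(e))=\dim(\cS_e^*)\le\sg(\cC;T,\omega)$ for every edge $e$, and the state complexity of the trellis realization determined by a linear order $\pi$ of $I$ equals $\max_i\lambda(\{\pi(1),\dots,\pi(i)\})$. Next I would prove a balanced-separator lemma for trees of maximum degree three: orienting every edge toward the side containing a majority of leaves and following the orientations to a sink vertex yields an edge $e$ with $\lceil n/3\rceil\le|J(e)|\le\lfloor 2n/3\rfloor$, because the sink must be internal once $n\ge 3$ and at an internal node the three pendant leaf-counts cannot all lie below $n/3$. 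Applying this recursively --- split $I$ at the separator edge, order $J(e)$ and $\oJ(e)$ recursively, concatenate, and stop at singletons --- produces a linear order of recursion depth $O(\log n)$ in which each complete block is an intersection of $O(\log n)$ sets of the form $J(e')$ or $\oJ(e')$, each of $\lambda$-value at most $\sg(\cC;T,\omega)$, and each prefix is a union of $O(\log n)$ complete blocks. Subadditivity of $\lambda$ then bounds its value at every prefix by $O(\log^2 n)\cdot\sg(\cC;T,\omega)$, so $\k_\tr(\cC)\le\sg_\tr(\cC)+1=O(\k(\cC)\log^2 n)$, which is comfortably enough. (A tighter ``$O(\log n)$-deep stack of blocks'' accounting, in the style of the Korach--Solel bound for graphs, would replace $\log^2 n$ by $\log n$, but this refinement is not needed.)

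\medskip
The step I expect to be the main obstacle is precisely this transcription of the graph-theoretic tree-to-path conversion into the code setting: although the graph prototype is routine, a faithful version here requires pinning down the claim that every prefix of the recursive linear order is a short Boolean combination of the cut sets $J(e')$, and then carrying the submodular estimates through these combinations so that the accumulated constant stays polylogarithmic in $n$. A cheaper-looking alternative is to assume Conjecture~\ref{conj1} and rerun the derivation that precedes the statement of Conjecture~\ref{conj2} --- a logarithmic bound on $\k_\tr(\cC_i)-\k(\cC_i)$ already implies the conclusion --- but this merely trades the obstacle above for the harder, still-open task of proving Conjecture~\ref{conj1}; since Conjecture~\ref{conj2} needs only a boundedness-type hypothesis on the $\k(\cC_i)$ rather than a gap bound valid for all codes, I would pursue the direct route.
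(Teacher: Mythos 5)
You should first be aware that the paper contains no proof of this statement: it is posed as Conjecture~\ref{conj2} and explicitly listed in Section~\ref{conclusion} as the main open problem, the only supporting argument in the text being the conditional derivation (from the unproven Conjecture~\ref{conj1}) that precedes its statement. So there is no ``paper proof'' to match; what you have written is an attempt at an unconditional resolution, and it should be judged as such.

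On its merits, your outline looks sound to me, and it is genuinely different from anything the author suggests (he proposes either a treewidth analogue of (\ref{s_trellis_bnd}) or an inductive decomposition argument). Your route is: (i) the Lafourcade--Vardy bound forces $\sg_\tr(\cC_i)=\Omega(n_i)$ for any rate- and distance-good sequence with $n_i\to\infty$ (you are right that $n_i\to\infty$ must be read into the conjecture; otherwise a constant sequence of one good code falsifies it literally); and (ii) a multiplicative tree-to-path conversion $\k_\tr(\cC)\le \sg_\tr(\cC)+1 = O(\k(\cC)\log^2 n)$. I checked the ingredients of (ii): $\lambda(J)=\dim(\cC|_J)+\dim(\cC|_{\oJ})-\dim(\cC)$ is indeed nonnegative, symmetric and submodular (matroid rank plus a complemented rank), so both union- and intersection-subadditivity hold; for $(T,\omega)\in\cQ(\cC)$ attaining $\k(\cC)$ one has $\lambda(J(e))\le\sg(\cC;T,\omega)\le\k(\cC)$ by (\ref{dimSe*_alt}), (\ref{s_CT_omega}) and Proposition~\ref{bw_tw_bnds}; the balanced-edge lemma holds in a cubic tree whose leaves carry the indices bijectively (minor imprecision: orient edges toward the side containing a majority of the \emph{current block} $S$, not of all leaves, and treat ties as an immediate balanced split); splits in $[\,|S|/3,2|S|/3\,]$ give recursion depth $O(\log n)$; every prefix of the resulting order is a union of $O(\log n)$ blocks, each an intersection of $O(\log n)$ sets $J(e')$ or $\oJ(e')$, so $\lambda(\text{prefix})=O(\k(\cC)\log^2 n)$, and prefixes are exactly the cut sets of a path decomposition, so this bounds $\sg_\tr(\cC)$. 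Chaining (i) and (ii) against the hypothesis $\k(\cC_i)=O(\log n_i)$ gives the contradiction, and the ``in particular'' clause follows. I see no step that fails; the caveat is simply that the pivotal lemma in (ii) is new relative to this paper (in effect it resolves the open conjecture, and proves the stronger statement that $\k_\tr$ exceeds $\k$ by at most a polylogarithmic factor), so it must be written out in full --- in particular the prefix-as-union-of-intersections accounting and the tie/degenerate cases of the balanced-edge argument --- before the claim can be accepted; it cannot be checked against the paper, only against itself.
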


We wrap up our discussion on complexity measures for codes 
by elaborating on a comment we made at the beginning of this 
subsection, in which we implied that hard coding-theoretic problems 
often become polynomial-time solvable when restricted to codes of bounded 
complexity. We saw earlier several examples of algorithms 
that, given a code $\cC \in \TW(t)$, solve some problem in time
polynomial in the length of $\cC$. In each of these cases, the 
computational complexity of the algorithm displayed an exponential 
dependence on the parameter $t$. But since $t$ was a fixed constant,
this exponential dependence could be absorbed into the constant 
hidden in the ``big-$O$'' estimate of the complexity. Thus, fixing
the parameter $t$ allowed a potentially intractable coding-theoretic 
problem to become tractable. Problems that may be hard in general, 
but which become solvable in polynomial time when one of the parameters 
of the problem is fixed, are called \emph{fixed-parameter tractable}. 
We noted previously that the problems of computing the treewidth and
branchwidth of a code are NP-hard. It should come as no surprise 
that these problems are in fact fixed-parameter tractable. 
Hlin{\v e}n\'y \cite{hlineny05} gives an $O(n^3)$ algorithm that,
for a fixed integer $t$, determines whether or not a given length-$n$
code is in $\BW(t)$. From this, one can also prove the existence of 
an $O(n^3)$ algorithm for deciding membership of a given 
length-$n$ code in $\TW(t)$ \cite{hlineny07}.

\section{Concluding Remarks\label{conclusion}}

Perhaps the most significant problem that remains open in the context of
minimal tree realizations of codes is the resolution of Conjecture~\ref{conj2},
which proposes that codes of bounded treewidth cannot be asymptotically good.
It may be possible to resolve this by deriving a lower bound on 
treewidth along the lines of the bound in (\ref{s_trellis_bnd}).
Another possibility is an inductive approach using code decompositions.

However, an open problem of far greater significance is the development of a
general theory of minimal realizations of codes on graphs with cycles. 
At present, such a theory only exists for the case of realizations
of codes on graphs consisting of a single cycle, \emph{i.e.}, 
tail-biting trellis realizations \cite{KV03}. This simplest case of
graphs with cycles is already more difficult to study than the cycle-free 
case --- for example, there can be several non-equivalent definitions
of minimality in the context of tail-biting trellis realizations. 
The challenge posed by graphs with more complex cycle structures can
only be greater.

\appendix

\section{Proofs of Lemmas~\ref{b|e_lemma} and 
\ref{ess_lemma}\label{sec2_lemmas_app}}

\noindent \emph{Proof of Lemma~\ref{b|e_lemma}}.
Consider an arbitrary $e \in E$. 
An arbitrary global configuration $\b$ may be written in the form
$({\b|}_{J(e)},{\b|}_{E(T_e)},{\b|}_e,
{\b|}_{E(\bar{T}_e)},{\b|}_{\bar{J}(e)})$.
Now, suppose that $\b$ is such that ${\b|}_e = \0$, \emph{i.e.},
$\b = ({\b|}_{J(e)},{\b|}_{E(T_e)},\0,
{\b|}_{E(\bar{T}_e)},{\b|}_{\bar{J}(e)})$.
Observe that the global configurations 
$$
\b' = ({\b|}_{J(e)},{\b|}_{E(T_e)},\0,\0,\0) \ \ \text{ and } \ \
\b'' = (\0,\0,\0,{\b|}_{E(\bar{T}_e)},{\b|}_{\bar{J}(e)})
$$
also satisfy all local constraints (since $\0 \in {\mfB|}_v$ 
for each $v \in V$), and hence are in $\mfB$.
Therefore, $({\b|}_{J(e)},\0) = {\b'|}_I \in \cC$, and
so by definition of $\cC_{J(e)}$, we have 
${\b|}_{J(e)} \in \cC_{J(e)}$. Similarly, 
$(\0,{\b|}_{\bar{J}(e)}) = {\b''|}_I \in \cC$, so that 
${\b|}_{\bar{J}(e)} \in \cC_{\bar{J}(e)}$. 
Hence, ${\b|}_I = ({\b|}_{J(e)},{\b|}_{\bar{J}(e)}) 
\in \cC_{J(e)} \oplus \cC_{\bar{J}(e)}$. 
\qed
\mbox{}\\

\noindent \emph{Proof of Lemma~\ref{ess_lemma}}.
For any tree model (essential or not), we have, by definition, 
${\mfB|}_v \subseteq C_v$ for all $v \in V$. So we need
only show the reverse inclusion in the case when 
$\G=(T,\omega,\,(\cS_e,\, e \in E),\, (C_v,\, v \in V))$ is an essential 
tree model.

Pick an arbitrary $v \in V$. Let $e_1,e_2,\ldots,e_\delta$ be the edges
of $T$ incident with $V$. For $i = 1,2,\ldots,\delta$, let $T_i$ denote
the component of $T - e_i$ that does not include $v$. Set $F_i = E(T_i)$,
and $J_i = \omega^{-1}(V(T_i))$.
We will write an arbitrary configuration $\b \in \mfB$ as 
$$
\left({\b|}_{\omega^{-1}(v)},\, ({\b|}_{e_1}, {\b|}_{F_1}, {\b|}_{J_1}),
\ldots,({\b|}_{e_\delta}, {\b|}_{F_\delta}, {\b|}_{J_\delta})\right).
$$

Consider any $(\c_0,\c_1,\ldots,\c_\delta) \in C_v$,
where $\c_0 \in \F^{\omega^{-1}(v)}$, and 
$\c_i \in \cS_{e_i}$ for $i = 1,\ldots,\delta$. As the tree model $\G$ is 
essential, we have $\cS_{e_i} = {\mfB|}_{e_i}$ for all $i$. 
In particular, $\c_i \in {\mfB|}_{e_i}$, so that there exists 
$\b^{(i)}  \in \mfB$ such that ${\b^{(i)}|}_{e_i} = \c_i$.
As $\b^{(i)}$ is in $\mfB$, its ``sub-configuration'' 
$(\c_i,{\b^{(i)}|}_{F_i},{\b^{(i)}|}_{J_i})$ satisfies the local
constraints of $\G$ at all vertices in $V(T_i)$. Hence, 
$$
\bar{\b} \ = \ \left(\c_0,\, (\c_1,{\b^{(1)}|}_{F_1},{\b^{(1)}|}_{J_1}),
\ldots, (\c_\delta,{\b^{(\delta)}|}_{F_\delta},{\b^{(\delta)}|}_{J_\delta})
\right)
$$
satisfies the local constraints of $\G$ at all vertices in 
$\bigcup_{i=1}^{\delta} V(T_i)$. Now, $v$ is the only vertex of $T$ 
that is not in $\bigcup_{i=1}^{\delta} V(T_i)$. But, by construction,
${\bar{\b}|}_v = (\c_0,\c_1,\ldots,\c_\delta) \in C_v$,
and so, $\bar{\b}$ also satisfies the local constraint at $v$. Thus,
$\bar{\b}$ satisfies all local constraints of $\G$, so that
$\bar{\b} \in \mfB$. Hence, 
$(\c_0,\c_1,\ldots,\c_\delta) = {\bar{\b}|}_v$ is in ${\mfB|}_v$,
which proves the lemma.
\qed

\section{Proofs of Lemmas~\ref{Gbar_lemma1} and 
\ref{Gbar_lemma2}\label{Gbar_lemmas_app}}

\noindent \emph{Proof of Lemma~\ref{Gbar_lemma1}}. 
For simplicity of notation, let $F$ denote the edge set of the subtree 
$T_\he$, and let $\bar{F}$ denote that of the subtree $\bar{T}_\he$.
Note that $F \cup \bar{F} = E(T) - {\he}$.
Throughout this proof, we will write an arbitrary global configuration $\b$,
belonging to $\mfB$ or $\mfB(\bar{\G})$, in the form 
$({\b|}_J,{\b|}_F,{\b|}_{\he},{\b|}_{\bar{F}},{\b|}_{\bar{J}})$.

Consider any $\ob = ({\ob|}_J,{\ob|}_F,{\ob|}_\he,{\ob|}_\oF,{\ob|}_\oJ) 
\in \mfB(\bar{\G})$. Let $\ell$ and $r$ be the two vertices incident 
with the edge $\he$ in $T$. We assume that $\ell \in V(T_\he)$ and 
$r \in V(\oT_\he)$, as depicted in Figure~\ref{lemma1_pf_fig}. 
We write the local configuration ${\ob|}_\ell$ as 
$({\ob|}_{E(\ell) - \he}, {\ob|}_{\omega^{-1}(\ell)}, {\ob|}_\he)$,
and ${\ob|}_{r}$ as
$({\ob|}_\he, {\ob|}_{\omega^{-1}(r)}, {\ob|}_{E(r) - \he})$.

Suppose first that ${\ob|}_\he = \0$; note that the zero element of 
$\bar{\cS}_\he$ ($=\cS_\he/W$) is $W$. By definition of $\bar{\G}$,
${\ob|}_\ell \in {\bar{\mfB}|}_\ell = {\Phi(\mfB)|}_\ell$. Hence,
there exists $({\ob|}_{E(\ell) - \he}, {\ob|}_{\omega^{-1}(\ell)}, \w)
\in {\mfB|}_\ell$, for some $\w \in W$. Now, $({\ob|}_J,{\ob|}_F)$ 
(being a ``sub-configuration'' of $\ob$) satisfies the local
constraints of $\bar{\G}$ at all vertices in $V(T_\he) - \{\ell\}$. But these
local constraints are of the form ${\bar{\mfB}|}_v$ which, for
$v \in V(T_\he) - \{\ell\}$, is identical to $\mfB_v$. Therefore, 
the sub-configuration $({\ob|}_J,{\ob|}_F)$ satisfies the local
constraints of $\G$ at all vertices in $V(T_\he) - \{\ell\}$. It 
follows that $({\ob|}_J,{\ob|}_F,\w)$ satisfies the local constraints 
of $\G$ at all vertices in $V(T_\he)$, including $\ell$. By a similar argument,
there exists a $\w' \in W$ such that $(\w',{\ob|}_\oF,{\ob|}_\oJ)$ satisfies
the local constraints of $\G$ at all vertices in $V(\oT_\he)$.

\begin{figure}
\epsfig{file=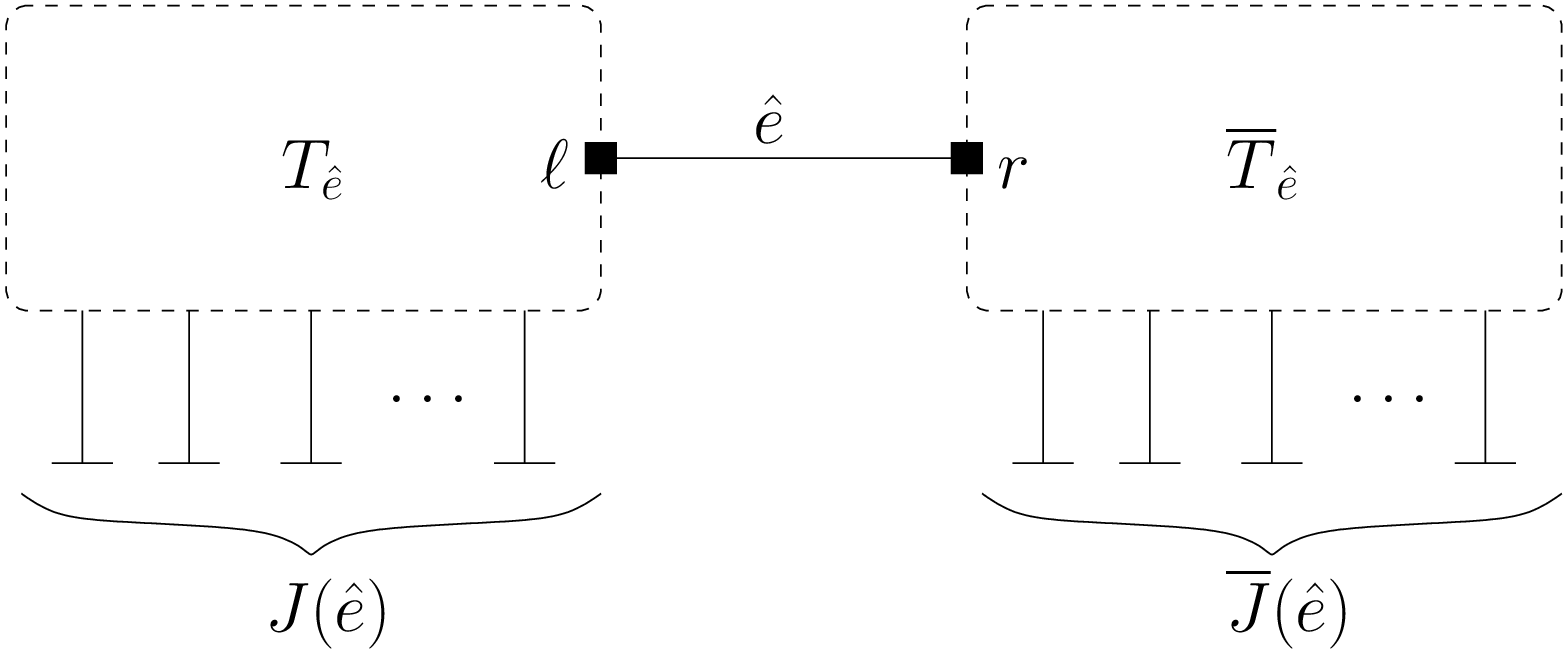, width=8cm}
\caption{A depiction of the two subtrees $T_\he$ and $\oT_\he$
connected by the edge $\he$.}
\label{lemma1_pf_fig}
\end{figure}

Now, by definition of $W$, there exist $\b$ and $\b'$ in $\mfB$, such that
$\b = ({\b|}_J,{\b|}_F,\w,{\b|}_\oF,{\b|}_\oJ)$,
$\b' = ({\b'|}_J,{\b'|}_F,\w',{\b'|}_\oF,{\b'|}_\oJ)$,
and 
$({\b|}_J,{\b|}_\oJ), ({\b'|}_J,{\b'|}_\oJ) \in \cC_J \oplus \cC_\oJ$.
Note, in particular, that the sub-configuration
$(\w,{\b|}_\oF,{\b|}_\oJ)$ 
of $\b$ satisfies the local constraints of $\G$ at all vertices in 
$V(\oT_\he)$. Therefore, the global configuration 
$\g = ({\ob|}_J,{\ob|}_F,\w,{\b|}_\oF,{\b|}_\oJ)$ satisfies the local 
constraints of $\G$ at all vertices in $T$, and hence is in the
full behavior, $\mfB$, of $\G$. A similar argument shows that 
$\g' = ({\b'|}_J,{\b'|}_F,\w',{\ob|}_\oF,{\ob|}_\oJ)$ is also in $\mfB$.

As $\mfB$ is a vector space, it must also contain 
$$
\b-\g = ({\b|}_J - {\ob|}_J, {\b|}_F - {\ob|}_F, \0,\0,\0)
$$
and
$$
\b'-\g' = (\0,\0,\0, {\b'|}_\oF - {\ob|}_\oF, {\b'|}_\oJ - {\ob|}_\oJ).
$$
Since $\G$ is a tree realization of $\cC$, we have ${\mfB|}_I = \cC$.
In particular, $({\b|}_J - {\ob|}_J, \0) = {(\b-\g)|}_I \in \cC$,
and similarly, $(\0, {\b'|}_\oJ - {\ob|}_\oJ) = {(\b'-\g')|}_I \in \cC$.
Hence, ${\b|}_J - {\ob|}_J \in \cC_J$ and 
${\b'|}_\oJ - {\ob|}_\oJ \in \cC_\oJ$. However, $\b$ and $\b'$ were
chosen so that ${\b|}_J \in \cC_J$ and ${\b'|}_\oJ \in \cC_\oJ$.
Thus, we also have ${\ob|}_J \in \cC_J$ and ${\ob|}_\oJ \in \cC_\oJ$.
This finally yields 
${\ob|}_I = ({\ob|}_J, {\ob|}_\oJ) \in \cC_J \oplus \cC_\oJ$,
thus proving one direction of part (b) of the lemma.

We will next show that if ${\ob|}_\he \neq \0$, then $\ob_I \in \cC$
but $\ob_I \notin \cC_J \oplus \cC_\oJ$. This will prove 
both part (a) and the reverse direction of part (b).

So, suppose that ${\ob|}_\he = \bar{\s} \neq \0$. Thus, $\bar{\s}$
is some coset of $W$ in $\cS_\he$, but is not $W$ itself. Pick some
$\s \in \bar{\s}$. As $\cS_\he = {\mfB|}_\he$, there exists some
$\b \in \mfB$ such that ${\b|}_\he = \s$. Observe that 
${\b|}_I \in {\mfB|}_I = \cC$, but since $\s \in \bar{\s} \neq W$,
${\b|}_I \notin \cC_J \oplus \cC_\oJ$. 

Define $\tb = \Phi(\b)$, so that $\tb \in \bar{\mfB}$.
Furthermore, ${\tb|}_\he = \bar{\s}$, and 
${\tb|}_I$ ($= {\b|}_I$) is in $\cC$ but not in $\cC_J \oplus \cC_\oJ$.
We have already noted (prior to the statement of Lemma~\ref{Gbar_lemma1})
that $\bar{\mfB} \subseteq \mfB(\bar{\G})$. Therefore, 
$\tb \in \mfB(\bar{\G})$, and since $\mfB(\bar{\G})$ is a vector space, 
$\ob - \tb \in \mfB(\bar{\G})$. 

However, ${(\ob - \tb)|}_\he = \bar{\s} - \bar{\s} = \0$, and as we
showed above, this implies that ${(\ob - \tb)|}_I \in \cC_J \oplus \cC_\oJ$.
Since ${\tb|}_I$ is in $\cC$ but not in $\cC_J \oplus \cC_\oJ$,
we find that ${\ob|}_I \in \cC$, but ${\ob|}_I \notin \cC_J \oplus \cC_\oJ$.

The proof of the lemma is now complete. \qed
\mbox{}\\

\noindent \emph{Proof of Lemma~\ref{Gbar_lemma2}}. 
As $\bar{\G}$ is a tree realization of $\cC$, Lemma~\ref{b|e_lemma} shows
that for any $\ob \in \mfB(\bar{\G})$, we have ${\ob|}_{e'} = \0$ 
only if ${\ob|}_I \in \cC_{J(e')} \oplus \cC_{\bar{J}(e')}$. Thus, we
need only prove the converse.

Suppose that $\ob \in \mfB(\bar{\G})$ is such that 
$({\ob|}_{J(e')},{\ob|}_{\oJ(e')}) \in \cC_{J(e')} \oplus \cC_{\bar{J}(e')}$,
but ${\ob|}_{e'} \neq \0$.
Now, ${\ob|}_{e'} \in \bar{\cS}_{e'} = {\bar{\mfB}|}_{e'} = 
{{\mfB}|}_{e'}$, the last equality being a consequence of the fact that
$e' \neq \he$. Therefore, there exists a $\b \in \mfB$ such that 
${\b|}_{e'} = {\ob|}_{e'}$. Note that, by the hypothesis of the lemma, 
${\b|}_I \notin \cC_{J(e')} \oplus \cC_{\bar{J}(e')}$. 

Set $\tb = \Phi(\b)$, so that $\tb \in \bar{\mfB} \subseteq \mfB(\bar{\G})$.
Observe that ${\tb|}_{I} = {\b|}_{I}$, and 
since $e' \neq \he$, we also have ${\tb|}_{e'} = {\b|}_{e'}$. 
Thus, ${\tb|}_{e'} = {\ob|}_{e'}$ and 
${\tb|}_I \notin \cC_{J(e')} \oplus \cC_{\bar{J}(e')}$.
But now, we have $\ob - \tb \in \mfB(\bar{\G})$, with 
${(\ob - \tb)|}_{e'} = \0$, and ${(\ob - \tb)|}_{e'} \notin 
\cC_{J(e')} \oplus \cC_{\bar{J}(e')}$. This contradiction 
of Lemma~\ref{b|e_lemma} proves that there exists no
$\ob \in \mfB(\bar{\G})$ such that 
$({\ob|}_{J(e')},{\ob|}_{\oJ(e')}) \in \cC_{J(e')} \oplus \cC_{\bar{J}(e')}$,
but ${\ob|}_{e'} \neq \0$. \qed

\section{Proof of Forward Direction of 
Theorem~\ref{rsum_theorem}\label{rsum_app}}

\noindent \emph{Proof of \emph{(a)} $\Rightarrow$ \emph{(b)} in 
Theorem~\ref{rsum_theorem}}. Let $\cC = \cC_1 \oplus_r \cC_2$ for 
codes $\cC_1$ and $\cC_2$ defined on the index sets $I_1$ and $I_2$, 
respectively. By definition, $I = I_1 \D I_2$. 
Set $J = I_1 - I_2$ and $\oJ = I_2 - I_1$, so that $(J,\oJ)$ forms a 
partition of $I$. In what follows, words defined on the index set $I_1$ 
will be written in the form $\x = ({\x|}_J, {\x|}_{I_1 \cap I_2})$;
words defined on the index set $I_2$ will be written in the form 
$\x = ({\x|}_{I_1 \cap I_2}, {\x|}_{\oJ})$; words
defined on the index set $I$ will be written as 
$\x = ({\x|}_J,{\x|}_{\oJ})$; and finally, words
on the index set $I_1 \cup I_2$ will be written as 
$\x = ({\x|}_J,{\x|}_{I_1 \cap I_2},{\x|}_{\oJ})$.

We begin by proving that $\dim({\cC|}_J) = \dim(\cC_1)$. This is 
accomplished by a two-step argument: we first show that 
${\cC|}_J = {\cC_1|}_J$, and then we show that ${\cC_1|}_J \cong \cC_1$.

If $\a \in {\cC|}_J$, then there exists some $\x \in \cC_1$, $\y \in \cC_2$
such that ${(\x \star \y)|}_J = \a$. However, ${(\x \star \y)|}_J = {\x|}_J$
as $J$ lies outside $I_1 \cap I_2$. Hence, $\a = {\x|}_J \in {\cC_1|}_J$.
Conversely, suppose that $\a \in {\cC_1|}_J$. Then, there exists 
$\z \in \cC_1^{(p)}$ such that $\x = (\a,\z) \in \cC_1$. Since 
$\cC_1^{(p)} = \cC_2^{(p)}$, there exists $\y = (\z,\b) \in \cC_2$.
Now, $\x \star \y = (\a,\0,\b)$, and hence $(\a,\b) \in \cC$. Thus,
$\a \in {\cC|}_J$, which completes the proof of the fact that 
${\cC|}_J = {\cC_1|}_J$.

Now, to show that $\cC_1 \cong {\cC_1|}_J$, let us consider
the projection map $\pi: \cC_1 \rightarrow {\cC_1|}_J$ defined
by $\pi(\x) = {\x|}_J$. This map is a homomorphism, with kernel
isomorphic to $\cC_1^{(s)}$, which is $\{\0\}$ by definition. 
Hence, $\pi$ is in fact an isomorphism, which proves that 
$\cC_1 \cong {\cC_1|}_J$.

We have thus shown that $\dim({\cC|}_J) = \dim(\cC_1)$. A similar argument
yields the fact that $\dim({\cC|}_\oJ) = \dim(\cC_2)$. Hence,
$$
 \dim({\cC|}_J) + \dim({\cC|}_\oJ) - \dim(\cC) 
 = \dim(\cC_1) + \dim(\cC_2) - \dim(\cC_1 \oplus_r \cC_2)
 = r,
$$
by Corollary~\ref{dim_rsum_cor}.

It remains to show that $\min\{|J|,|\oJ|\} \geq r$. Note that
since $\dim({\cC|}_J) + \dim({\cC|}_\oJ) - \dim(\cC) = r$, and 
$\dim({\cC|}_\oJ) \leq \dim(\cC)$, we must have 
$\dim({\cC|}_J) \geq r$. Therefore, $|J| \geq \dim({\cC|}_J) \geq r$.
By a similar argument, we also have $|\oJ| \geq r$. \qed

\section{Proof of Proposition~\ref{rec_constr_prop}\label{rec_constr_app}}

The proof of Proposition~\ref{rec_constr_prop} requires the following
lemma, which presents a property of the codes $\cC_1$ and $\cC_2$ obtained
via the $r$-sum decomposition procedure of Section~\ref{decomp_section}.

\begin{lemma}
Let $\cC$ be a code defined on the index set $I$, and let 
$(J,\oJ)$ be a partition of $I$, with 
$\dim({\cC|}_J) + \dim({\cC|}_\oJ) - \dim(\cC) = r$.
Suppose that $\cC_1$ and $\cC_2$ are the codes, defined on the
respective index sets $I_1$ and $I_2$, that are obtained by the 
procedure described in the proof of Theorem~\ref{rsum_theorem}. 
Then, for any $J_1 \subseteq J$, and any $J_2 \subseteq \oJ$, we have 
\begin{eqnarray}
\dim({\cC_1|}_{J_1}) + \dim({\cC_1|}_{I_1 - J_1}) - \dim(\cC_1)
&=& \dim({\cC|}_{J_1}) + \dim({\cC|}_{I-J_1}) - \dim(\cC), 
\label{appD_lemma_eq1}
\\
\dim({\cC_2|}_{J_2}) + \dim({\cC_2|}_{I_2 - J_2}) - \dim(\cC_2)
&=& \dim({\cC|}_{J_2}) + \dim({\cC|}_{I-J_2}) - \dim(\cC).
\label{appD_lemma_eq2}
\end{eqnarray}
\label{appD_lemma}
\end{lemma}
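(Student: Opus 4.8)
The plan is to put both ``connectivity'' quantities appearing in (\ref{appD_lemma_eq1})--(\ref{appD_lemma_eq2}) into a normalized form and then compare the pieces. Recall from Section~\ref{code_section} that for a code $\cD$ on an index set $I_\cD$ and any $K \subseteq I_\cD$ one has $\dim(\cD_K) = \dim(\cD) - \dim(\cD|_{\overline{K}})$ with $\overline{K} = I_\cD - K$; hence $\dim(\cD|_K) + \dim(\cD|_{\overline{K}}) - \dim(\cD) = \dim(\cD|_K) - \dim(\cD_K)$. Applying this with $\cD = \cC$, $K = J_1$, and then with $\cD = \cC_1$, $K = J_1$ (note $J_1 \subseteq J \subseteq I_1$), I see that (\ref{appD_lemma_eq1}) reduces to the two assertions $\dim({\cC_1|}_{J_1}) = \dim({\cC|}_{J_1})$ and $\dim\big((\cC_1)_{J_1}\big) = \dim(\cC_{J_1})$, and it will be convenient to prove instead the stronger coordinatewise equalities ${\cC_1|}_{J_1} = {\cC|}_{J_1}$ and $(\cC_1)_{J_1} = \cC_{J_1}$ inside $\F_q^{J_1}$. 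Throughout I use that, by the construction (the proof of the implication (b) $\Rightarrow$ (a) in Theorem~\ref{rsum_theorem}), $\cC = \cC_1 \oplus_r \cC_2$, with $\cC_1$ defined on $I_1 = J \disj I_\D$ and $\cC_2$ on $I_2 = \oJ \disj I_\D$, where $I_\D = I_1 \cap I_2$ indexes the common copy of $\D_r$ and, by the definition of the $r$-sum, $\cC_1^{(s)} = \cC_2^{(s)} = \{\0\}$, i.e.\ the cross-sections of $\cC_1$ and $\cC_2$ on $I_\D$ are trivial.

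The equality ${\cC|}_J = {\cC_1|}_J$ is established in the course of the argument in Appendix~\ref{rsum_app}; restricting once more to $J_1 \subseteq J$ gives ${\cC|}_{J_1} = {\cC_1|}_{J_1}$. The substance of the proof is the cross-section identity $(\cC_1)_{J_1} = \cC_{J_1}$; the complements to bear in mind are $I_1 - J_1 = (J - J_1) \disj I_\D$ and $I - J_1 = (J - J_1) \disj \oJ$. For ``$\supseteq$'': if $\x \in \cC_1$ satisfies ${\x|}_{I_1 - J_1} = \0$, i.e.\ ${\x|}_{J - J_1} = \0$ and ${\x|}_{I_\D} = \0$, take $\y = \0 \in \cC_2$; then ${(\x \star \y)|}_{I_\D} = \0$, so $\c := {(\x \star \y)|}_I$ lies in $\sfS(\cC_1,\cC_2) = \cC$, with ${\c|}_J = {\x|}_J$ and ${\c|}_{\oJ} = \0$, whence ${\c|}_{I - J_1} = \0$ and ${\c|}_{J_1} = {\x|}_{J_1}$. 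For ``$\subseteq$'': if $\c \in \cC$ satisfies ${\c|}_{I - J_1} = \0$, write $\c = {(\x \star \y)|}_I$ with $\x \in \cC_1$, $\y \in \cC_2$ and ${(\x \star \y)|}_{I_\D} = \0$; from ${\c|}_{\oJ} = {\y|}_{\oJ} = \0$ we obtain that $\y$ lies in the cross-section $(\cC_2)_{I_\D} = \cC_2^{(s)} = \{\0\}$, so $\y = \0$ and therefore ${\x|}_{I_\D} = {(\x \star \y)|}_{I_\D} = \0$; together with ${\x|}_{J - J_1} = {\c|}_{J - J_1} = \0$ this gives ${\x|}_{I_1 - J_1} = \0$, while ${\x|}_{J_1} = {\c|}_{J_1}$. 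Hence $(\cC_1)_{J_1} = \cC_{J_1}$, and in particular $\dim\big((\cC_1)_{J_1}\big) = \dim(\cC_{J_1})$.

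Combining the last two paragraphs yields (\ref{appD_lemma_eq1}), and (\ref{appD_lemma_eq2}) follows by the mirror-image argument, interchanging the roles of $(\cC_1, J)$ and $(\cC_2, \oJ)$, using $\cC_1^{(s)} = \{\0\}$ in place of $\cC_2^{(s)} = \{\0\}$ and the symmetric fact ${\cC|}_{\oJ} = {\cC_2|}_{\oJ}$ from Appendix~\ref{rsum_app}. I expect the ``$\subseteq$'' half of the cross-section identity to be the only genuine obstacle: it requires unwinding the $\sfS(\cC_1,\cC_2)$ construction, carefully tracking which coordinates lie in $J$, $\oJ$, and $I_\D$, and exploiting the triviality of $\cC_2^{(s)}$ to force the ``$\y$-part'' of a representation of $\c$ to vanish. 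Everything else is routine bookkeeping with the identity $\dim(\cD_K) = \dim(\cD) - \dim(\cD|_{\overline{K}})$.
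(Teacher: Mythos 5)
Your proof is correct, and it takes a genuinely different route from the paper's. The paper proves Lemma~\ref{appD_lemma} by brute-force rank computations on the explicit matrices $\bar{G}$, $G_1$, $G_2$ produced by the construction: it restricts these matrices to the relevant column sets, performs column operations, and the crux is showing that $B$ and $X$ have identical column spaces (every column of $B$ is a scalar multiple of a column of $D_r$, hence of a column of $X$, together with $\rank(B)=\rank(X)=r$). You instead normalize the connectivity quantity via $\dim(\cD|_K)+\dim(\cD|_{\overline{K}})-\dim(\cD)=\dim(\cD|_K)-\dim(\cD_K)$ and prove the stronger set-level identities ${\cC_1|}_{J_1}={\cC|}_{J_1}$ (by restricting the equality ${\cC_1|}_J={\cC|}_J$ already established in Appendix~\ref{rsum_app}) and $(\cC_1)_{J_1}=\cC_{J_1}$, the latter directly from the definition of $\sfS(\cC_1,\cC_2)$ and the triviality of $\cC_2^{(s)}$ (respectively $\cC_1^{(s)}$ for the mirror statement). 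Both directions of your cross-section argument check out; the only blemish is the harmless abuse where you say ``$\y$ lies in $(\cC_2)_{I_\D}$'' when strictly it is $\y|_{I_\D}$ that does, from which $\y=\0$ follows since $\y|_{\oJ}=\0$ as well. What each approach buys: the paper's argument never leaves the data actually produced by the decomposition algorithm, whereas yours never uses the particular generator matrices at all, so it establishes the lemma for \emph{any} $r$-sum decomposition $\cC=\cC_1\oplus_r\cC_2$ with index structure $I_1=J\disj I_\D$, $I_2=\oJ\disj I_\D$ --- a more general and more conceptual statement (the hypothesis $\dim({\cC|}_J)+\dim({\cC|}_\oJ)-\dim(\cC)=r$ is then automatic by the forward direction of Theorem~\ref{rsum_theorem}), which still suffices for its use in the proof of Proposition~\ref{rec_constr_prop}.
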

\begin{proof}
We use notation from the proof of the (b)~$\Rightarrow$~(a) direction
of Theorem~\ref{rsum_theorem}. Thus, $\cC$, $\cC_1$ and $\cC_2$
are generated by the matrices $\bar{G}$, $G_1$ and $G_2$ 
given by (\ref{rref_eq}), (\ref{G1_def}) and (\ref{G2_def}), 
respectively, which we reproduce here for the sake of convenience. 
\begin{eqnarray*}
\overline{G} &=&
\left[
\begin{array}{cccc}
I_{k_1} & A & \O & B \\
\O & \O & I_{k-k_1} & C
\end{array}
\right], \\
G_1 &=& \left[\begin{array}{ccc}
I_{k_1} & A & X \\
\end{array}
\right], \\
G_2 &=& \left[\begin{array}{ccc} X & \O & B \\ 
\O & I_{k-k_1} & C \end{array}\right].
\end{eqnarray*}
For any matrix $M$, given a subset $Z$ of the column indices of $M$, we 
will denote by ${M|}_Z$ the restriction of $M$ to the columns indexed
by $Z$. Thus,
$$
{\bar{G}|}_{J} = 
\left[
\begin{array}{cc}
I_{k_1} & A \\
\O & \O 
\end{array}
\right],\ \ \ \ 
{G_1|}_J = 
\left[
\begin{array}{cc}
I_{k_1} & A 
\end{array}
\right],
$$
$$
{\bar{G}|}_{\oJ} = {G_2|}_\oJ = 
\left[
\begin{array}{cc}
\O & B \\
I_{k-k_1} & C
\end{array}
\right].
$$

Our proof of the lemma uses only elementary linear algebra.
We prove (\ref{appD_lemma_eq1}) first.
Consider any $J_1 \subseteq J$. It is clear that
$
{\bar{G}|}_{J_1} = 
\left[
\begin{array}{c}
{G_1|}_{J_1} \\
\O
\end{array}
\right]
$, 
and therefore, we have $\dim({\cC|}_{J_1}) = \rank({\bar{G}|}_{J_1})
= \rank({G_1|}_{J_1} ) 
= \dim({\cC_1|}_{J_1})$.
Next, note that $I - J_1 = (J-J_1) \disj \oJ$,\, from which we have 
$$
\dim({\cC|}_{I-J_1}) = 
\rank\left(\left[\ {\bar{G}|}_{J - J_1} \ \ \ {\bar{G}|}_{\oJ}\ \right]\right).
$$
Now, observe that by performing column operations on ${\bar{G}|}_{\oJ}$,
we can bring it into the form 
$$
W = 
\left[
\begin{array}{cc}
\O & B \\
I_{k-k_1} & \O
\end{array}
\right].
$$
Hence,
\begin{eqnarray*}
\rank\left(\left[\ {\bar{G}|}_{J - J_1} \ \ \ {\bar{G}|}_{\oJ}\ \right]\right)
&=& \rank\left(\left[\ {\bar{G}|}_{J - J_1} \ \ \ W\ \right]\right) \\
&=& \rank(I_{k-k_1}) + 
\rank\left(\left[\ {G_1|}_{J - J_1} \ \ \ B\ \right]\right) \\
&=& k-k_1 + \rank\left(\left[\ {G_1|}_{J - J_1} \ \ \ B\ \right]\right).
\end{eqnarray*}
At this point, we have
$$
\dim({\cC|}_{J_1}) + \dim({\cC|}_{I-J_1}) = \dim({\cC_1|}_{J_1}) + 
k-k_1 + \rank\left(\left[\ {G_1|}_{J - J_1} \ \ \ B\ \right]\right),
$$
which upon re-arrangement yields
$$
\dim({\cC|}_{J_1}) + \dim({\cC|}_{I-J_1}) - \dim(\cC)
= \dim({\cC_1|}_{J_1}) + 
\rank\left(\left[\ {G_1|}_{J - J_1} \ \ \ B\ \right]\right) - 
\dim(\cC_1).
$$
Thus, (\ref{appD_lemma_eq1}) would be proved if we could establish
that $\dim({\cC_1|}_{I_1-J_1}) = 
\rank\left(\left[\ {G_1|}_{J - J_1} \ \ \ B\ \right]\right)$.

Now, $I_1 - J_1 = (J-J_1) \disj I_X$, and hence,
$$
\dim({\cC_1|}_{I_1-J_1}) = 
\rank\left(\left[\ {G_1|}_{J - J_1} \ \ \ X\ \right]\right).
$$
Thus, we have to show that 
$
\rank\left(\left[\ {G_1|}_{J - J_1} \ \ \ B\ \right]\right)
= \rank\left(\left[\ {G_1|}_{J - J_1} \ \ \ X\ \right]\right).
$
We will prove that the matrices $B$ and $X$ have identical column-spaces. 
Clearly, the desired result then follows.

Recall that for $i = 1,2,\ldots,k_1$, the $i$th row of $B$ can be 
uniquely expressed as a linear combination, $\sum_{j=1}^r \alpha_{i,j} \b_j$, 
of its first $r$ rows $\b_1,\ldots,\b_r$. Furthermore, the $i$th row of 
$X$ equals $\sum_{j=1}^r \alpha_{i,j} \d_j$ for the same $\alpha_{i,j}$'s,
where $\d_1,\ldots,\d_r$ are the rows of the generator matrix, $D_r$, of
the code $\D_r$. In particular, the first $r$ rows of $X$ constitute the
matrix $D_r$. Denote by $B_r$ the submatrix of $B$ comprised by
its first $r$ rows. 

Now, it was pointed out in Section~\ref{decomp_section}
(a little after the proof of Proposition~\ref{dim_sum_prop}) that 
any column vector in $\F_q^r$ is a scalar multiple of some column of $D_r$.
Therefore, any column of $B_r$ is a scalar multiple of some column of $D_r$.
But because of the way $X$ was constructed, this implies that any column
of $B$ is a scalar multiple of some column of $X$. Thus, the column-space
of $B$ is a subspace of the column-space of $X$. However, we also
have $\rank(B) = \rank(X)$, and so, the column-spaces of the two matrices
are in fact identical. This proves that 
$\rank\left(\left[\ {G_1|}_{J - J_1} \ \ \ B\ \right]\right)
= \rank\left(\left[\ {G_1|}_{J - J_1} \ \ \ X\ \right]\right)$,
and (\ref{appD_lemma_eq1}) follows. \\[-4pt]

To show (\ref{appD_lemma_eq2}), consider any $J_2 \subseteq \oJ$. 
Arguments similar to the ones above establish that 
\beq
\dim({\cC|}_{J_2}) = \dim({\cC_2|}_{J_2})
\ \ \text{ and } \ \ 
\dim({\cC|}_{I-J_2}) = k_1 + 
\rank\left({\left[I_{k-k_1}\ \ \ C\right]\bigm|}_{\oJ - J_2}\right).
\label{appD_eq1}
\eeq
Now, consider $\dim({\cC_2|}_{I_2-J_2}) = \rank({G_2|}_{I_2-J_2})$.
Noting that $I_2 - J_2 = I_X \disj (\oJ-J_2)$, we see that
the matrix ${G_2|}_{I_2-J_2}$ has the form
$$
\left[
\begin{array}{ccc}
X & \O & {B|}_K \\
\O & I' & {C|}_K
\end{array}
\right],
$$
with $I' = {(I_{k-k_1})|}_{\oJ-J_2-K}$, for some $K \subseteq \oJ-J_2$.
Since the columns of $B$ are contained in the column-space of $X$,
we can perform column operations on ${G_2|}_{I_2-J_2}$ to bring it
into the form 
$$
W' = \left[
\begin{array}{ccc}
X & \O & \O \\
\O & I' & {C|}_K
\end{array}
\right].
$$
Hence, 
\begin{eqnarray}
\rank({G_2|}_{I_2-J_2}) \ \ =\ \ \rank(W') 
&=& \rank(X) + \rank([I'\ \ \ {C|}_K]) \notag \\
&=& (k_1+k_2-k) + 
\rank\left({\left[I_{k-k_1}\ \ \ C\right]\bigm|}_{\oJ - J_2}\right). 
\label{appD_eq2}
\end{eqnarray}
Some trivial manipulations of (\ref{appD_eq1}) and (\ref{appD_eq2})
yield (\ref{appD_lemma_eq2}), which proves the lemma.
\end{proof}
\mbox{}

\noindent \emph{Proof of Proposition~\ref{rec_constr_prop}}.
Recall that 
$\G^* = \left(T,\omega,\,(\cS_e,\, e \in E(T)),\,(C_v,\, v \in V(T))\right)$,
where $\cS_e$ and $\C_v$ are as defined in (\ref{Se_def}) and (\ref{Cv_def}).
To show that $\G^*$ is the minimal realization $\cM(\cC;T,\omega)$,
it is enough to show that for all $e \in E(T)$, $\dim(\cS_e)$
equals the expression in (\ref{dimSe*_alt}), \emph{i.e.},
\beq
\dim(\cS_e) = \dim({\cC|}_{J(e)}) + \dim({\cC|}_{\bar{J}(e)}) - \dim(\cC).
\label{appD_eq3}
\eeq
Note that this is true when $e = \he$, since $\dim(\cS_\he) = \dim(\D_r)
= r$, and from (\ref{r_def}), we have 
$r = \dim({\cC|}_{J(\he)}) + \dim({\cC|}_{\oJ(\he)}) - \dim(\cC)$.
We must therefore show that (\ref{appD_eq3}) holds for 
$e \in E(T) - \{\he\} = E(T_\he) \cup E(\oT_\he)$. We will
prove this for $e \in E(T_\he)$; the proof for $e \in E(\oT_\he)$ is similar.

\begin{figure}
\epsfig{file=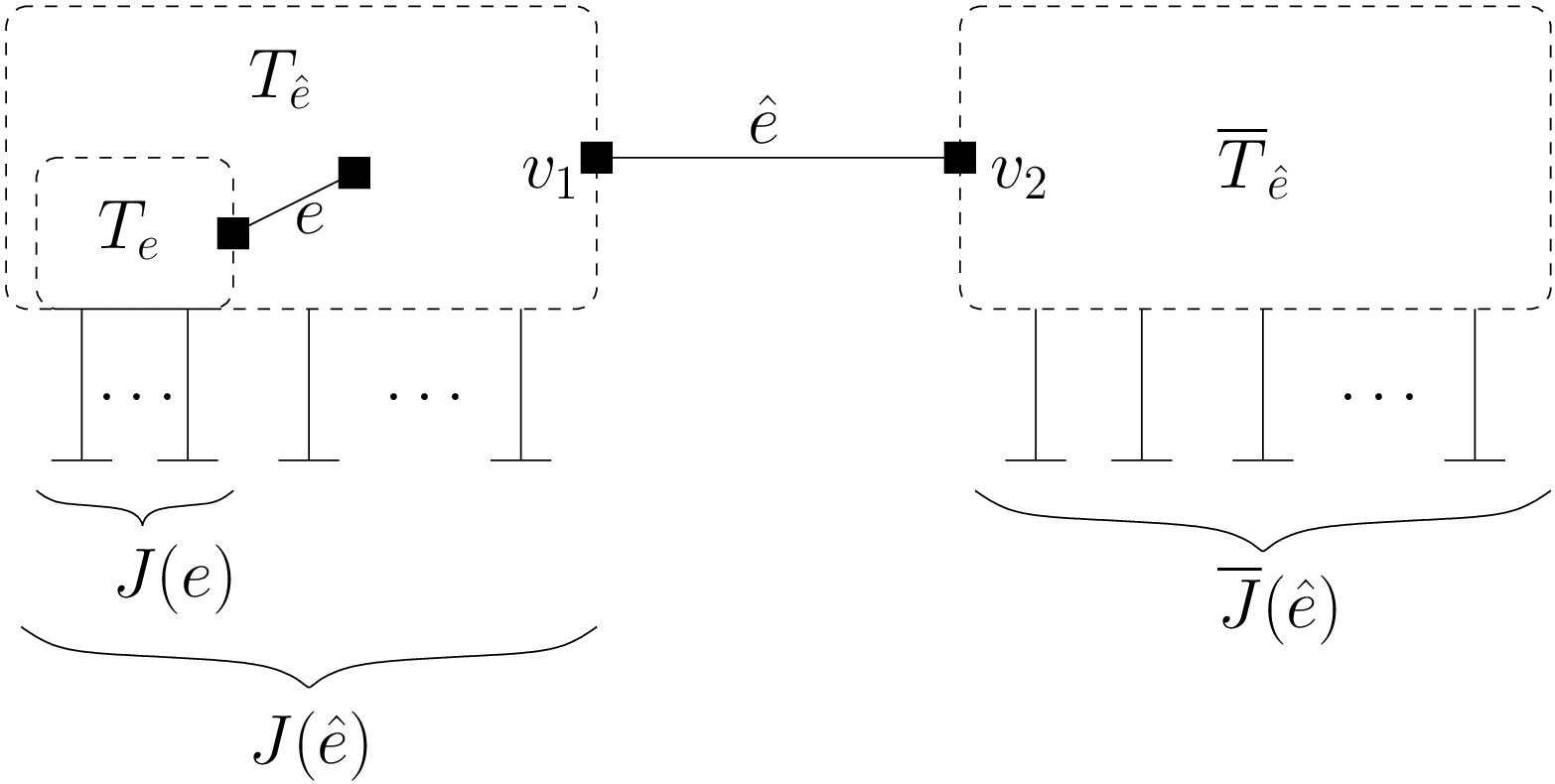, width=9.5cm}
\caption{$T_e$ is a subtree of $T_\he$.}
\label{subtree}
\end{figure}

So, consider any $e \in E(T_\he)$. One of the two components, $T_e$
and $\oT_e$, of $T - e$ is contained in $T_\he$. Without loss of generality, 
we may assume that it is $T_e$ that is a subtree of $T_\he$, as depicted
in Figure~\ref{subtree}. Hence, $J(e) = \omega^{-1}(V(T_e)) \subseteq J(\he)$.
Now, by (\ref{Se_def}), $\cS_e = \cS_e^{(1)}$, the latter being the
state space associated with $e$ in $\cM(\cC_1;T_\he,\omega_1)$.
Therefore, by (\ref{dimSe*_alt}), 
$$
\dim(\cS_e^{(1)}) = 
\dim({\cC_1|}_{J(e)}) + \dim({\cC_1|}_{I_1 - J(e)}) - \dim(\cC_1).
$$
But, by Lemma~\ref{appD_lemma}, the above expression is equal to the 
expression on the right-hand side of (\ref{appD_eq3}). Hence,
(\ref{appD_eq3}) holds for any $e \in E(T_\he)$, and the 
proposition follows. \qed


\end{document}